\numberwithin{equation}{section}
\newcommand{\ud}{\mathrm{d}}
\newcommand{\dm}[1]{\mathrm{d}\mu_{#1}}
\newcommand{\ld}{{}^\ast}
\newcommand{\otimesh}{\hat{\otimes}}
\newcommand{\gs}{g\!\!\!/}
\newcommand{\gammao}{\stackrel{\circ}{\gamma}}
\newcommand{\nablas}{\nabla\!\!\!\!/\:}
\newcommand{\nablasc}{\stackrel{\circ}{\nablas}}
\newcommand{\divs}{\divergence\!\!\!\!\!/\:}
\newcommand{\divsc}{\stackrel{\circ}{\divs}}
\newcommand{\curls}{\curl\!\!\!\!\!/\:}
\newcommand{\curlsc}{\stackrel{\circ}{\curls}}
\newcommand{\Lb}{\underline{L}}
\newcommand{\alphab}{\underline{\alpha}}
\newcommand{\betab}{\underline{\beta}}
\newcommand{\Ab}{\underline{A}}
\newcommand{\Bb}{\underline{B}}
\newcommand{\Nb}{\underline{N}}
\newcommand{\chib}{\underline{\chi}}
\newcommand{\chih}{\hat{\chi}}
\newcommand{\chibh}{\hat{\chib}}
\newcommand{\omegab}{\underline{\omega}}
\newcommand{\limd}{\lim_{u;r\to\infty}}
\newcommand{\VertW}[1]{\lVert #1\rVert_{\mathcal{W}}}
\DeclareMathOperator{\tr}{tr}
\DeclareMathOperator{\divergence}{div}
\DeclareMathOperator{\curl}{curl}
\theoremstyle{plain}
\newtheorem{prop}{Proposition}[section]
\newtheorem{thm}[prop]{Theorem}
\newtheorem{lemma}[prop]{Lemma}
\theoremstyle{definition}
\newtheorem{defn}{Definition}[section]
\theoremstyle{remark}
\newtheorem*{rmk}{Remark}
\newtheorem{nota}[defn]{Notation}
\begin{document}

\title[Non-existence of time-periodic spacetimes]{Non-existence of time-periodic vacuum spacetimes}

\author{Spyros Alexakis}
\address{Department of Mathematics\\ University of Toronto\\ 40 St George Street\\ Rm 6290\\ Toronto, ON M5S 2E4\\ Canada}
\email{alexakis@math.toronto.edu}

\author{Volker Schlue}
\address{Department of Mathematics\\ University of Toronto\\ 40 St George Street\\ Rm 6290\\ Toronto, ON M5S 2E4\\ Canada}
\email{vschlue@math.utoronto.ca}

\date{\today}

\begin{abstract}
We prove that smooth asymptotically flat solutions to the Einstein vacuum equations which are assumed to be periodic in time, are in fact stationary in a neighborhood of infinity. 
Our result applies under physically relevant regularity assumptions purely at the level of the initial data. In particular, our work removes the assumption of analyticity up to null infinity in [Bi{\v{c}}{\'a}k, Scholtz, and Tod; 2010].
The proof relies on extending a suitably constructed ``candidate'' Killing vector field from null infinity, via Carleman-type estimates obtained in [Alexakis, Schlue, Shao; 2013]. 
\end{abstract}

\maketitle

\tableofcontents

\section{Introduction}

This paper addresses the question whether there exist asymptotically flat solutions of the Einstein vacuum equations which are ``periodic in time'' in a suitable sense. We show that any such solution must necessarily be stationary near infinity. Thus, genuinely ``time periodic'' solutions do not exist, at least in a neighborhood of (null) infinity.

This question dates back at least to early works of Papapetrou \cite{papapetrou:I,papapetrou:II,papapetrou:III}, see also \cite{papapetrou:translation}. 
His motivation for considering this question appears to be tied to the dynamical problem of motion of gravitating bodies in general relativity.
Indeed, the first derivation of the equations of motion to \emph{first post-Newtonian order} is due to Einstein, Infeld and Hoffman \cite{einstein-infeld-hoffmann}. For a two-body system of masses $m_1$, $m_2$ at positions $\vec{r}_1$, $\vec{r}_2$, the equations of motion, recast as an effective one-body problem for the displacement $\vec{r}=\vec{r}_1-\vec{r}_2$, reads:
\begin{equation}\label{eq:EIH}
  \ddot{\vec{r}}=-\frac{m}{r^2}\frac{\vec{r}}{r}-\frac{1}{c^2}\frac{m}{r^2}\biggl[\Bigl((1+3\eta)\lvert\dot{\vec{r}}\rvert^2-\frac{3}{2}\eta\dot{r}^2-2(2+\eta)\frac{m}{r}\Bigr)\frac{\vec{r}}{r}-2(2-\eta)\dot{r}\dot{\vec{r}}\biggr]
\end{equation}
where $r=\lvert \vec{r} \rvert$, and $m=m_1+m_2$, $\eta=m_1 m_2/(m_1+m_2)^2$ are mass parameters; c.f.~\cite{poisson:book}.
In the subsequent \cite{robertson} Robertson studied the equations \eqref{eq:EIH} derived in \cite{einstein-infeld-hoffmann}, paying precise attention to whether they admit solutions which are periodic in time (in analogy to elliptical orbits in Newtonian theory). He observed that such orbits do exist, namely \emph{circular} processions around the center of mass. As already noted in \cite{einstein-infeld-hoffmann}, this is in apparent contradiction to the wave nature of gravitation (already visible at the level of the linearized Einstein equations) which suggests that the motion of bodies should result in the emission of gravitational waves that carry energy towards infinity, thus causing in turn the two body system to lose energy and hence ruling out the possibility of periodic-in-time solutions. 
This issue was resolved in the setting of the slow-motion approximation for Einstein's equations, with the correct understanding of \emph{higher order} post-Newtonian approximations and their relation to the post-Minkowskian expansion of the metric; see the book of Poisson and Will \cite{poisson:book} for a comprehesive discussion, and also the work of Damour and Blanchet \cite{damour:review,damour:I,damour:II,blanchet:livingreview}.

This still leaves open the question of the existence of time-periodic solutions for the \emph{actual} Einstein equations; we take up this question here and show Theorem~\ref{thm:periodic:informal} below.
We note that in physics an argument is often put forward to rule out such solutions: namely that they should not exist, since time-periodicity, together with the finiteness of the total (ADM) energy should imply that any such solution cannot lose energy towards infinity, and must therefore by stationary. 
This reasoning has in fact been applied to the more general setting of space-times which merely do not emit gravitational radiation, and indeed underpins some of the central views on the generic long-time behaviour of Einstein's equations, see e.g.~Section~9.3 in \cite{hawking-ellis}. 
However, in spite of the wave nature of gravity, this is in fact a very subtle mathematical question: Indeed, while the simplest linear analogue of this prediction, namely that non-radiating free waves in Minkowski space-time are trivial is a classical result of Friedlander \cite{friedlander},  one can easily construct small perturbations of the free wave operator in Minkowski space for which the corresponding assertion is \emph{false}: There exist  wave operators $L:=\Box+V$ (for a smooth, small, compactly supported potentials  $V$)  which admit \emph{non-zero} solutions which decay fast towards null infinities (and thus in particular have vanishing radiation fields); see e.g.~\cite{alexakis-shao}. 
Our main result result\footnote{The precise version of this theorem is stated in Theorem~\ref{thm:stationary:periodic} below.} rests decisively on the precise (non-linear) form of Einstein's equations, to rule out the existence of (asymptotically flat) time-periodic solutions: 

\begin{thm}\label{thm:periodic:informal}
Any asymptotically flat solution $(\mathcal{M},g)$ to the Einstein vacuum equations arising from a regular initial data set which admits a discrete isometry (near null infinity) that maps any point into its chronological future, must be stationary near null infinity. 
\end{thm}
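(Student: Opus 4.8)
The plan is to convert the discrete time-periodicity into a continuous Killing symmetry near null infinity. The key obstruction to overcome is that a discrete isometry does not a priori generate a continuous one-parameter group, so I must produce a genuine Killing vector field out of the periodicity data. My strategy proceeds in three stages: (i) construct a ``candidate'' Killing field on null infinity itself, using the asymptotic structure forced by periodicity together with finiteness of the ADM energy; (ii) extend this candidate inward from null infinity into the spacetime, showing it remains an approximate symmetry; and (iii) upgrade the approximate symmetry to an exact Killing field via a unique continuation argument, thereby proving stationarity.

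First I would analyze the asymptotic geometry. The discrete isometry $\phi$ maps each point into its chronological future, so its orbits march forward in retarded time $u$ by a fixed period. I would set up a Bondi-type foliation near null infinity and examine how the freely specifiable asymptotic data — the shear and the radiation field (Bondi news) — transform under $\phi$. The heuristic energy argument suggests that periodicity plus finite energy forces the Bondi mass to be constant along the orbit, and hence the news tensor (which governs radiated energy via the Bondi mass-loss formula) must vanish. I would make this rigorous: periodicity of the asymptotic data, combined with the monotonicity of the Bondi mass, pins down a time-translation-invariant asymptotic structure. From this invariant structure one reads off a candidate vector field $T$ on $\mathcal{I}$ that generates the time translation interpolating the discrete isometry, and which is Killing for the induced (degenerate) geometry on null infinity.

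Next I would extend $T$ off of null infinity into a neighborhood in $\mathcal{M}$. The natural device is to solve a geometric transport or wave-type equation for $T$ propagating inward along null cones, arranging that the deformation tensor $\mathcal{L}_T g$ and its derivatives vanish to infinite order at $\mathcal{I}$. Here the vacuum Einstein equations enter decisively: the Bianchi identities and the constraint structure of the equations allow one to show that the obstruction to $T$ being Killing, namely $\pi := \mathcal{L}_T g$, itself satisfies a closed system of wave equations with good boundary behavior. This is the nonlinear input the theorem relies on — the linear analogue admits counterexamples, so I expect this step to hinge on exploiting the precise algebraic form of the curvature evolution equations rather than any soft argument.

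The hard part, and the crux of the whole proof, will be the final unique-continuation step: showing that $\pi=\mathcal{L}_T g$, which vanishes to infinite order at null infinity, must vanish identically in a neighborhood. This is precisely where the Carleman-type estimates from [Alexakis, Schlue, Shao; 2013] are invoked. I would frame the problem as a unique continuation from the null boundary $\mathcal{I}$ for the system satisfied by $\pi$ (together with the linearized curvature), verify that the required pseudoconvexity and structural hypotheses of those estimates hold near null infinity for the vacuum system, and conclude that $\pi \equiv 0$ near $\mathcal{I}$. The delicacy lies in the characteristic (null) nature of the boundary, which is exactly the regime those Carleman estimates were designed to handle, and in checking that the candidate field $T$ produced in the first step is regular enough — under the stated initial-data regularity — to feed into the estimate. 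Once $\pi\equiv 0$, the field $T$ is a genuine Killing vector field, and since it is timelike near infinity the spacetime is stationary there, completing the proof.
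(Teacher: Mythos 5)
Your three-stage plan — construct a candidate $T$ adapted to the periodicity, show via the Bondi mass-loss formula and the null structure/Bianchi equations that $\mathcal{L}_T g$ vanishes to all orders at null infinity, then conclude by Carleman-based unique continuation from infinity — is exactly the strategy of the paper. The only notable difference of detail is that the paper does not close a wave system for $\pi=\mathcal{L}_T g$ directly, but instead follows Ionescu--Klainerman in deriving a covariant wave equation for the modified Lie derivative of the curvature coupled to transport equations for $\pi$-type quantities, and the localization of the Carleman estimate near spatial infinity hinges on the positivity of the mass rather than on the null character of the boundary per se.
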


In other words, the discrete isometry of the space-time $(\mathcal{M},g)$ is in fact induced by a \emph{continuous} isometry. Thus the 
solution is not ``genuinely'' time-periodic. 

We remark that if one strengthens the regularity assumption to include null infinity, we can also prove that vacuum space-times that emit no radiation towards $\mathcal{I}^-$, $\mathcal{I}^+$ must also be stationary (near infinity); see Theorem~\ref{thm:stationary:non:radiating} below.

\subsection{Discussion}
We digress here to discuss earlier work on this subject.
The analogous question for \emph{spatially compact} (cosmological) space-times was settled in the affirmative by Galloway \cite{galloway:splitting}; see also \cite{tipler} who puts this question in a much broader context. The proof in \cite{galloway:splitting} is an application of a splitting theorem in Lorentzian geometry which relies on the dominant energy condition (and is thus applicable in the presence of  matter fields).
In the \emph{asymptotically flat} setting, in particular for \emph{asymptotically simple} space-times, all approaches to this problem \cite{papapetrou:I,papapetrou:II,papapetrou:III,gibbons-stewart,tod:I,tod:II} seek to relate the time-periodicity property to an analysis at null infinity. (An exception is the paper \cite{mihalis:periodic} of Dafermos who used the event horizons instead, to show that in \emph{spherical symmetry} time-periodic black hole space-times coupled to suitable matter models are \emph{static}.)

The main modern strategy to handle this problem has been to construct a vector field (near null infinity) which satisfies the Killing equation to any order, by virtue of the time-periodicity assumption.   This analysis is performed most cleanly in the recent work of Bi{\v{c}}{\'a}k, Scholtz, and Tod \cite{tod:I}, where also a suitably flexible notion of time-periodicity is introduced, which is the one we adopt here; c.f.~Def.~\ref{def:time-periodic} below. The paper \cite{tod:I} proceeds in two steps: First, assuming that the space-time admits a $\mathrm{C}^\infty$ smooth conformal compactification, they study the (conformally transformed) Einstein equations at an interior portion of future null infinity. The assumption of time-periodicity is employed to successive orders to derive the existence of a \emph{jet} of a vector field $T$ which satisfies the Killing equation to \emph{any} order. At this point the authors impose the additional assumption of analyticity on the metric $g$ \emph{up to and including} $\mathcal{I}^+$. This allows them to extend the formal jet of  $T$ to an \emph{actual} Killing vector field in $\mathcal{M}$. 

As noted in \cite{tod:I}, one would naturally want to remove the assumption of analyticity. In addition, in view of \cite{christodoulou:Ib}, one would also wish to remove the assumption of $\mathrm{C}^\infty$ smoothness at null infinity. In this paper we achieve both of these goals, and are able to infer the stationarity of smooth space-times using the Einstein equations alone.

Our proof follows the two-step strategy of \cite{tod:I}: In Section~\ref{sec:vanishing} we first derive the existence of a ``candidate'' Killing vector field which satisfies the Killing equation to all orders. We take special care to perform this analysis in the actual space-time; c.f.~Section~\ref{sec:asymptotics} for a precise description of the space-times considered, and the relevant limiting procedures. We work under a regularity assumption on the space-time which is imposed at the level of the initial data. At this point the assumption of time-periodicity is essential. In the second step we \emph{discard} the assumption of analyticity at $\mathcal{I}^+$,  using instead our recent unique continuation theorem for linear waves derived jointly with Shao in \cite{alexakis-schlue-shao}. The latter is in fact \emph{not directly} applicable as a uniqueness result to deduce the extension of symmetry, but is instead implemented at the underlying level of Carleman estimates; c.f.~\cite{alexakis-ionescu-klainerman}. The fact that we extend from infinity introduces additional difficulties, which are overcome in Section~\ref{sec:extension}.

\begin{figure}[tb]
  \centering
  \includegraphics[scale=0.7]{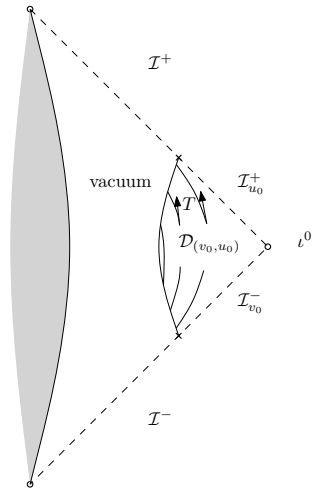}
  \caption{Domain of stationarity near infinity.}
  \label{fig:thm:domain}
\end{figure}

\begin{rmk}
We remark here that a more general question is the stationarity of \emph{non-radiative} space-times, namely solutions to Einstein's equations which are not necessarily time-periodic, yet have the property that no graviational energy enters or leaves the space-time, through suitable portions of past and future null infinities, respectively. 
A version of this problem was in fact also considered by Papapetrou, first for linear (electromagnetic) fields and  partially for the gravitational field \cite{papapetrou:nonradiative}, who argued that if a solution is stationary in the past of a characteristic hypersurface, and non-radiative towards the future, then it should be stationary everywhere. 

Note that this question naturally allows for a \emph{localized} version: One can inquire whether solutions with no incoming and no outgoing radiation on \emph{small} portions $\mathcal{I}_{v_0}^-, \mathcal{I}_{u_0}^+$ of $\mathcal{I}^-$, $\mathcal{I}^+$ (see Fig.~\ref{fig:thm:domain}) must be stationary \emph{near those portions}. (One cannot  expect the solution to be stationary throughout the space-time, by domain of dependence considerations, since imposing incoming radiation ``prior'' to $\mathcal{I}^-_{v_0}$  immediately  yields a non-stationary space-time far from $\mathcal{I}_{v_0}^-\cup \mathcal{I}_{u_0}^+$.)

The methods developed in this paper do in fact apply to this setting also, at least under strong regularity assumptions, see Section~\ref{sec:vanishing:leading}, and Theorem~\ref{thm:stationary:non:radiating} below.
A sufficient condition to derive the existence of a stationary Killing field near $\mathcal{I}^+_{u_0}$, $\mathcal{I}^-_{v_0}$ is essentially to assume a full asymptotic expansion of the metric near null infinity, which is ``well behaved'' in the limit towards space-like infinity; (this is at the same level as assuming the existence of a smooth conformal compactification of null infinity).
Moreover, for a local result of this kind, one expects that such a regularity assumption is \emph{necessary}; see for example the discussion for linear waves in \cite{alexakis-shao}. We stress again that in the setting of time-periodic solutions considered in Theorem~\ref{thm:periodic:informal}, one \emph{does not} need to assume regularity at null infinity, but essentially only on the initial data. We find this an appealing feature of the present result, since the infinite order regularity of null infinities is not expected to hold in general; see \cite{christodoulou:Ib, kroon:log}.
It is for this reason that we are careful to perform the analysis of the Einstein equations  in Section~\ref{sec:vanishing} in the physical space-time (as opposed to on the boundary of a conformally compactified space). 
\end{rmk}

\begin{rmk}
We mention that in the time-periodic setting, one would wish to extend the domain of existence of the stationary Killing vector field to the entire interior. This however is a formidable challenge in full generality, due to the possibility of \emph{trapped null geodesics}. A discussion of this obstacle in the context of the present methods can be found in \cite{alexakis-ionescu-klainerman:cmp}. 
\end{rmk}

\begin{rmk}
  We note that the problem addressed here, namely the extension of an asymptotic symmetry ``at infinity'' to a \emph{genuine} symmetry of the space-time metric also appears naturally in other problems in geometric partial differential equations. Notably the work of Brendle on steady Ricci solitons \cite{brendle} is one such example in Riemannian geometry, where the extension of an asymptotic (rotational) symmetry to the interior was a key step.  
\end{rmk}

\subsection{Definitions and main results}
\label{sec:intro:results}

Let $(\mathcal{M}^{3+1},g)$ be a solution to the Einstein vacuum equations:
\begin{equation}\label{eq:eve}
  \textrm{Ric}(g)=0\,.
\end{equation}
Here we shall be concerned with dynamical solutions arising from asymptotically flat initial data. 
In the seminal work \cite{christodoulou:I}, Christodoulou and Klainerman gave in particular a detailed description of the asymptotic behavior of solutions for a general class of asymptotically flat initial data.
We consider then an asymptotically flat space-time $(\mathcal{M},g)$ with the asymptotic behavior established in their work, and restrict our attention to a domain $\mathcal{D}$ outside the domain of a influence of a ball $B$ in the initial data set $\Sigma$, as sketched in Fig.~\ref{fig:domain:data}.\footnote{Note that the \emph{global} smallness assumption in \cite{christodoulou:I} is not a serious restriction if our attention is focused on a neighborhood of spacelike infinity $\iota^0$; see also the initial remarks in Section~\ref{sec:asymptotics} and \cite{christodoulou:Ib} for a discussion of the relevance of the asymptotics obtained in \cite{christodoulou:I} for the understanding of gravitational waves from potentially \emph{large} sources.} We recall the precise asymptotics of $(\mathcal{M},g)$ towards null infinity in Section~\ref{sec:asymptotics}.
\begin{figure}[tb]
  \centering
  \includegraphics{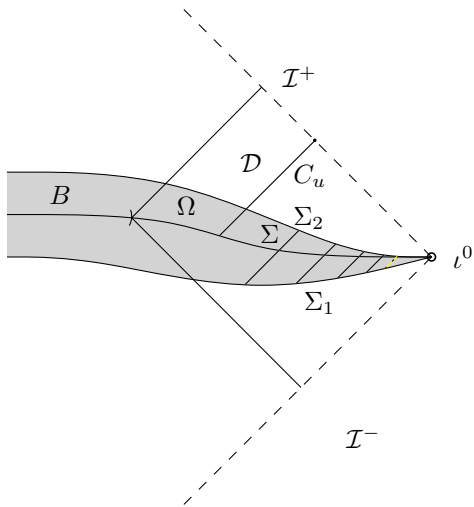}
  \caption{Penrose diagram of an asymptotically flat space-time arising from initial data on $\Sigma$; sketched are the fundamental domain $\Omega$, and the domain $\mathcal{D}$ extending to future and past null infinities $\mathcal{I}^+$, $\mathcal{I}^-$.}
  \label{fig:domain:data}
\end{figure}
In this paper we are primarily interested in solutions which are periodic in time in the following sense; c.f.~\cite{tod:I}:

\begin{defn}[Time-periodic spacetimes]\label{def:time-periodic}
  An asymptotically flat space-time $(\mathcal{M},g)$ is called \emph{time-periodic} if there exists a discrete isometry $\varphi$ with timelike orbits,
  \begin{equation}\label{eq:def:isometry}
    \varphi^\ast g=g\,,\qquad \varphi(p)\in I^+(p) \text{ for all } p\in\mathcal{M}\,.\\
  \end{equation}
  Moreover, we require that $\varphi$ extends smoothly to future and past null infinities as translations $\varphi^+$, $\varphi^-$  along the geodesic generators of future, and past null infinity, respectively, in affine parameter,
 \begin{gather}\label{eq:def:translation}
   \varphi^+(u,\xi)=(u+b,\xi)\qquad (u,\xi)\in\mathbb{R}\times\mathbb{S}^2\simeq\mathcal{I}^+\,,\\
      \varphi^-(v,\xi)=(v+b,\xi)\qquad (v,\xi)\in\mathbb{R}\times\mathbb{S}^2\simeq\mathcal{I}^-\,,
 \end{gather}
 for some fixed $b>0$. 

Furthermore we say that $(\mathcal{M},g)$ is \emph{time-periodic near infinity} if for some fixed advanced time $v=v_0$ and retarded time $u=u_0$, $0<v_0<\infty$, $0<u_0<\infty$, there exists $\varphi$ as in \eqref{eq:def:isometry} on a neighborhood of $\mathcal{I}^+_{u_0}\cup\mathcal{I}^-_{v_0}$,
 \begin{equation}
   \mathcal{I}^{+}_{u_0}=\{(u,\xi)\in\mathcal{I}^+\ :u\leq u_0\}\,,\qquad
   \mathcal{I}^{-}_{v_0}=\{(v,\xi)\in\mathcal{I}^-\ :v\geq v_0\}\,,
 \end{equation}
and $\varphi$ extends smoothly to translations only on $\mathcal{I}^{+}_{u_0}$, $\mathcal{I}^{-}_{v_0}$.
\end{defn}

Given the existence of a discrete isometry $\varphi$ in time-periodic space-times, it will be convenient to pass to the quotient of the manifold $\mathcal{M}$ by the action of the map $\varphi$, which yields the \emph{fundamental domain} $\Omega$ as sketched in Fig.~\ref{fig:domain:data}.

\begin{defn}[Fundamental domain]
  Let $(\mathcal{M},g)$ be an asymptotically flat space-time arising from initial data on $\Sigma$, which is time-periodic.
  We say $\Omega\subset\mathcal{M}$ is a \emph{fundamental domain} for $(\mathcal{M},g)$  if $\Omega$ is connected, $\Sigma\subset\Omega$, and $\partial\Omega$ consists of two smooth space-like hypersurfaces $\Sigma_1,\Sigma_2$ such that $\varphi(\Sigma_1)=\Sigma_2$; moreover, $\Omega$ has the property that for each point $p\in\Omega, \varphi(p)\notin\Omega$, and for each  $q\notin\Omega$ there exists a unique $n\in \mathbb{Z}$ so that $\varphi^{(n)}(q)\in\Omega$. 
\end{defn}

Since $\varphi$ is an isometry, a time-periodic space-time $(\mathcal{M},g)$ can be reconstructed from $g$ in fundamental domain $\Omega$.

For convenience, and to ensure that our methods are applicable more generally to non-radiating space-times, we will be working with the null structure equations near \emph{null infinity}. In particular, we will foliate null infinity by a family of \emph{cuts} $S_u^\ast$ which are the ``boundaries at infininty''  of a 1-parameter family of smooth outgoing null hypersurfaces $C_u$ in the space-time; see Section~\ref{sec:asymptotics}. We remark that for time-periodic space-times, the whole analysis then corresponds to the study of a family of ``broken'' null surfaces in the fundamental domain $\Omega$, c.f.~Fig~\ref{fig:domain:data}. 

This also illustrates that in the time-periodic setting any regularity assumptions made at null infinity really correspond to a regularity assumption towards infinity in the domain $\Omega$. 
Essentially, the regularity condition that we require follows by assuming a \emph{full asymptotic expansion} for all components of the metric and its first derivatives, in the fundamental domain $\Omega$. 

For definiteness, let us choose coordinates $(t,r,\vartheta^1,\vartheta^2)$ such that \[\Sigma\cap\mathcal{D}=\{(t,r,\vartheta^1,\vartheta^2):t=0,r\geq R\}\] and the initial data on $\Sigma$ have mass $m>0$, and angular momentum $L=ma$, $0<\lvert a\rvert< m$. 
(We also assume that in these coordinates the \emph{linear} momentum of $\Sigma$ vanishes.)
In particular, this fixes the metric $g$ on $\Sigma$ to \emph{highest order} (in $r$).  Beyond the leading order, we assume a full asymptotic expansion of the metric (in inverse powers of $r$); we refer the reader to Definition~\ref{def:expansion} in Section~\ref{sec:smoothness} for the definition of appropriate function spaces $\mathcal{O}^\infty_2(r^{-k})$.  This is an essential regularity assumption needed in this paper to \emph{derive} the existence of a vector field that satisfies the Killing equation to infinite order. In the time-periodic setting, we show that this assumption is \emph{sufficient}, c.f.~Section~\ref{sec:vanishing}. 

In summary, we require that on the \emph{initial} hypersurface $\Sigma\cap\mathcal{D}=\{t=0,r\geq R\}$, the first and second fundamental forms, $g\rvert_\Sigma$, and $k$, respectivly are given by
\begin{subequations}  \label{eq:initial:data}
\begin{align}
  g\rvert_\Sigma&=g_{(m,a)}^{\text{Kerr}}\rvert_{t=0}+g^\infty\rvert_{t=0}\,,\\
  k\rvert_\Sigma&=\partial_tg_{(m,a)}^{\text{Kerr}}\rvert_{t=0}+\partial_tg^\infty\rvert_{t=0}\,,
\end{align}
\end{subequations}
where $g^{\text{Kerr}}_{(m,a)}$ denotes the Kerr metric in Boyer-Lindquist coordinates $(t,r,\vartheta^1,\vartheta^2)$, and the tensor components $g^{\infty}_{\alpha\beta}$ are functions in $\mathcal{O}^\infty_2(r^{-k})$, with $k$ chosen depending on $\alpha,\beta\in\{t,r,\vartheta^1,\vartheta^2\}$ so that $g^\infty$ contains the full asymptotic expansion in $1/r$ beyond the leading orders of the Kerr metric; see Section~\ref{sec:data} for the precise numerology of the exponents.

\begin{rmk}
The existence of such an expansion is often assumed freely at the level of the initial data (both for the metric and the second fundamental form). 
Such an assumption corresponds to smoothness at (spatial) infinity of a suitably compactified metric. We refer to \cite{friedrich:smooth} for a thorough discussion of this issue.
\end{rmk}

\begin{rmk}
   The assumption that $g$ matches a member of the Kerr family to a given order (in particular capturing the first non-trivial order involving the angular momentum parameter $a$, c.f.~Section~\ref{sec:data}) in principle merely corresponds to picking out the Kerr metric with the same mass and angular momentum as our chosen space-time. The agreement of the two metrics to suitably high order should then be derivable, by  normalizing the coordinates on the initial data set near spatial infinity. A way to achieve this could be via conformal normal coordinates centered at spatial infinity, as in \cite{friedrich:smooth}.
\end{rmk}

In fact, we shall require that in the \emph{entire} domain $\Omega\cap\mathcal{D}$ the metric $g$ admits an expansion of the form (c.f.~Def.~\ref{def:expansion} for the definition of the function spaces $\mathcal{O}_2^\infty(r^{-k})$)
\begin{subequations}\label{eq:metric:expansion}
\begin{equation}
  g=-\ud t^2+\ud r^2+r^2\gammao_{AB}\ud \vartheta^A\vartheta^B+g^\infty
\end{equation}
\begin{multline}
 g^\infty=\mathcal{O}_2^\infty(r^{-1})\ud t^2+\mathcal{O}_2^\infty(r^{-1})\ud r^2+\mathcal{O}_2^\infty(r^{-1})\ud t\ud r\\
  +\sum_{A=1}^2\mathcal{O}_2^\infty(r^{-1})\ud t\ud \vartheta^A+\sum_{A=1}^2\mathcal{O}_2^\infty(r^{-1})\ud r\ud \vartheta^A+r^2\sum_{A,B=1}^2\mathcal{O}_2^\infty(r^{-1})\ud \vartheta^A\vartheta^B\,.
\end{multline}
\end{subequations}

\begin{defn}[Regularity at spatial infinity]\label{def:regular}
  Let $(\mathcal{M},g)$ be a time-periodic asymptotically flat spacetime and $\Omega\subset\mathcal{M}$ a fundamental domain. We say $(\mathcal{M},g)$ is \emph{regular} at spatial infinity if $g$ admits an asymptotic expansion of the form \eqref{eq:metric:expansion} in $\Omega$.
\end{defn}

\begin{rmk}
It would be favorable to replace the regularity assumption on $g$ in $\Omega$ by a corresponding assumption purely on the initial data on $\Sigma$.
We expect it would be possible to assume as initial data on $t=0$ the first and second fundamental form induced by \eqref{eq:metric:expansion} on $\Sigma$, and then derive that \eqref{eq:metric:expansion} holds for all $t\in[0,T]$, for $T>0$ arbitrarily large, but finite; in particular, large enough to be valid in the entire domain $\Omega\cap\mathcal{D}$. This requires a suitable ``preservation of regularity'' estimate (for the Einstein equations in harmonic gauge, for example) which, however, appears to be missing in the literature. 
\end{rmk}

The main result in this paper is:

\begin{thm}[Stationarity of time-periodic spacetimes near infinity]\label{thm:stationary:periodic}
Let $(\mathcal{M},g)$ be an asymptotically flat solution to the Einstein vacuum equations \eqref{eq:eve}.
Suppose $(\mathcal{M},g)$ is \emph{time-periodic near } $\mathcal{I}^+_{u_0}\cup\mathcal{I}^-_{v_0}$  as defined in Def.~\ref{def:time-periodic}, for some $u_0,v_0\in\mathbb{R}$, and $g$ is \emph{regular} in the fundamental domain $\Omega$ in the sense of Def.~\ref{def:regular}.
Moreover we assume that on $\Sigma\cap\mathcal{D}$ the initial data coincides to leading order with a Kerr solution, c.f.~\eqref{eq:initial:data:specific}.
Then $(\mathcal{M},g)$ is \emph{stationary near infinity}, i.e. there exists a vectorfield $T$ which is  timelike near infinity such that
\begin{equation}
  \mathcal{L}_T g=0\qquad :\text{ on } \mathcal{D}_{(v_0,u_0)}\,,
\end{equation}
where $\mathcal{D}_{(v_0,u_0)}\subset\mathcal{M}\cap\{u\leq u_0,v\geq v_0\}$ is neighborhood of $\mathcal{I}^+_{u_0}\cup\mathcal{I}^-_{v_0}$; c.f.~Figure~\ref{fig:thm:domain}.
\end{thm}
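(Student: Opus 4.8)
The plan is to follow the two-step strategy announced in the introduction: first produce, near the relevant portions of null infinity, a vector field $T$ whose deformation tensor $\pi := \mathcal{L}_T g$ vanishes to infinite order at $\mathcal{I}^+_{u_0}\cup\mathcal{I}^-_{v_0}$ (a ``candidate'' Killing field), and then upgrade this to $\pi\equiv 0$ on a genuine neighborhood $\mathcal{D}_{(v_0,u_0)}$ by a unique-continuation-from-infinity argument. The timelike character of $T$ near infinity should follow from the fact that, by the Kerr-matching hypothesis \eqref{eq:initial:data}, the leading behavior of $T$ agrees with the stationary Killing field $\partial_t$ of the asymptotic Kerr solution, which is timelike for large $r$.

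For the construction of the candidate field (the content of Section~\ref{sec:vanishing}), I would work directly in the physical space-time, foliating a neighborhood of null infinity by the outgoing cones $C_u$ and reading off the asymptotic expansion \eqref{eq:metric:expansion} along the cuts $S^\ast_u$. The discrete isometry $\varphi$ acts on $\mathcal{I}^+$ as the translation $u\mapsto u+b$ of \eqref{eq:def:translation}, so every coefficient in the asymptotic expansion of the null structure quantities is $b$-periodic in $u$. The key mechanism is that the null structure and Bianchi equations at $\mathcal{I}^+$ impose \emph{monotonicity} on certain of these quantities --- most basically, the Bondi mass aspect is driven monotonically by the flux of the (squared) news --- so periodicity forces the flux to vanish, and hence the news to vanish on $\mathcal{I}^+_{u_0}$; the analogous statement holds on $\mathcal{I}^-_{v_0}$. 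Propagating this vanishing down the hierarchy of expansion coefficients, one finds that the entire asymptotic data is $u$-independent, and one can then define $T$ order by order so that $\pi=\mathcal{L}_T g$ vanishes to all orders at $\mathcal{I}^+_{u_0}\cup\mathcal{I}^-_{v_0}$. This is where both the time-periodicity and the full asymptotic expansion of Def.~\ref{def:regular} are used essentially.

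The second step, and the genuine analytic heart of the proof, is the extension of $T$ into $\mathcal{D}_{(v_0,u_0)}$. Here I would not attempt to use $\pi=O(r^{-\infty})$ as a uniqueness statement directly; instead, following the scheme of \cite{alexakis-ionescu-klainerman}, I would set up a closed homogeneous system of covariant wave equations satisfied by $\pi$ together with the Lie derivative $\mathcal{L}_T W$ of the Weyl tensor, of the schematic form $\Box\pi=\mathcal{F}(\pi,\mathcal{L}_T W)$ and $\Box(\mathcal{L}_T W)=\mathcal{G}(\pi,\mathcal{L}_T W,\partial\pi)$, which closes precisely because $(\mathcal{M},g)$ is vacuum. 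The infinite-order vanishing of $\pi$ at infinity transfers to the full unknown of this system, and I would then apply the Carleman estimates of \cite{alexakis-schlue-shao} --- at the level of the weighted inequalities rather than as a black-box uniqueness theorem --- to conclude that the unknown vanishes on a neighborhood of the two portions of null infinity. Matching the neighborhoods obtained from $\mathcal{I}^+_{u_0}$ and from $\mathcal{I}^-_{v_0}$ yields $\pi\equiv 0$ on $\mathcal{D}_{(v_0,u_0)}$, i.e.\ $\mathcal{L}_T g=0$ there.

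I expect the main obstacle to lie in this extension step, for two reasons. First, the Carleman estimate must be applied at null infinity rather than at an interior hypersurface, so the requisite pseudoconvexity of the level sets of the weight must be verified uniformly as one approaches $\mathcal{I}$; controlling the error and boundary terms generated at infinity using only the $r^{-\infty}$ decay of $\pi$ is delicate, and is exactly why one cannot simply invoke \cite{alexakis-schlue-shao} off the shelf. Second, because the relevant object is the tensorial system rather than a single scalar wave, the lower-order coupling terms $\mathcal{F},\mathcal{G}$ must be shown to be absorbable by the Carleman weight; this is where the precise structure of the vacuum equations, and the careful bookkeeping deferred to Section~\ref{sec:extension}, become indispensable.
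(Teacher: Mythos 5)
Your first step is essentially the paper's: the same gauge-adapted construction of $T$ (so that $[L,T]=0$ and $T=\tfrac12(L+\Lb)+\mathcal{O}(r^{-2})$), the same use of the Bondi mass-loss formula plus periodicity to kill the news, and the same descent through the hierarchy of expansion coefficients by induction (with periodicity invoked at each order to rule out linear growth in $u$). The paper defines $T=\partial_u$ once and proves its deformation tensor vanishes to all orders, rather than building $T$ order by order, but that is a presentational difference.

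The extension step, however, contains a genuine gap. You propose to close a system of \emph{wave} equations for $\pi$ and $\mathcal{L}_TW$ ``following the scheme of \cite{alexakis-ionescu-klainerman}''. The paper argues explicitly that this route fails when one extends \emph{from infinity}: the AIK scheme forces one to evaluate the system against a null frame, and in that frame the coefficients of the resulting equations decay too slowly for the Carleman estimates of \cite{alexakis-schlue-shao} to absorb them. What actually works is the tensorial framework of \cite{ionescu-klainerman:extension}: only the modified Lie derivative $W=\hat{\mathcal{L}}_TR$ satisfies a wave equation, while the quantities $B=\tfrac12(\pi+\omega)$ and $P$ built from the deformation tensor satisfy first-order \emph{transport} equations along $L$, c.f.~\eqref{eq:transport:BP}; the whole system is then evaluated in an asymptotically Cartesian frame, where the connection coefficients are $\mathcal{O}(r^{-2})$ and the curvature coefficients $\mathcal{O}(r^{-3})$, and one couples the wave Carleman estimate for $W$ with a separate ODE Carleman estimate (Lemma~\ref{lemma:carleman:ode}) for $B$, $P$ and their first derivatives. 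Your sketch has no mechanism for handling the transport part, and with $\Box\pi$ in a null frame the absorption of the lower-order terms does not close.

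A secondary omission: you use the Kerr-matching hypothesis only to make $T$ timelike, but in the paper its real role (Lemma~\ref{lemma:data:ass}) is to place $(\mathcal{D},g)$ in the \emph{positive-mass} class of metrics \eqref{eq:metric:positivemass} for which the Carleman estimate near spatial infinity holds at all; the pseudoconvexity function $\Psi=m_{\min}\log r/r$ degenerates as $m\to 0$, so without this step the localization to a neighborhood of $\mathcal{I}^+_{u_0}\cup\mathcal{I}^-_{v_0}$ is not available. Relatedly, there is no ``matching'' of two separate neighborhoods: the single domain $\mathcal{D}_\omega=\{0<f<\omega\}$ with $f=-1/(uv)$ already abuts both portions of null infinity.
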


The proof, as we have already outlined above, proceeds in three steps: We shall first construct a ``candidate'' stationary vectorfield for a large class of asymptotically flat space-times in Section~\ref{sec:asymptotics}. Then we shall prove in Section~\ref{sec:vanishing} that by virtue of the time-periodicity assumption this vectorfield satisfies the Killing equation to all orders, namely its deformation tensor decays faster than any polynomial rate in the distance. Finally we shall apply in Section~\ref{sec:extension} the results in \cite{alexakis-schlue-shao,ionescu-klainerman:extension} to infer that the deformation tensor of the ``candidate'' vectorfield in fact vanishes identically in a neighborhood of infinity, using that by the Einstein vacuum equations the relevant Lie derivative of the curvature satisfies a wave equation.

It is thus only in the second step, in Section~\ref{sec:vanishing}, that the time-periodicity assumption is crucially used; (besides the reduction of the regularity assumptions to the initial data). Moreover, the time-periodicity assumption is used in a rather benign way, essentially only to rule out linear growth, c.f.~Section~\ref{sec:vanishing:leading}. It turns out that the proof of Prop.~\ref{prop:vanishing:induction} (which shows that our ``candidate'' vectorfield satisfies the Killing equation to all orders) applies more generally, in the setting of \emph{non-radiating} solutions, if one is willing to assume instead a strong regularity condition on null infinity, as discussed in the following theorem.

\begin{defn}\label{def:non:radiative}
  An asymptotically flat, dynamical solution to the Einstein vacuum equations is called \emph{non-radiating} (towards future null infinity), if the Bondi mass $M(u)$, as defined in Section~\ref{sec:asymptotics:limits}, c.f.~in particular (\ref{eq:hawking:mass:def}-\ref{eq:bondi:mass:loss}), is \emph{constant} as a function of ``retarded time'' $u$. (Similarly for past null infinity.)
\end{defn}

In view of the monotonicity of the Bondi mass, c.f.~\eqref{eq:bondi:mass:loss}, the non-radiating condition is strictly weaker than the time-periodicity assumption; see also Section~\ref{sec:vanishing:leading}.

\begin{thm}[Stationarity of non-radiative spacetimes under strong smoothness assumptions on null infinity]\label{thm:stationary:non:radiating}
  Let $(\mathcal{M},g)$ be a dynamical solution to the vacuum equations satisfying the asymptotics towards null infinity established in \cite{christodoulou:I} (c.f.~Section~\ref{sec:asymptotics:limits}).
  Let us assume that in addition to \eqref{eq:initial:data}, more specifically \eqref{eq:initial:data:specific},  the initial data on $\Sigma\subset\mathcal{M}$ is such that, with the notation of Section~\ref{sec:null:structure},
  \begin{equation*}
    \sup_{\Sigma} r^5\lvert \alpha \rvert_{\gs}<\infty\,,\qquad \sup_{\Sigma} r^4\lvert\beta\rvert_{\gs}<\infty\,,
  \end{equation*}
holds. Furthermore, let us assume that the space-time $(\mathcal{M},g)$ is smooth at null infinity in the sense of Def.~\ref{def:smooth:null:infinity},
and that all components of the curvature, c.f.~Section~\ref{sec:null:structure}, \[\kappa=\alpha,\beta,\rho,\sigma,\betab,\alphab\] admit and expansion (c.f.~Def.~\ref{def:expansion}) near future and past null infinities $\mathcal{I}^+$, $\mathcal{I}^-$,
\begin{equation}\label{eq:thm:non:radiating:expansion}
  \kappa\sim \sum_{l=0}^\infty \kappa^l_\pm(u_\pm,\xi)\, r^{k-l}\,,\qquad \mathcal{I}^\pm\simeq \{(u_\pm,\xi):u_\pm\in\mathbb{R},\xi\in\mathbb{S}^2\}
\end{equation}
with the property that 
\begin{equation}\label{eq:thm:non:radiating:assumption}
  \lim _{u_\pm\to\mp\infty}\lvert \kappa^l_\pm(u_\pm,\xi)\rvert <\infty\,,\qquad\forall l\in\mathbb{N}.
\end{equation}

If $(\mathcal{M},g)$ is \emph{non-radiating}, in the sense of Def.~\ref{def:non:radiative}, towards future and past null infinity,
then $(\mathcal{M},g)$ is \emph{stationary near infinity}, as concluded in Theorem~\ref{thm:stationary:periodic}.
\end{thm}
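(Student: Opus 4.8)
The plan is to follow the same three-step architecture as in the proof of Theorem~\ref{thm:stationary:periodic}, replacing the role of time-periodicity by the non-radiating hypothesis together with the strong regularity assumptions \eqref{eq:thm:non:radiating:expansion}--\eqref{eq:thm:non:radiating:assumption} at null infinity. \emph{Step 1 (candidate vectorfield).} First I would construct, exactly as in Section~\ref{sec:asymptotics}, the ``candidate'' stationary vectorfield $T$ in a neighborhood of $\mathcal{I}^+_{u_0}\cup\mathcal{I}^-_{v_0}$. This construction is insensitive to time-periodicity: it relies only on the asymptotics of \cite{christodoulou:I} and on the matching to a Kerr solution on $\Sigma\cap\mathcal{D}$ recorded in \eqref{eq:initial:data}. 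The additional decay rates $\sup_\Sigma r^5\lvert\alpha\rvert_{\gs}<\infty$ and $\sup_\Sigma r^4\lvert\beta\rvert_{\gs}<\infty$ on the initial slice reinforce that the leading asymptotic behavior is that of a stationary (Kerr) solution, and in particular guarantee that $T$ is well defined and timelike near infinity.

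\emph{Step 2 (Killing to all orders).} The heart of the matter is to show that the deformation tensor $\pi:=\mathcal{L}_T g$ decays faster than any polynomial rate towards null infinity---precisely the conclusion of Prop.~\ref{prop:vanishing:induction}. As explained in Section~\ref{sec:vanishing:leading}, time-periodicity enters there only in a benign way, to rule out linear growth at the leading order; I would substitute for this input the constancy of the Bondi mass. By the Bondi mass-loss formula \eqref{eq:bondi:mass:loss}, the non-radiating hypothesis forces the leading radiative quantity (the news) to vanish along both $\mathcal{I}^+$ and $\mathcal{I}^-$, which is exactly the leading-order statement needed to initialize the induction. The higher-order vanishing then follows by the same inductive scheme as in Prop.~\ref{prop:vanishing:induction}, integrating the transport equations for the expansion coefficients of $\pi$ along the generators of null infinity. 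Here the full expansion \eqref{eq:thm:non:radiating:expansion} of the curvature and, crucially, the finiteness of its limits towards spatial infinity \eqref{eq:thm:non:radiating:assumption} supply the boundary data at $\iota^0$ that in the periodic case was furnished automatically by periodicity; this is what allows the recursion to close without secular growth.

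\emph{Step 3 (extension).} Finally, with $\pi$ vanishing to all orders at null infinity, I would invoke the Carleman-type unique continuation results of \cite{alexakis-schlue-shao,ionescu-klainerman:extension} exactly as in Section~\ref{sec:extension}. Using the vacuum equations \eqref{eq:eve} together with the second Bianchi identity, the Lie derivative of the curvature along $T$ satisfies a wave equation whose source is controlled by $\pi$; the infinite-order vanishing of $\pi$ on $\mathcal{I}^+_{u_0}\cup\mathcal{I}^-_{v_0}$ furnishes the vanishing Cauchy data required by the Carleman estimates, which then force $\mathcal{L}_T g=0$ throughout a neighborhood $\mathcal{D}_{(v_0,u_0)}$ of infinity.

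I expect the main obstacle to lie in \emph{Step 2}. In the time-periodic setting periodicity provides uniform-in-$u$ control essentially for free, so the transport equations along $\mathcal{I}^\pm$ integrate with no growth; in the non-radiating case one has only the weaker constancy of the Bondi mass, which directly controls just the leading order. Propagating this to every order requires combining the null-structure and Bianchi equations with the expansions \eqref{eq:thm:non:radiating:expansion}, and the delicate point is to verify that the finite-limit condition \eqref{eq:thm:non:radiating:assumption} at spatial infinity is genuinely sufficient as boundary data for the order-by-order recursion---that no obstruction to integrating the transport equations from $\iota^0$ arises at any order. This is precisely where the strong smoothness assumption at null infinity, which was \emph{not} needed in the periodic case, becomes indispensable.
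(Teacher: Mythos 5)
Your proposal matches the paper's own route: the paper handles the non-radiating case by observing that time-periodicity enters only to kill linear-in-$u$ growth of quantities whose second $u$-derivative has been shown to vanish (first for $A$, $B$ at leading order, then for each expansion coefficient $\kappa^l_\pm$ in the induction), and replaces that input at every order by the finite-limit condition \eqref{eq:thm:non:radiating:assumption} as $u_\pm\to\mp\infty$ --- see the remark following Prop.~\ref{prop:time:periodic:leading} and the remark inside Step 3 of the proof of Prop.~\ref{prop:vanishing:induction} --- while Steps 1 and 3 of your outline are carried over verbatim. One small correction: the initial-data conditions $\sup_\Sigma r^5\lvert\alpha\rvert_{\gs}<\infty$, $\sup_\Sigma r^4\lvert\beta\rvert_{\gs}<\infty$ are not needed to construct $T$, but serve precisely to guarantee the existence of the leading-order limits $\lim_{u\to-\infty}A(u,\xi)$, $\lim_{u\to-\infty}B(u,\xi)$ in \eqref{eq:non:radiating:data:limits}, i.e.\ the $l=0$ instance of the same linear-growth argument.
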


\begin{rmk}
We would like to remark that the \emph{localization} achieved in both Theorem~\ref{thm:stationary:periodic}, and \ref{thm:stationary:non:radiating}, namely that it is possible to assume the time-periodic, or more generally non-radiative property only on \emph{arbitrarily small} segments of future and past null infinity, relies crucially on the positivity of the mass of the spacetime. Indeed, the unique continuation results in our \cite{alexakis-schlue-shao} which are used for the extension of the stationary vectorfield from infinity, allow for such localisation to a neighborhood of spatial infinity only in the presence of a positive mass, c.f.~Section~\ref{sec:carleman}. The underlying relationship between the behavior of null geodesics in the space-time, and the positivity property of mass, which has been exploited quantitatively in \cite{alexakis-schlue-shao}, has previously been observed by Penrose, see e.g.~\cite{penrose-woolgar}, who sought to give an alternative proof of the space-time positive mass theorem \cite{schoen-yau} based purely on the behavior of null geodesics near infinity; see also Galloway and Chru\'sciel~\cite{galloway-chrusciel:poorman,chrusciel:nullgeodesics}.
\end{rmk}


\paragraph{Acknowledgements} We would like to thank Piotr Chru\'sciel for pointing out to us the results in \cite{tod:I} in the early stages of this project. We also thank Greg Galloway for insightful discussions on \cite{galloway:splitting,galloway-chrusciel:poorman}, and Arick Shao for many useful suggestions.

\section{Asymptotically flat dynamical vacuum spacetimes}
\label{sec:asymptotics}

The asymptotics of a gravitational field have been described in \cite{christodoulou:I, christodoulou:II} in the fully non-linear dynamical regime; (see also \cite{klainerman-nicolo:I}).
In the following we shall assume in particular the asymptotic behaviour towards future and past null infinity of solutions to the Einstein vacuum equations arising from asymptotically flat initial data; while the results in \cite{christodoulou:I} were derived under a global smallness assumption on the initial data, the rigorous analysis of the asymptotic behaviour towards null infinty is also valid for gravitational waves from strong sources \cite{christodoulou:II}.

\subsection{Choice of gauge and construction of candidate Killing vectorfield}
\label{sec:gauge}

In \cite{christodoulou:II,christodoulou:I} all geometric quantities are decomposed with respect to a null frame defined in terms of one optical function $u$, the level sets of which are outgoing null hypersurfaces, and an affine function $s$. As we shall see this choice is particularly well adapted to the construction of a candidate time-like Killing vectorfield.

In \cite{christodoulou:II} future null infinity $\mathcal{I}^+$ is constructed as a limiting ingoing null hypersurface emanating from spacelike infinity $\iota^0$. 
More precisely, given an initial asymptotically flat spacelike  hypersurface $\Sigma_0$, we can define a family of null hypersurfaces $C_d^-$ relative to an exhaustion of $\Sigma_0$ by balls $B_d$, as the ingoing null hypersufaces emanating from $S_d=\partial B_d$. The intersection of the outgoing null hypersurface $C_0^+$ from a fixed sphere, say $S_0$, with the ingoing null hypersurface $C_\ast^-=C_{d^\ast}^-$, emanating from $S_{d^\ast}$ is a distinguished sphere $S_0^\ast$ at null infinity, as $d^\ast\to\infty$; see Figure~\ref{fig:foliation}.

\begin{figure}[tb]
  \centering
  \includegraphics{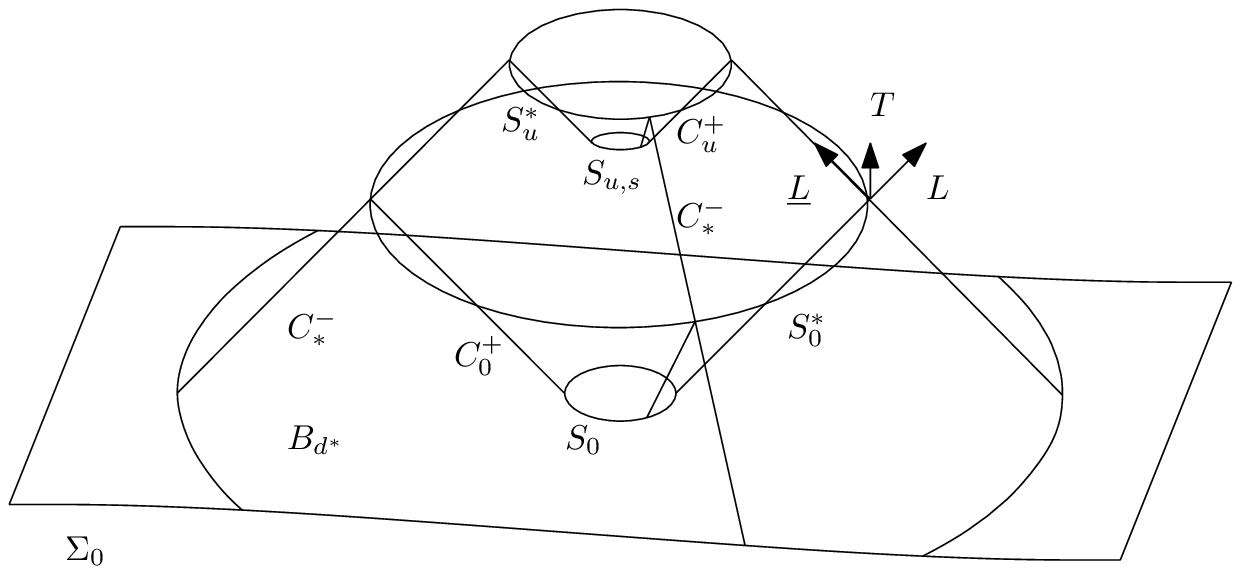}
  \caption{Construction of coordinates near null infinity.}
  \label{fig:foliation}
\end{figure}

Now define null normals $(L,\Lb)$ to the sphere $S_0^\ast$ at infinity by the conditions $g(L,L)=g(\Lb,\Lb)=0$, $g(L,\Lb)=-2$, and $g(L,X)=0$ for $X\in \mathrm{T}S_0^\ast$, which fixes the ingoing and outgoing null vectors $\Lb$, and $L$, respectively up to a rescaling
\begin{equation}
  L\mapsto a\,L\qquad\Lb\mapsto\frac{1}{a}\,\Lb
\end{equation}
where $a$ is a positive function on $S_0^\ast$.
The null second fundamental form $\chi$, and conjugate null second fundamental form $\chib$ defined relative to $L$, and $\Lb$ respectively, are then fixed up to rescalings
\begin{equation}
  \chi\mapsto a\chi\qquad\chib\mapsto\frac{1}{a}\chib\,.
\end{equation}
In view of asymptotic flatness, however,  $\tr\chi>0$, and $\tr\chib<0$, and we can fix a gauge, i.e. a function $a$ on $S_0^\ast$ such that
\begin{equation}
  \tr\chi+\tr\chib=0\qquad :\text{on }S_0^\ast\,.
\end{equation}
Now define
\begin{equation}
  T=\frac{1}{2}\Bigl(L+\Lb\Bigr)\qquad :\text{on }S_0^\ast\,;
\end{equation}
the vectorfield $T$ thus defined coincides with the binormal to $S_0^\ast$, and has the property that the first variation of the area of $S_0^\ast$ along timelike geodesics generated by $T$ vanishes. 

Next the null normals $(L,\Lb)$ are extended to $C_*^-$ as follows: Define $\Lb$ to be the null geodesic vectorfield coinciding with $\Lb$ defined on $S_0^\ast$, and generating null geodesic segments of $C_\ast^-$,
\begin{equation}
  \nabla_{\Lb}\Lb=0\qquad :\text{on }C_\ast^-\,.
\end{equation}
Note that as $d^\ast\to\infty$ we obtain the null geodesic generators $\Lb$ of future null infinity $\mathcal{I}^+$ as limits of null vectorfields along $C^-_\ast$.
With the auxiliary affine distance function $t$ from $S_0^\ast$, $\Lb\cdot t=1$, $t=0$ on $S_0^\ast$ we define a \emph{retarded time} function $u$ on $C^-_\ast$ to have the value
\begin{equation}\label{eq:u:C}
  u(t)=2(r_0^\ast-r_t^\ast)
\end{equation}
on the level sets $S_t^\ast$ of the function $t$, then simply denoted by $S^\ast_u$; 
here $r^\ast_t$ is the area radius of the sphere $S_t^\ast$, c.f.~\eqref{eq:def:arearadius} below.
Note that $u<0$ in the past of $S_0^\ast$, and $u>0$ to the future of $S_0^\ast$.
Then $L$ is defined to be the conjugate null normal to $\Lb$ on each $S_u^\ast$: $g(L,\Lb)=-2$, $g(L,X)=0$ for $X\in\mathrm{T}S_u^\ast$.

Finally we extend the null frame to the past of the null hypersurface $C_\ast^-$ by first defining $u$ to be a solution to the eikonal equation
\begin{equation}
  g^{\alpha\beta}\partial_\alpha u\partial_\beta u=0
\end{equation}
with the above prescribed values on $C^-_\ast$. The level sets $C_u^+$ of the optical function $u$ are then outgoing null hypersurfaces that intersect $C_\ast^-$ in $S_u^\ast$, and we can extend $L$ to be the null geodesic vectorfield on $C_u^+$ coinciding with $L$ previously constructed on $S_u^\ast$:
\begin{equation}\label{eq:propagate:L}
  \nabla_L L =0\,.
\end{equation}
We denote by $s$ the affine parameter along $C_u^+$,
\begin{equation}
  L\cdot s=1\,,\qquad s\rvert_{S_u^\ast}=r_u^\ast\,,
\end{equation}
and denote by $S_{u,s}$ the level sets of $s$ on $C_u^+$. 
Finally $\Lb$ is defined along $C_u^+$ to be the conjugate null normal to $L$ on $S_{u,s}$: $g(L,\Lb)=-2$, $g(L,X)=0$, $X\in\mathrm{T}S_{u,s}$.

We shall now define a coordinate chart in the past of $C_\ast^-$. First choose coordinates $(\vartheta^1,\vartheta^2)$ on $S_0^\ast$, and fix a diffeomorphism $\Phi_0$ of $\mathbb{S}^2$ onto $S_0^\ast$. Then define
\begin{equation}
  \Phi_{u,s}:\mathbb{S}^2\to S_{u,s}
\end{equation}
such that $\Phi_{u,s}\circ\Phi_0^{-1}:S_0^\ast\to S_{u,s}$ is the group of diffeomorphisms generated by the flow along null geodesics, first along the integral curves of $\Lb$ on $C_\ast^-$, then along the integral curves of $L$ on $C_u^+$. 
Now we can assign to any point $p\in S_{u,s}$ the coordinates $(u,s,\vartheta^1,\vartheta^2)$, where $\Phi_{u,s}^{-1}(p)$ has coordinates $(\vartheta^1,\vartheta^2)$ on $S_0^\ast$.

This choice of coordinates is closely related to our construction of the ``candidate'' Killing vectorfield. 
In fact, in these coordinates
\begin{equation}
  L=\frac{\partial}{\partial s}
\end{equation}
and we now \emph{define} everywhere
\begin{equation}
  T=\frac{\partial}{\partial u}\,.
\end{equation}
As we shall see in Lemma~\ref{lemma:T:lead} the coordinate vectorfield $T$ thus constructed is timelike in a neighborhood of null infinity; 
it commutes with the geodesic vectorfield $L$ by construction:
\begin{equation}\label{eq:L:T}
  [L,T]=0
\end{equation}
In Section~\ref{sec:vanishing} we shall prove under the time-periodicity assumption of Def.~\ref{def:time-periodic} that $T=\partial_u$ satisfies the Killing equation to infinite order (in $r$), while \eqref{eq:L:T} is essential in Section~\ref{sec:extension} to deduce further that $T$ in fact generates an isometry.

Note that by construction 
\begin{equation}
  [L,X_a]=0\,,\qquad X_a=\frac{\partial}{\partial \vartheta^a}\,,\quad a=1,2\,,
\end{equation}
it will be useful to work with an orthonormal frame $E_A:A=1,2$ on the spheres $S_{u,s}$, $g(E_A,E_B)=\delta_{AB}$ which is propagated along $C_u^+$ according to
\begin{equation}\label{eq:propagate:EA}
  \nabla_LE_A=-\zeta_A L\,,
\end{equation}
where $\zeta$ is the torsion recalled below. This complements the pair of null vectors to a null frame $(L,\Lb,E_1,E_2)$ for the spacetime.

Let $\gs$ be the induced metric on $S_{u,s}$ and define the area radius function $r$ by
\begin{equation} \label{eq:def:arearadius}
  4\pi r^2(u,s)=\int_{S_{u,s}}\dm{\gs}\,.
\end{equation}

\subsection{Bianchi and null structure equations }
\label{sec:null:structure}

The present null frame $(L,\Lb,E_1,E_2)$ introduced above obeys the frame relations:
\footnote{We refer the reader to Chapter~7.3 of \cite{christodoulou:I} for the basic definitions of null frames, and the associated decompositions of the connection and curvature.}
(Here and in the following $\nablas:=\Pi\nabla$, where $\Pi^{\mu\nu}=g^{\mu\nu}+\frac{1}{2}(\Lb^\mu L^\nu+\Lb^\nu L^\mu)$ is the projection to the tangent space of $S_{u,s}$, denotes the induced connection on the spheres $S_{u,s}$.)
\begin{subequations}\label{eq:frame} 
\begin{gather}
  \nabla_AE_B=\nablas_A E_B+\frac{1}{2}\chi_{AB}\Lb+\frac{1}{2}\chib_{AB}L\\
  \nabla_LE_A=-\zeta_A L\qquad
  \nabla_{\Lb}E_A=\nablas_{\Lb}E_A+\zeta_A\Lb-\lambda_A L\\
  \nabla_{A}\Lb=\chib_A^{\sharp B}E_B+\zeta_A\Lb\qquad
  \nabla_A L=\chi_A^{\sharp B}E_B-\zeta_AL\\
  \nabla_{\Lb}L=2\zeta^{\sharp A}E_A-\omegab L \qquad\nabla_L\Lb=-2\zeta^{\sharp A}E_A \label{eq:propagate:Lb}\\
  \nabla_L L=0\qquad\nabla_{\Lb}\Lb=\omegab\Lb-2\lambda^{\sharp A}E_A
\end{gather}
\end{subequations}
where $\chi$, and $\chib$ denote the null second fundamental forms
\begin{equation}
  \chi_{AB}=g(\nabla_A L,E_B)\qquad \chib_{AB}=g(\nabla_A\Lb, E_B)\,,
\end{equation}
and $\zeta$ is the torsion 1-form,
\begin{equation}
  \zeta\cdot X=\frac{1}{2}g(\nabla_X L,\Lb)\qquad X\in \mathrm{T}S_{u,s}\,;
\end{equation}
here we have also introduced the Ricci rotation coefficients
\begin{equation}
  \omegab=-\frac{1}{2}g(\nabla_{\Lb}\Lb,L)\qquad \lambda_A=-\frac{1}{2}g(\nabla_{\Lb}\Lb,E_A)\,.
\end{equation}
We shall also denote the trace-free parts of $\chi$, and $\chib$ by $\chih$, and $\chibh$, respectively.

The null decomposition of the Einstein equations is presented in full generality in Chapter 7 of \cite{christodoulou:I}.
We quote here in particular that with the following decomposition of the curvature tensor
\begin{subequations}
  \begin{gather}
    \alphab_{AB}=R(E_A,\Lb,E_B,\Lb)\qquad \alpha_{AB}=R(E_A,L,E_B,L)\\
    \betab_A=\frac{1}{2}R(E_A,\Lb,\Lb,L)\qquad \beta_A=\frac{1}{2}R(A,L,\Lb,L)\\
    \rho=\frac{1}{4}R(\Lb,L,\Lb,L)\qquad \sigma=\frac{1}{4}{}^\ast R(\Lb,L,\Lb,L)
  \end{gather}
\end{subequations}
the Bianchi equations on a Ricci flat spacetime manifold take the form, c.f.~Proposition 7.3.2 in \cite{christodoulou:I},
\begin{subequations}
\begin{gather}
  \nablas_{\Lb}\alpha+\frac{1}{2}\tr\chib\alpha+2\omegab\alpha=\nablas\otimesh\beta+5\zeta\otimesh\beta-3\chih\rho-3\ld\chih\sigma\label{eq:bianchi:alpha}\\
  \nablas_L\alphab+\frac{1}{2}\tr\chi\alphab=-\nablas\otimesh\betab+5\zeta\otimesh\betab-3\chibh\rho+3\ld\chibh\sigma\label{eq:bianchi:alphab:L}\displaybreak[0]\\  
  \nablas_L\beta+2\tr\chi\beta=\divs\alpha+\zeta^\sharp\cdot\alpha\label{eq:bianchi:beta:L}\\
  \nablas_{\Lb}\betab+2\tr\chib\betab-\omegab\betab=-\divs\alphab+\zeta^\sharp\cdot\alphab+3\lambda\,\rho-3\ld\lambda\,\sigma\\
  \nablas_{\Lb}\beta+\tr\chib\beta+\omegab\beta=\nablas\rho+\ld\nablas\sigma+3\zeta\rho+3\ld\zeta\sigma+2\chih^\sharp\cdot\betab-\lambda^\sharp\cdot\alpha\label{eq:bianchi:beta:Lb}\\
  \nablas_L\betab+\tr\chi\betab=-\nablas\rho+\ld\nablas\sigma+3\zeta\rho-3\ld\zeta\sigma+2\chibh^\sharp\cdot\beta\label{eq:bianchi:betab:L}\displaybreak[0]\\
  L\rho+\frac{3}{2}\tr\chi\rho=\divs\beta-(\zeta,\beta)-\frac{1}{2}(\chibh,\alpha)\label{eq:bianchi:rho:L}\\
  \Lb\rho+\frac{3}{2}\tr\chib\rho=-\divs\betab-(\zeta,\betab)-2(\lambda,\beta)-\frac{1}{2}(\chih,\alphab)\label{eq:bianchi:rho:Lb}\\
  L\sigma+\frac{3}{2}\tr\chi\sigma=-\curls\beta+\zeta\wedge\beta+\frac{1}{2}\chibh\wedge\alpha\label{eq:bianchi:sigma:L}\\
  \Lb\sigma+\frac{3}{2}\tr\chib\sigma=-\curls\betab-\zeta\wedge\betab+2\lambda\wedge\beta-\frac{1}{2}\chih\wedge\alphab\label{eq:bianchi:sigma:Lb}  
\end{gather}
\end{subequations}
Moreover, we recall the null structure equations from Proposition 7.4.1 in \cite{christodoulou:I}:
\begin{subequations}
  \begin{gather}
    \curls\zeta=-\frac{1}{2}\chih\wedge\chibh+\sigma\\ \curls\lambda=-2\lambda\wedge\zeta \displaybreak[0]\\
    \nablas_{\Lb}\chibh+\tr\chib\chibh-\omegab\chibh=-\nablas\otimesh \lambda+2\zeta\otimesh\lambda-\alphab\\
    \nablas_L\chibh+\frac{1}{2}\tr\chi\chib=-\nablas\otimesh\zeta-\frac{1}{2}\tr\chib\chih+\zeta\otimesh\zeta\label{eq:structure:chibh:L}\displaybreak[0]\\
    \nablas_{\Lb}\tr\chib+\frac{1}{2}\tr\chib\tr\chib-\omegab\tr\chib=-2\divs\lambda+4(\lambda,\zeta)-(\chibh,\chibh)\label{eq:structure:trchib:Lb}\\
    \nablas_L\tr\chib+\frac{1}{2}\tr\chi\tr\chib=-2\divs\zeta-(\chih,\chibh)+2(\zeta,\zeta)+2\rho\displaybreak[0]\label{eq:structure:trchib:L}\\
    \nablas_{\Lb}\zeta=\nablas\omegab-\chib^\sharp\cdot\zeta-\chi^\sharp\cdot\lambda-\betab\label{eq:structure:zeta:Lb}\\
    \nablas_L\zeta=-2\chi^\sharp\cdot\zeta-\beta\label{eq:structure:zeta:L}\displaybreak[0]\\
    \nablas_{\Lb}\chih+\frac{1}{2}\tr\chib\chih+\omegab\chih=\nablas\otimesh\zeta-\frac{1}{2}\tr\chi\chibh+\zeta\otimesh\zeta\\
    \nablas_{L}\chih+\tr\chi\chih=-\alpha\label{eq:structure:chih:L}\displaybreak[0]\\
    \nablas_{\Lb}\tr\chi+\frac{1}{2}\tr\chib\tr\chi+\omegab\tr\chi=2\divs\zeta-(\chih,\chibh)+2(\zeta,\zeta)+2\rho\label{eq:structure:trchi:Lb}\\
    \nablas_L\tr\chi+\frac{1}{2}\tr\chi\tr\chi=-(\chih,\chih)\label{eq:structure:trchi:L}\displaybreak[0]\\
    \nablas_L\lambda-\nablas_{\Lb}\zeta=2\chib^\sharp\cdot\zeta+\betab\label{eq:structure:lambda}\\
    \nablas_L\omegab=-6(\zeta,\zeta)-2\rho\label{eq:structure:omegab}
  \end{gather}
\end{subequations}
Note that the Bianchi and null structure equations in this setting are slightly different from those of a double null foliation used so successfully in \cite{klainerman-nicolo:I, klainerman-nicolo:II, christodoulou:III}. The present frame, which is based on a $u$, $s$ foliation, where the level sets of $u$ are null hypersurfaces but those of $s$ are timelike, is used primarily because it is well-adapted to the construction of the vectorfield $T$, and allows us to refer directly to the presentation \cite{christodoulou:II} of the results on the asymptotics obtained in \cite{christodoulou:I}; these we recall in the next section.

The above equations are stated in terms of covariant derivatives. Alternatively the Bianchi and null structure equations can be written using Lie derivatives, for the purpose of which we define
\begin{equation}\label{def:D}
  D_L\theta=\Pi \mathcal{L}\theta\,,\qquad\theta\text{ : p-form on }S_{u,s}\,
\end{equation}
where $\Pi$ denotes the projection to the spheres $S_{u,s}$.
Its relation to the covariant derivative is summarised in Lemma~\ref{lemma:T:Lie} below.

\subsection{Asymptotic quantities and limiting equations}
\label{sec:asymptotics:limits}

We now recall several conclusions of \cite{christodoulou:I} on the asymptotics of dynamical vacuum solutions towards null infinity which shall be viewed as assumptions in the present setting.

In the geometric setting of Section~\ref{sec:gauge} limits at null infinity are obtained by taking $d^\ast\to\infty$ thus sending $C_\ast^-$ to an asymptotic incoming null hypersurface at infinity. We follow \cite{christodoulou:II} and let every point on $S_0^\ast$, $\phi_0(\vartheta^1,\vartheta^2)$, trace a generator of the outgoing null hypersurface $C_0^+$. We denote for simplicity by
\begin{equation}
  \Phi_u=\Phi_{u,r_u^\ast}:\mathbb{S}^2\to S_u^\ast\,.
\end{equation}
For any $p$-covariant tensorfield $w$ on $(S_u^\ast,\gs)$ of order 
\begin{equation}
  \lvert w\rvert_{\gs}=\mathcal{O}(r^{-q})
\end{equation}
we denote the limit of the pull-back of $w$ to $(\mathbb{S}^2,\gammao)$ by $\Phi_u$ as $d^\ast\to\infty$, in so far it exists, by
\begin{equation}
  \lim_{u;r\to\infty}\Phi_u^\ast r^{-p+q}w_{A_1A_2\ldots A_p}=\lim_{d^\ast\to\infty}r^q w\bigl(r^{-1}\ud \Phi_u\cdot e_{A_1},\ldots, r^{-1}\ud\Phi_u\cdot e_{A_p}\bigr)
\end{equation}
where $e_A:A=1,2$ is an orthonormal frame on $\mathbb{S}^2$, such that $r E_A=\ud\Phi_u\cdot e_A$.

For the spacetimes under consideration one has, according to Chapter~17 of \cite{christodoulou:I} and \cite{christodoulou:II}, 
\begin{equation}
    \limd\Phi^\ast_u r^{-2}\gs=\gammao\qquad
  \limd\Phi_u^\ast K[r^{-2}\gs]=1 \label{eq:asymptotics:metric}
\end{equation}
in other words $S_u^\ast$ become round spheres, and $\gs$ converges to the standard metric on $\mathbb{S}^2$.
Moreover,
\begin{equation}\label{eq:asymptotics:trchi}
    \limd\Phi_u^\ast(r\tr\chi)=2\qquad
  \limd\Phi_u^\ast(r\tr\chib)=-2
\end{equation}
for each fixed $u$,
and to next order
\begin{subequations}
\begin{gather}
  \limd\Phi_u^\ast \Bigl(r(r\tr\chi-2)\Bigr)=H\\
  \limd \Phi_u^\ast\Bigl(r(r\tr\chib+2)\Bigr)=\underline{H}
\end{gather}
\end{subequations}
are functions on $\mathbb{S}^2$, where $H$ is \emph{independent} of $u$, and of vanishing mean, 
and $\underline{H}$ satisfies:
\begin{equation}\label{eq:asymptotics:Hb}
  \partial_u\underline{H}=-\frac{1}{2}\lvert\Xi\rvert^2\,.
\end{equation}
Here the limits of the trace-free parts of the null second fundamental forms are
\begin{subequations}
\begin{gather}\label{eq:asymptotics:chih}
  \lvert\chih\rvert_{\gs}=\mathcal{O}(r^{-2})\qquad \limd \Phi_u^\ast\chih=\Sigma\qquad\tr_{\gammao}\Sigma=0\\
  \lvert\chibh\rvert_{\gs}=\mathcal{O}(r^{-1})\qquad \limd \Phi_u^\ast r^{-1}\chibh=\Xi\qquad\tr_{\gammao}\Xi=0\,,
\end{gather}
\end{subequations}
and are related by, c.f.~\eqref{eq:asymptotics:zeta},
\begin{subequations}\label{eq:asymptotics:Sigma}
\begin{gather}
  \partial_u\Sigma=-\frac{1}{2}\Xi\\
  \divsc\Sigma=\frac{1}{2}\nablasc H+Z\,,
\end{gather}
\end{subequations}
(where $\nablasc$, and $\divsc$ denote respectively the gradient and the divergence operator on the unit sphere $\mathbb{S}^2$).
It is shown in \cite{christodoulou:I} that the Hawking mass enclosed by each sphere $S_{u,s}$,
\begin{equation}\label{eq:hawking:mass:def}
  m(u,s)=\frac{r(u,s)}{2}\Bigl(1+\frac{1}{16\pi}\int_{S_{u,s}}\tr\chi\tr\chib\dm{\gs}\Bigr)
\end{equation}
has a limit for each $u$, called the \emph{Bondi mass}, which plays a central role in the theory of gravitational radiation:
\begin{equation}\label{eq:bondi:mass:def}
  M(u)=\limd m(u,r_u^\ast)\,.
\end{equation}
In particular it is a monotone quantity and satisfies the Bondi mass loss formula
\begin{equation}\label{eq:bondi:mass:loss}
  \partial_u M=-\frac{1}{32\pi}\int_{\mathbb{S}^2}\lvert\Xi\rvert^2\dm{\gammao}\,.
\end{equation}
For this reason $\lvert\Xi\rvert^2(\xi)/32\pi$ is interpreted as \emph{gravitational power radiated to infinity} in a given direction $\xi\in\mathbb{S}^2$ per unit solid angle.
Moreover,
\begin{subequations}\label{eq:asymptotics:Xi}
\begin{gather}
  \partial_u\Xi=-\frac{1}{2}\underline{A}\\
  \divsc\Xi=\Bb
\end{gather}
\end{subequations}
where $\underline{A}$, $\Bb$ are the leading order components of the curvature.
The main theorem in \cite{christodoulou:I} yields in particular the following asymptotics and limits for the curvature components:
\begin{subequations}\label{eq:asymptotics:curvature}
  \begin{gather}
    \lvert\alphab\rvert_{\gs}=\mathcal{O}(r^{-1})\qquad\limd\Phi_u^\ast r^{-1}\alphab=\Ab\\
    \lvert\betab\rvert_{\gs}=\mathcal{O}(r^{-2})\qquad\limd\Phi_u^\ast r\betab=\Bb\\
    \lvert\rho\rvert=\mathcal{O}(r^{-3})\qquad\limd\Phi_u^\ast r^3\rho=P\\
    \lvert\sigma\rvert=\mathcal{O}(r^{-3})\qquad\limd\Phi_u^\ast r^3\sigma=Q
  \end{gather}
\end{subequations}
Moreover, the remaining curvature components are of lower order,
\begin{equation}\label{eq:asymptotics:alphabeta}
  \lvert\alpha\rvert_{\gs}=\mathcal{O}(r^{-\frac{7}{2}})\,, \qquad \lvert\beta\rvert_{\gs}=\mathcal{O}(r^{-\frac{7}{2}})\,.
\end{equation}
\begin{rmk}
  A strengthening of the assumption \eqref{eq:asymptotics:alphabeta} to the existence of the limits
\begin{subequations}\label{eq:limit:alphabeta}
\begin{gather}
  \lvert\beta\rvert_{\gs}=\mathcal{O}(r^{-4})\qquad  \limd\Phi_u^\ast r^3\beta=B \label{eq:limit:beta}\\
  \lvert\alpha\rvert_{\gs}=\mathcal{O}(r^{-5})\qquad   \limd\Phi_u^\ast r^3\alpha=A \label{eq:limit:alpha}
\end{gather}
\end{subequations}
would correspond to the presence of \emph{peeling}; such decay properties are e.g.~implied by the existence of a smooth compactification of the spacetime at null infinity. Moreover \eqref{eq:limit:beta} has in fact been \emph{derived} under strong decay and regularity assumptions on the initial data in \cite{klainerman-nicolo:II}.

While \eqref{eq:limit:alphabeta} as an assumption on future null infinity would not be a major assumption in the context of this paper the main purpose of which is to remove the \emph{analyticity} assumption in previous approaches, it is still undesirable from the dynamical point of view.
Thus in this paper we do \emph{not} impose \eqref{eq:limit:alphabeta} as an assumption on future null infinity.
In the context of Theorem~\ref{thm:stationary:periodic}, the argument for time-periodic solutions only relies on the existence of the limits
\begin{equation}
  \partial_u A=\limd\partial_u\Phi_u^\ast r^3\alpha\,,\qquad\partial_u B=\limd\partial_u\Phi_u^\ast r^3\beta\,.
\end{equation}
which we prove in Section~\ref{sec:vanishing:leading}; c.f.~\cite{christodoulou:Ib}.\footnote{The reason that these limits exist, while \eqref{eq:limit:alphabeta} may not, is that the coefficients to the generically present logarithmic terms are time-independent, that is to say if in fact
\begin{equation*}
  \beta=\frac{B_\ast\log r}{r^4}+\frac{B}{r^4}+o(r^{-4})
\end{equation*}
then $\partial_u B_\ast=0$, see (5) and (6) in \cite{christodoulou:Ib}.}
However, in the context of Theorem~\ref{thm:stationary:non:radiating}, for solutions which are merely non-radiative we have to impose the corresponding assumption on the initial data, such that, c.f.~\eqref{eq:non:radiating:data:limits},
\begin{equation}
  \lim_{u\to-\infty}\lvert A(u,\xi)\rvert <\infty\,,\qquad \lim_{u\to-\infty}\lvert B(u,\xi)\rvert <\infty\,.
\end{equation}

\end{rmk}
Finally the torsion has a limit
\begin{equation}\label{eq:asymptotics:zeta}
  \lvert\zeta\rvert_{\gs}=\mathcal{O}(r^{-2})\qquad\limd \Phi_u^\ast r\zeta=Z\,,
\end{equation}
and satisfies the Hodge system
\begin{subequations}\label{eq:asymptotics:zeta:Hodge}
  \begin{gather}
    \curlsc Z = Q-\frac{1}{2}\Sigma\wedge\Xi\\
    \divsc Z=\underline{N}+P-\frac{1}{2}\Sigma\cdot \Xi
  \end{gather}
\end{subequations}
where $\Nb$ is the limit of a ``mass-aspect function'' satisfying
\begin{subequations}
\begin{gather}
  \partial_u\Nb=-\frac{1}{4}\lvert\Xi\rvert^2\label{eq:asymptotics:Nb}\\
  \frac{1}{4\pi}\int_{\mathcal{S}^2}\Nb\dm{\gammao}=2M\,.\label{eq:asymptotics:Nb:average}
\end{gather}
\end{subequations}

Regarding the remaining Ricci coefficients we assume
\begin{subequations}
  \begin{gather}
   \limd r\omegab=0 \label{eq:asymptotics:omegab}\\
   \limd \Phi_u^\ast\lambda=0\,; \label{eq:asymptotics:lambda}
  \end{gather}
\end{subequations}
note that while the coefficients decay by construction since the vectorfield $\Lb$ is geodesic at infinity, the existence of the limits follows from \eqref{eq:structure:lambda},\eqref{eq:structure:omegab}, and \eqref{eq:structure:zeta:Lb}.

\subsection{Asymptotic expression for  $T$} 

In the following it will be will be important to have an asymptotic expression for the candidate Killing vectorfield $T$ in terms of null frame $(\Lb,L,E_A:A=1,2)$ constructed in Section~\ref{sec:gauge}.

\begin{lemma}\label{lemma:T:lead}
  In terms of the null frame $(E_\mu:\mu=0,\ldots,3)=(\Lb,L,E_1,E_2)$ the coordinate vectorfield $T=\frac{\partial}{\partial u}$ is given by
  \begin{equation}
    T=\frac{1}{2}\bigl(L+\Lb\bigr)\quad+\sum_{\mu=0}^3\mathcal{O}\bigl(\frac{1}{r^2}\bigr)E_\mu\,.
  \end{equation}
\end{lemma}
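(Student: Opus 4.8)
The plan is to write $T$ in the null frame as $T = a\,L + b\,\Lb + c^A E_A$ and to control the three coefficient functions as $r\to\infty$. Since the frame satisfies $g(L,\Lb)=-2$ and $g(E_A,E_B)=\delta_{AB}$, the coefficients are recovered from inner products,
\[
  a = -\tfrac12 g(T,\Lb), \qquad b = -\tfrac12 g(T,L), \qquad c^A = g(T,E_A),
\]
so it suffices to understand these scalars. The key structural input is the commutation relation \eqref{eq:L:T}, together with $\nabla_L L = 0$: differentiating $g(T,L)$ along $L$ and using $\nabla_L T = \nabla_T L$ gives $L\bigl(g(T,L)\bigr) = g(\nabla_T L, L) = \tfrac12 T\bigl(g(L,L)\bigr) = 0$, so $b$ is \emph{exactly constant} along each generator of $C_u^+$.

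First I would fix the boundary data on the cuts $S_u^\ast$ at null infinity. On $S_0^\ast$ the field is the binormal $T = \tfrac12(L+\Lb)$ by construction, so $a=b=\tfrac12$, $c^A=0$ there. To propagate this along $C_\ast^-$, I would use that in the $(u,s,\vartheta)$ chart the fixed-angle generators of $C_\ast^-$ are the curves $u \mapsto (u, r_u^\ast, \vartheta)$, with tangent $T + \tfrac{\ud r^\ast}{\ud u}L$, and that these are integral curves of $\Lb$; hence on $C_\ast^-$ one has $T = \tfrac{\ud t}{\ud u}\Lb - \tfrac{\ud r^\ast}{\ud u}L$. By the normalization \eqref{eq:u:C}, $\ud r^\ast/\ud u = -\tfrac12$ identically, while $\ud t/\ud u = -1/\bigl(r\,\overline{\tr\chib}\bigr) \to \tfrac12$ as $d^\ast\to\infty$ by \eqref{eq:asymptotics:trchi}. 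Thus $c^A=0$ on $C_\ast^-$ and $a,b\to\tfrac12$ at $\mathcal{I}^+$, which in particular pins the constant value of $b$ along each $C_u^+$ to be exactly $\tfrac12$.

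Next I would derive transport equations for $a$ and $c^A$ along $L$. Expanding $\nabla_L T = \nabla_T L$ in the frame \eqref{eq:frame}, using $\nabla_L\Lb = -2\zeta^{\sharp A}E_A$ from \eqref{eq:propagate:Lb}, $\nabla_L E_A = -\zeta_A L$ from \eqref{eq:propagate:EA}, and $\nabla_{\Lb}L = 2\zeta^{\sharp A}E_A - \omegab L$, the $L$- and $E_A$-components yield
\[
  L a = -\,b\,\omegab, \qquad L c^B = 4\, b\,\zeta^{\sharp B} + c^A \chi_A^{\sharp B}.
\]
These I would integrate inward from $\mathcal{I}^+$ (where $a=\tfrac12$, $c^A=0$), feeding in the Ricci-coefficient asymptotics: $\omegab$ controlled through \eqref{eq:structure:omegab} and the curvature decay \eqref{eq:asymptotics:curvature}, the torsion $\zeta=\mathcal{O}(r^{-2})$ from \eqref{eq:asymptotics:zeta}, and $\tr\chi\sim 2/r$, $\chih=\mathcal{O}(r^{-2})$ from \eqref{eq:asymptotics:trchi}, \eqref{eq:asymptotics:chih}. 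This produces the decay of $a-\tfrac12$ and of $c^A$, hence the claimed expression for $T$.

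The hard part will be extracting the \emph{sharp} decay rate from these transport equations. The homogeneous part of the $c^B$-equation is governed by $\chi_A^{\sharp B}\sim r^{-1}\delta_A^B$, whose integrating factor grows linearly in $s$, so that only the decay of the inhomogeneity $\zeta^\sharp$ — together with delicate cancellations drawing on the Hodge system \eqref{eq:asymptotics:zeta:Hodge} and the gauge condition $\tr\chi+\tr\chib=0$ on $S_0^\ast$ — prevents $c^A$ from decaying more slowly; likewise the rate for $a$ is tied to the precise leading behaviour of $\omegab$, which must be read off from $\Lb$ being geodesic at infinity. Pinning down these rates requires the full asymptotic package of Section~\ref{sec:asymptotics:limits}, not merely its leading orders.
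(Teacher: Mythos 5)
Your proposal follows essentially the same route as the paper: the same null-frame decomposition of $T$, the same transport equations $\ud T^{\Lb}/\ud s=0$, $\ud T^{L}/\ud s=-\omegab\,T^{\Lb}$, $\ud T^{A}/\ud s=4\zeta_A T^{\Lb}+\chi_{AB}T^{B}$ derived from $[L,T]=0$ and the frame relations, and the same boundary values $(\tfrac12,\tfrac12,0)$ on $C_\ast^-$ fixed by the normalization \eqref{eq:u:C} and the asymptotic tangency of $L+\Lb$ to the level sets of $r$. The caveat you raise about extracting the sharp $\mathcal{O}(r^{-2})$ rate is fair, but the paper is no more detailed there: it only records that the right-hand sides of the transport equations are $o(s^{-1})$ at $s=r_u^\ast$ and concludes, and downstream only the leading-order identity $2T=L+\Lb$ is actually used.
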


\begin{proof}

We write
\begin{equation}
  \frac{\partial}{\partial u}=T=T^{\Lb}\Lb+T^L L+T^A E_A
\end{equation}
and can determine the coefficients as solutions to O.D.E.'s using that $T$ is Lie transported by $L$ by \eqref{eq:L:T}.
Indeed, on one hand, using \eqref{eq:propagate:L}, \eqref{eq:propagate:EA}, and \eqref{eq:propagate:Lb},
\begin{equation}
  \nabla_LT=\frac{\ud T^{\Lb}}{\ud s}\Lb+\Bigl(\frac{\ud T^L}{\ud s}-\zeta_A T^A\Bigr)L+\Bigl(\frac{\ud T^A}{\ud s}-2\zeta^A T^{\Lb}\Bigr)E_A\,.
\end{equation}
On the other hand, using \eqref{eq:L:T}, namely $\nabla_LT=\nabla_T L$, we have
\begin{subequations}
\begin{gather}
  g(\nabla_L T,L)=g(\nabla_T L,L)=\frac{1}{2}T g(L,L)=0\,,\\
  g(\nabla_L T,\Lb)=2\omegab T^{\Lb}+2\zeta_A T^A\,,
\end{gather}
\end{subequations}
where have used \eqref{eq:frame}.
By comparison we obtain the following propagation equations for the coefficients:
\begin{subequations}\label{eq:propagate:Talpha}
\begin{gather}
  \frac{\ud T^{\Lb}}{\ud s}=0\,,\\
  \frac{\ud T^L}{\ud s}=-\omegab T^{\Lb}\,,\\
  \frac{\ud T^A}{\ud s}=4\zeta_A T^{\Lb}+\chi_{AB}T^B\,.
\end{gather}
\end{subequations}
We determine the initial values of the coefficients on $C_\ast^-$ from the geometric construction as follows.
By definition $T$ is tangent to the curves $u\mapsto(u,s,\vartheta^1,\vartheta^2)$.
Now we have by construction $s=r$ on $C_\ast^-$, and
\begin{equation}
  \frac{\partial r}{\partial s}=\frac{r}{2}\overline{\tr\chi}\to 1\qquad\text{ as }d^\ast\to\infty\,,
\end{equation}
so that in a neighborhood of $C^-_\ast$ the level sets of $s$ and $r$ coincide as $d^\ast\to\infty$, and $T$ on $C^-_\ast$ is tangential to the level sets of $r$.
But in view of the asymptotics discussed in Section~\ref{sec:asymptotics:limits} the null expansions with respect to $L$, and $\Lb$ are equal and opposite in sign to leading order along  all of future null infinity,  c.f.~\eqref{eq:asymptotics:trchi},
\begin{equation}
  \lim_{u;r\to\infty} r\bigl( \tr\chi+\tr\chib)=0\,,
\end{equation}
and thus also $L+\Lb$ is asymptotically tangential to the level sets of $r$ on $C_\ast^-$ as $d^\ast\to\infty$.
Therefore $T$ and $L+\Lb$  are asymptotically colinear on $C_\ast^-$ as $d^\ast\to\infty$,
while the normalization is fixed by \eqref{eq:u:C}:
\begin{equation}
  \partial_u=\frac{1}{2}(L+\Lb)\qquad\text{ on }C^-_\ast\text{ , as }d^\ast\to\infty\,.
\end{equation}
The initial conditions are thus
\begin{equation}
  T^{\Lb}\rvert_{s=r_u^\ast}=\frac{1}{2}\,,\qquad 
  T^{L}\rvert_{s=r_u^\ast}=\frac{1}{2}\,,\qquad
  T^A\rvert_{s=r_u^\ast}=0\,,\qquad \text{  as } d^\ast\to\infty\,.
\end{equation}
Lastly by the decay properties of $\zeta$, $\omegab$, and $\chi$ recalled below, we have from \eqref{eq:propagate:Talpha} that 
\begin{equation}
    \frac{\ud T^{\Lb}}{\ud s}\rvert_{s=r_u^\ast}=0\,,\quad
  s\frac{\ud T^L}{\ud s}\rvert_{s=r_u^\ast}\to0\,,\quad
  s\frac{\ud T^A}{\ud s}\rvert_{s=r_u^\ast}\to0\,,\qquad \text{  as } d^\ast\to\infty\,,
\end{equation}
which proves the formula of the Lemma.
\end{proof}

\subsection{Metric expression} 

In the coordinates introduced in Section~\ref{sec:gauge} the spacetime metric takes the form
\begin{equation}\label{eq:gauge:metric}
  g=-\ud u\ud s-l \ud u^2+\gs_{ab}\bigl(\ud\vartheta^{a}-b^a\ud u\bigr)\bigl(\ud\vartheta^b-b^b\ud u\bigr)
\end{equation}
and the null normals $(L,\Lb)$ are given by
\begin{subequations}
  \begin{gather}
    L=\frac{\partial}{\partial s}\qquad\\
    \Lb=2\frac{\partial}{\partial u}-l\frac{\partial}{\partial s}+2 b^a\frac{\partial }{\partial \vartheta^a}\,.
  \end{gather}
\end{subequations}
Note that since by \eqref{eq:propagate:Lb},
\begin{equation}
    [L,\Lb]=\nabla_L\Lb-\nabla_{\Lb}L=-4\zeta^{a}\frac{\partial}{\partial \vartheta^a}+\omegab L
\end{equation}
we have
\begin{equation} \label{eq:propagate:lb}
  \frac{\partial l}{\partial s}=-\omegab\qquad\frac{\partial  b^a}{\partial s}=-2\zeta^a\,.
\end{equation}
Moreover,
\begin{equation}
  \limd l=1\qquad\limd b^a=0\,.
\end{equation}

\subsection{Regularity properties}
\label{sec:smoothness}

In Section~\ref{sec:intro:results} we have assumed that in the fundamental domain the metric $g$ is of the form
\begin{equation}\label{eq:metric:ginfty}
  g=-\ud t^2+\ud r^2+r^2\gammao_{AB}\ud \vartheta^A\vartheta^B+g^\infty
\end{equation}
where each component $g^\infty_{\alpha\beta}$ of the tensor $g^\infty$ in the coordinates $(t,r,\vartheta^1,\vartheta^2)$ is an element of $\mathcal{O}_2^\infty(r^{-k})$, for a suitably chosen $k\in\mathbb{N}$. We now define the function spaces $\mathcal{O}_m^\infty(r^k)$.

\begin{defn}[Asymptotic expansions]\label{def:expansion}
We say a function $f:[R,\infty)\times\mathbb{S}^2\to\mathbb{R}, (r,\vartheta^1,\vartheta^2)\mapsto f(r,\vartheta^1,\vartheta^2)$ belongs to the class $\mathcal{O}^\infty_m(r^k)$, for a fixed $k\in\mathbb{Z}$, $m\in\mathbb{N}$, if there exist $\mathrm{C}^m$-functions $f^l:\mathbb{S}^2\to\mathbb{R}$, $l\in\mathbb{N}$ such that: 
\begin{subequations}
\begin{gather}
\label{eq:defn:expansion:sim}
f(r,\vartheta^1,\vartheta^2)\sim \sum_{l=0}^\infty f^l(\vartheta^1,\vartheta^2) r^{k-l}\,,\\
\partial_{\vartheta}^\alpha f(r,\vartheta^1,\vartheta^2)\sim \sum_{l=0}^\infty \partial_{\vartheta}^\alpha f^l(\vartheta^1,\vartheta^2) r^{k-l}\,,\\\text{where} \quad\partial^\alpha_\vartheta:=\frac{\partial ^{\alpha_1}}{\partial\vartheta^{A_1}}\ldots\frac{\partial ^{\alpha_n}}{\partial\vartheta^{A_n}}\,,\alpha_1+\ldots+\alpha_n=m\,,A_j=1,2,\notag\\
\partial_r^n f(r,\vartheta^1,\vartheta^2)\sim \sum_{l=0}^\infty \prod_{j=0}^{n-1}(k-l-j)f^l(\vartheta^1,\vartheta^2) r^{k-l-n}\,,\ 1\leq n\leq m\,.
\end{gather}
\end{subequations}
Analogously we define the class $\mathcal{O}^\infty_m(r^k)$ for functions $f:\mathbb{R}\times [R,\infty)\times \mathbb{S}^2\to\mathbb{R}$, $(t,r,\vartheta^1,\vartheta^2)\mapsto f(t,r,\vartheta^1,\vartheta^2)$ (with each $f^l$ now depending on $(t,\vartheta^1,\vartheta^2)$) by  requiring in addition that: 
\begin{equation}
\partial_t^n\partial_\vartheta^\alpha f(t,r,\vartheta^1,\vartheta^2)\sim \sum_{l=0}^\infty \partial_t^n\partial_\vartheta^\alpha f^l(t, \vartheta^1,\vartheta^2) r^{k-l}\,,\text{whenever } n+\vert\alpha\rvert\leq m\,.
\end{equation}
Here  $f\sim\sum_{l=0}^\infty f^l r^{k-l} $ means that for all $N\in\mathbb{N}$ there exists a $C_N>0$ so that 
\begin{equation}
\lvert f- \sum_{l=0}^N f^lr^{k-l}\rvert \leq C_N r^{k-N-1}\,;
\end{equation}
the above is assumed to hold for all values of $r\geq R$, $(\vartheta^1,\vartheta^2)\in\mathbb{S}^2$ (and $t\in\mathbb{R}$).
\end{defn}

The following smoothness properties at null infinity are essential for the argument of Section~\ref{sec:vanishing}.
In the time-periodic setting these smoothness properties are automatically inherited from the regularity in the fundamental domain.

Consider a $p$-covariant tensorfield on $S_{u,s}$ as in the previous section. Let us suppose that $w$ is a geometric component of the null decomposition recalled in Section~\ref{sec:null:structure}, either of the curvature, the connection, or any derivative thereof, and suppose that it is already known that
\begin{equation}\label{eq:smoothness:vanishing}
  \limd \Phi_u^\ast(r^{q} w)=0\,.
\end{equation}

\begin{defn}[Smoothness at null infinity] \label{def:smooth:null:infinity}
  An asymptotically flat solution to the vacuum equations is called \emph{smooth at null infinity} if for any geometric quantity $w$ that vanishes in the limit to a given order $q$ in the sense of \eqref{eq:smoothness:vanishing}, we have 
\begin{equation}\label{eq:smoothness:r}
  \Phi_{u,s}^\ast(r^{q}w)=\mathcal{O}(r^{-1})\,,
\end{equation}
and the limit of $w$ exists to order $q+1$, in the sense that
\begin{equation}\label{eq:smoothness:L}
  \partial_s\Phi_{u,s}^\ast (r^{q+1}w)=\Phi_{u,s}^\ast\mathcal{L}\!\!\!/{}_L(r^{q+1}w)=\mathcal{O}(r^{-2})\,.
\end{equation}
Moreover, as part of the smoothness property, we require a mild angular regularity assumption that states that if the limit of a geometric quantity vanishes in the sense of \eqref{eq:smoothness:vanishing} then so do its angular derivatives:
\begin{equation}
  \label{eq:smoothness:angular}
  \limd \nablasc\Phi_u^\ast(r^q w)=\limd \Phi_u^\ast (r^{q}\nablas w)=0\,.
\end{equation}
\end{defn}

\begin{lemma}
  Let $(\mathcal{M},g)$ an asymptotically flat space-time, which is time-periodic and $\Omega\subset\mathcal{M}$ a fundamental domain.
  If $g$ is regular in $\Omega$, then $(\mathcal{M},g)$ is smooth at future null infinity.
\end{lemma}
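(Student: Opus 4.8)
The strategy is to reduce the assertion of Definition~\ref{def:smooth:null:infinity} to the single statement that, in the null-adapted coordinates $(u,s,\vartheta^1,\vartheta^2)$ of Section~\ref{sec:gauge}, every component of the null decomposition of Section~\ref{sec:null:structure} admits a genuine asymptotic expansion in integer inverse powers of $r$ (with no logarithmic terms); the three defining conditions \eqref{eq:smoothness:r}, \eqref{eq:smoothness:L}, \eqref{eq:smoothness:angular} will then follow \emph{formally} from the shape of such an expansion. The first task is therefore to carry the expansion \eqref{eq:metric:expansion}, valid in the initial-data coordinates $(t,r,\vartheta^A)$, over to the coordinates in which $g$ takes the form \eqref{eq:gauge:metric}. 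This change of variables is fixed by the eikonal equation $g^{\alpha\beta}\partial_\alpha u\,\partial_\beta u=0$, the geodesic propagation $\nabla_LL=0$ with $L=\partial_s$, and the normalization $s\rvert_{S_u^\ast}=r_u^\ast$. These are transport equations along $L$ whose coefficients are built from the (expandable) metric; I would solve them order by order in $1/r$, using $\partial r/\partial s=\frac{r}{2}\overline{\tr\chi}\to1$ to see that $s$ and the area radius $r$ agree to leading order, so that $\partial_s$ acts as $\partial_r$ to top order on $r$-expansions. Since the input metric is a \emph{pure-power} expansion by Definition~\ref{def:regular}, and solving these equations involves only products and $r$-integrations of such expansions, no logarithmic terms are generated, and one concludes that the components $l,b^a,\gs_{ab}$ of \eqref{eq:gauge:metric} lie in the class $\mathcal{O}^\infty_m$ of Definition~\ref{def:expansion}.

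Next I would propagate this to all geometric quantities. Every connection coefficient ($\chi,\chib,\zeta,\omegab,\lambda$) and curvature component ($\alpha,\beta,\rho,\sigma,\betab,\alphab$), as well as the frame $(L,\Lb,E_A)$ itself, is an algebraic and first- or second-order differential expression in $g$. The class $\mathcal{O}^\infty_m$ is closed under products, under inversion of quantities with non-vanishing leading coefficient (needed to normalize the frame and to invert $\gs$), and under the coordinate derivatives $\partial_r,\partial_t,\partial_\vartheta$ up to the allotted order; hence each pullback $\Phi_{u,s}^\ast(r^pw)$ inherits a pure-power expansion $\sim\sum_l w^l(u,\xi)\,r^{-l}$, the finite number of derivatives demanded by Definition~\ref{def:smooth:null:infinity} being absorbed by the regularity index of $\mathcal{O}^\infty_m$. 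Equivalently, one may generate these expansions order by order directly from the transport form of the null structure and Bianchi equations of Section~\ref{sec:null:structure}, integrating along $L=\partial_s$ from data on $C_\ast^-$.

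With such an expansion in hand the three properties are immediate. The hypothesis \eqref{eq:smoothness:vanishing} states exactly that the leading coefficient of $r^qw$ vanishes, so $\Phi_{u,s}^\ast(r^qw)\sim w^1r^{-1}+\cdots=\mathcal{O}(r^{-1})$, which is \eqref{eq:smoothness:r}; applying the $L$-derivative $\partial_s$, which by the previous step lowers the power of $r$ by one, gives \eqref{eq:smoothness:L}; and since by Definition~\ref{def:expansion} the $\vartheta$-derivatives of an expansion are the expansion of the $\vartheta$-derivatives, the vanishing of the leading coefficient persists under $\nablasc$ (up to lower-order corrections tending to those of the round metric $\gammao$), which yields \eqref{eq:smoothness:angular}.

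The main obstacle is the first step. Establishing that the nonlinear passage to the optical coordinates $(u,s,\vartheta^A)$ respects the pure-power class $\mathcal{O}^\infty_m$ is delicate: one must track how the $t$-dependence of the initial-data coefficients reorganizes into $u$-dependence along the outgoing cones (where $u\approx t-r$), verify that the order-by-order solution of the eikonal and geodesic equations closes without producing resonant logarithmic terms, and reconcile the coordinate angular derivative $\partial_\vartheta$ with the intrinsic operators $\nablas,\divsc$ and their round-sphere limits. Once this bookkeeping is in place, the remaining steps are essentially formal.
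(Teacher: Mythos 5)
There is a genuine gap: your argument never invokes the time-periodicity of $(\mathcal{M},g)$, and that hypothesis is indispensable here. Definition~\ref{def:regular} places the expansion \eqref{eq:metric:expansion} only in the fundamental domain $\Omega$, which is a slab of bounded extent in the coordinate $t$ (bounded by $\Sigma_1$ and $\Sigma_2=\varphi(\Sigma_1)$). The quantities controlled in Def.~\ref{def:smooth:null:infinity} are limits taken along outgoing null generators of $C_u$ at fixed $u$, along which $t\approx u+r\to\infty$; such a generator exits $\Omega$ after finite affine parameter, so the coefficients $f^l(t,\vartheta^1,\vartheta^2)$ of your expansion would have to be evaluated where the hypothesis gives no information at all. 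You brush against this when you worry about ``how the $t$-dependence of the initial-data coefficients reorganizes into $u$-dependence along the outgoing cones,'' but the resolution is not bookkeeping: it is the discrete isometry. The paper's proof applies $\varphi^{(n)}$ to each point $\gamma(s)$ of a generator so as to land back in $\Omega$, turning $\gamma$ into a ``broken null curve'' in $\Omega$ (cf.~Fig.~\ref{fig:domain:data}); since the geometric quantities of Section~\ref{sec:null:structure} are $\varphi$-invariant, their asymptotics can be read off there. The only quantitative input is that the affine parameter $s$ is comparable to $r$, which follows from $\limd L\cdot r=\limd\tfrac{r}{2}\overline{\tr\chi}=1$, so that the image of $\gamma(s)$ lies in $\{c^{-1}s\leq r\leq cs\}\cap\Omega$, where \eqref{eq:metric:expansion} applies.

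Beyond this, your program of re-deriving a full pure-power expansion in the optical coordinates $(u,s,\vartheta^A)$ by solving the eikonal and geodesic equations order by order is far heavier than what the lemma requires: the paper does not perform this transformation, and only needs $s\sim r$ together with the definition of $\mathcal{O}^\infty_2(r^{-k})$ to obtain \eqref{eq:smoothness:r}--\eqref{eq:smoothness:angular}. Your own closing paragraph concedes that this first step is not closed (the absence of resonant logarithms in the order-by-order integration is asserted, not proved). But even granting all of that, the decisive missing idea is the use of $\varphi$ to fold the null generators back into the fundamental domain; without it the regularity hypothesis is never brought to bear on the region where the limits are taken.
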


\begin{proof}
  Let $s\mapsto \gamma(s)$ be an affinely parametrized null generator of a null hypersurface $C_u$. We show $s$ is comparable to $r$ in the fundamental domain. The regularity statement then follows immediately from the Definition~\ref{def:regular}. W.l.o.g.~we can assume that $\gamma(s)\in S_{u,s}$. By time-periodicity there exists $n\in\mathbb{Z}$ such that $\varphi^{(n)}(\gamma(s))\in\Omega$. This defines a map $\phi:C_u\to\Omega$, the image of $\gamma$ thus being a ``broken null curve'' in $\Omega$, c.f.~Fig.~\ref{fig:domain:data}. Since by \eqref{eq:asymptotics:trchi},
  \begin{equation}
    \limd L\cdot r=\limd \frac{r}{2}\overline{\tr\chi}=1
  \end{equation}
  and by definition $L=\dot{\gamma}$, we obtain that there exists a constant $c$ such that $\phi(\gamma(s))\in\{(t,r,\vartheta^1,\vartheta^2)\in\Omega:c^{-1}s\leq r\leq c s\}$. Then the limits in Def.~\ref{def:smooth:null:infinity} follow from \eqref{eq:metric:ginfty} with the components of $g^\infty$ in $\mathcal{O}_2^\infty(r^{-k})$ as defined in Def.~\ref{def:expansion}. 
\end{proof}

\begin{rmk}
In cases where the geometric quantity $w$ satisfies a propagation equation in the outgoing null direction, the existence of the limit to next order can in fact be inferred if the forcing terms in the o.d.e. are known to be integrable; see Lemma~\ref{lemma:smoothness:ode} which we shall employ frequently below. However, for quantities where no such equation is available from the vacuum equations (such as the curvature component $\alpha$), this a genuine smoothness assumption on the spacetime geometry at infinity. 
\end{rmk}

\begin{lemma}
  \label{lemma:smoothness:ode}
  Let $f:[s_0,\infty)\to\mathbb{R}$ be a bounded function solving the o.d.e.
  \begin{equation}
    \partial_sf(s)-\frac{k}{s}f(s)=g(s)
  \end{equation}
  where $k>0$, and $g=\mathcal{O}(s^{-2})$. Then
  \begin{equation}
    f=\mathcal{O}(s^{-1})\,.
  \end{equation}
\end{lemma}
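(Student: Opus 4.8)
The plan is to solve the ODE explicitly via an integrating factor and then estimate the resulting integral using the hypotheses that $f$ is bounded and $g=\mathcal{O}(s^{-2})$. First I would rewrite the equation $\partial_s f - \tfrac{k}{s}f = g$ in the form $\partial_s\bigl(s^{-k}f\bigr)=s^{-k}g(s)$, since the integrating factor for $\partial_s - \tfrac{k}{s}$ is $s^{-k}$ (because $\partial_s(s^{-k}) = -k s^{-k-1} = -\tfrac{k}{s}s^{-k}$). This converts the problem into controlling $s^{-k}f(s) = C + \int s^{-k}g$ for an appropriate constant of integration.

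The subtlety, and what I expect to be the main obstacle, is the choice of base point for integration and the determination of the constant $C$. Since $k>0$, the factor $s^{-k}$ \emph{decays}, so integrating forward from $s_0$ would force us to track how the constant $s_0^{-k}f(s_0)$ interacts with the integral, and the homogeneous solution $s^k$ \emph{grows}; this is exactly the direction in which the boundedness assumption on $f$ must be used to kill the growing mode. The clean way to exploit boundedness is to integrate \emph{from infinity} rather than from $s_0$. Concretely, I would integrate $\partial_t\bigl(t^{-k}f(t)\bigr)=t^{-k}g(t)$ over $[s,\infty)$. The boundary term at infinity is $\lim_{t\to\infty} t^{-k}f(t)$, which vanishes because $f$ is bounded and $k>0$; hence
\begin{equation}
  -s^{-k}f(s) = \int_s^\infty t^{-k}g(t)\,\ud t\,,
\end{equation}
so that $f(s) = -s^k\int_s^\infty t^{-k}g(t)\,\ud t$.

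It then remains to estimate the integral. Using $|g(t)|\leq C_g t^{-2}$, I would bound
\begin{equation}
  \Bigl|\int_s^\infty t^{-k}g(t)\,\ud t\Bigr| \leq C_g\int_s^\infty t^{-k-2}\,\ud t = \frac{C_g}{k+1}\,s^{-k-1}\,,
\end{equation}
where the integral converges precisely because $k+2>1$. Multiplying by $s^k$ gives $|f(s)|\leq \tfrac{C_g}{k+1}\,s^{-1}$, which is the desired conclusion $f=\mathcal{O}(s^{-1})$. One technical point worth flagging is that one should first verify that the forward-integrated and backward-integrated solutions agree, i.e.\ that boundedness genuinely forces the growing homogeneous mode $s^k$ to be absent; the argument above builds this in automatically by integrating from infinity, but if instead one integrates from $s_0$ one must separately argue that the coefficient of $s^k$ must vanish (else $f$ would be unbounded), which is the only place where the hypothesis $k>0$ together with boundedness is essential.
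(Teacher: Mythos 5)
Your proposal is correct and follows essentially the same route as the paper: multiply by the integrating factor $s^{-k}$, integrate the resulting identity $\partial_s(s^{-k}f)=s^{-k}g=\mathcal{O}(s^{-k-2})$ using the boundedness of $f$ to eliminate the boundary contribution at infinity, and multiply back by $s^k$. The paper's proof is merely a terser version of the same argument; your explicit discussion of integrating from infinity to kill the growing homogeneous mode $s^k$ is exactly what the paper's phrase ``in view of the boundedness of $f$'' is standing in for.
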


\begin{proof}
  Multiplying the equation by $s^{-k}$ we derive
  \begin{equation}
    \partial_s(s^{-k}f)=s^{-k}g(s)=\mathcal{O}(s^{-k-2})
  \end{equation}
  which upon integrating yields in view of the boundedness of $f$,
  \begin{equation}
    \lvert s^{-k}f(s)\rvert\leq C s^{-k-1}
  \end{equation}
  for some constant $C>0$, proving the claim after multiplying by $s^k$.
\end{proof}

\section{Time-periodicity and stationarity to all orders at infinity}
\label{sec:vanishing}

In this section we will prove that under the assumption of time-periodicity the spacetimes described in Section~\ref{sec:asymptotics} are ``stationary to all orders'', i.e.~the deformation tensor of the time-like vectorfield constructed in Section~\ref{sec:gauge} vanishes to all orders at null infinity.

\begin{prop}
  \label{prop:time:periodic:killing:all:orders}
  Let $(\mathcal{M},g)$ be a solution to the vacuum equations satisfying the asymptotics of Section~\ref{sec:asymptotics:limits} towards future null infinity and the regularity assumptions of Definition~\ref{def:smooth:null:infinity}. If $(\mathcal{M},g)$ is time-periodic, then
  \begin{equation}\label{eq:prop:killing:all:orders}
    \limd r^k\mathcal{L}_Tg=0\qquad\forall k\in\mathbb{N}\cup\{0\}\,,
  \end{equation}
  where the time-like vectorfield $T$ is constructed as in Section~\ref{sec:gauge}.
\end{prop}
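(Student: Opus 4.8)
The plan is to prove the vanishing \eqref{eq:prop:killing:all:orders} by induction on the order $k$, exploiting that the deformation tensor $\pi := \mathcal{L}_T g$ has components governed by the null structure and Bianchi equations, and that time-periodicity forces the leading-order radiative quantity $\Xi$ to vanish. The base of the induction is the geometric content of Section~\ref{sec:asymptotics:limits}: I would first decompose $\pi$ into its null-frame components (projections of $\mathcal{L}_T g$ onto the frame $(L,\Lb,E_1,E_2)$) and, using Lemma~\ref{lemma:T:lead} together with $T=\frac12(L+\Lb)+\mathcal{O}(r^{-2})$ and the commutation relation \eqref{eq:L:T}, express each component in terms of the Ricci rotation coefficients $\chi,\chib,\zeta,\omegab,\lambda$ and their limits $\Sigma,\Xi,Z,H,\underline{H},\underline{N}$. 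Since $T=\partial_u$ and $[L,T]=0$, the $u$-derivatives appearing in the transport equations of Section~\ref{sec:asymptotics:limits} are precisely $\mathcal{L}_T$ acting on the limiting quantities, so the relevant pieces of $\pi$ are controlled by $\partial_u$ of the asymptotic data.

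The key leading-order input is time-periodicity. Because the isometry $\varphi$ acts as a translation $u\mapsto u+b$ on $\mathcal{I}^+$ and the Bondi mass $M(u)$ is monotone by \eqref{eq:bondi:mass:loss} while also periodic, $M$ must be constant; the mass-loss formula then forces
\begin{equation}
  \int_{\mathbb{S}^2}\lvert\Xi\rvert^2\,\dm{\gammao}=0\,,
\end{equation}
hence $\Xi\equiv 0$ on the relevant portion of null infinity. Feeding $\Xi=0$ into \eqref{eq:asymptotics:Sigma}, \eqref{eq:asymptotics:Hb}, \eqref{eq:asymptotics:Nb}, and \eqref{eq:asymptotics:Xi} kills the $u$-dependence of $\Sigma$, $\underline{H}$, $\underline{N}$, and of the leading curvature components $\underline{A},\Bb$: all leading asymptotic data become $u$-independent. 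Since $\mathcal{L}_T$ at leading order is exactly $\partial_u$ acting on these limits, this yields $\limd r^k\,\pi=0$ for the lowest value of $k$, establishing the base case.

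For the inductive step I would suppose $\limd r^{k'}\pi=0$ for all $k'<k$ and show it at order $k$. The mechanism is that each frame-component of $\pi$ satisfies, via the null structure/Bianchi equations written with Lie derivatives (the operators $D_L$ of \eqref{def:D}), a propagation equation along $L=\partial_s$ of the schematic form solved by Lemma~\ref{lemma:smoothness:ode}: $\partial_s f - (k/s) f = g$ with forcing $g$ built from lower-order ($<k$) components that vanish by the induction hypothesis and from $\partial_u$ of already-controlled quantities. The smoothness assumptions of Definition~\ref{def:smooth:null:infinity}, specifically \eqref{eq:smoothness:r}--\eqref{eq:smoothness:angular}, guarantee that once a quantity vanishes to order $q$ its $s$-derivative and angular derivatives are suitably small, so that the forcing terms are integrable and Lemma~\ref{lemma:smoothness:ode} upgrades the vanishing to the next order. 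Commuting $\partial_u=\mathcal{L}_T$ through these transport equations (permitted by $[L,T]=0$) converts the vanishing of $\pi$ to order $k-1$ into the vanishing of its $L$-derivative, and integrating from null infinity closes the induction.

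The main obstacle I expect is \emph{bookkeeping the curvature component $\alpha$ at each order}: as the remark following Definition~\ref{def:smooth:null:infinity} points out, $\alpha$ satisfies no useful outgoing propagation equation from the vacuum equations, so its order-by-order smoothness is a genuine assumption rather than a consequence of Lemma~\ref{lemma:smoothness:ode}. Handling $\alpha$ therefore requires either invoking the smoothness-at-null-infinity hypothesis directly or working through the structure equation \eqref{eq:structure:chih:L} to trade $\alpha$ for $\nablas_L\chih$; in either case one must verify carefully that the time-periodicity-induced vanishing of the leading data propagates to $\alpha$'s higher orders. A secondary difficulty is ensuring that the angular-derivative regularity \eqref{eq:smoothness:angular} is compatible with the Hodge systems \eqref{eq:asymptotics:Sigma}, \eqref{eq:asymptotics:zeta:Hodge} at every order, so that solving these elliptic systems on $\mathbb{S}^2$ does not introduce obstructions to the vanishing.
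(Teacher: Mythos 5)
Your base case and the overall shape of the argument (induction on the order, transport along $L$ plus Lemma~\ref{lemma:smoothness:ode}, smoothness at null infinity to control the forcing) match the paper's Proposition~\ref{prop:vanishing:induction}. But there is a genuine gap in your inductive step: you treat time-periodicity as a purely \emph{leading-order} input (forcing $\Xi=0$) and claim the induction then closes by commuting $\partial_u$ through outgoing transport equations. It does not. The $L$-transport mechanism of Lemma~\ref{lemma:smoothness:ode} improves the order of $D_T$ of the \emph{connection coefficients} and of the ingoing curvature components $\alphab,\betab$, but only \emph{given} the improved vanishing of $D_T(\rho,\sigma,\beta,\alpha)$ at the new order --- and these cannot be obtained from an $L$-transport equation, because the relevant Bianchi equations (e.g.\ \eqref{eq:bianchi:rho:L}, \eqref{eq:bianchi:beta:L}) couple $\partial_u\rho$ to $\partial_u\beta$ and $\partial_u\alpha$ at the \emph{same} order, so the forcing is not lower-order and the system does not decouple in the outgoing direction.

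The missing idea, which is the heart of the paper's induction (Step 3) and the point where time-periodicity is used at \emph{every} order, is to add each Bianchi equation to its conjugate (the $\Lb$-equation to the $L$-equation), producing via Lemma~\ref{lemma:T:lead} an equation for $2D_T$ of the rescaled curvature component, e.g.\ $2T(r^{3+k+1}\rho)=\ldots$; the inductive hypotheses then show the right-hand side is $u$-independent in the limit, whence $\limd\partial_u^2(r^{3+k+1}\rho)=0$. This makes the limit linear in $u$, as in \eqref{eq:B:linear}, and \emph{only then} does periodicity kill the linear coefficient and give $\limd r^{k+1}\partial_u(r^3\rho)=0$ (similarly for $\sigma$, $\beta$, $\alpha$; this is exactly the mechanism of \cite{tod:I} already visible at leading order in Prop.~\ref{prop:time:periodic:leading} for $A$ and $B$, which your base case also omits). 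Without this step the curvature components are never advanced to order $k+1$, so your induction hypothesis cannot be re-established and the argument stalls after one pass. Your concern about $\alpha$ is well placed but misdiagnosed: the paper handles $\alpha$ by the same conjugate-Bianchi-plus-periodicity device applied to \eqref{eq:bianchi:alpha}, invoking the smoothness assumption only to control $D_LD_T\alpha$, not by trading $\alpha$ for $\nablas_L\chih$.
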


In the spirit of \cite{tod:I} this is achieved by an induction argument, which had been employed therein in the case where the spacetime can be conformally compactified.
In Section~\ref{sec:vanishing:leading} we first discuss the general asymptotics of \emph{non-radiating} spacetimes, and prove the stationarity of time-periodic spacetimes ``to leading order''. 
In Section~\ref{sec:vanishing:induction} we then prove the main Proposition~\ref{prop:vanishing:induction} for time-periodic solutions showing that all components of the curvature and connection, and thus of the metric, are time-independent to all orders at future null infinity.

\subsection{Time-periodicity and leading order behaviour}
\label{sec:vanishing:leading}

In view of the Bondi mass loss formula \eqref{eq:bondi:mass:loss} which states in particular that the Bondi mass $M(u)$ is a \emph{monotone} function of time $u$ we have that for any time-periodic spacetime the Bondi mass is \emph{constant},
\begin{equation}\label{eq:vanishing:Xi}
  \partial_u M=0\,,\qquad \Xi=0\,.
\end{equation}
A dynamical spacetime with the property \eqref{eq:vanishing:Xi} \emph{(not necessarily time-periodic)} is called \emph{non-radiating}.
While such spacetimes are of separate interest, we state here a few consequences of \eqref{eq:vanishing:Xi} which will be used below.

\begin{prop}\label{prop:non:radiating}
  Consider a solution to the vacuum equations with the asymptotics towards future null infinity discussed in Section~\ref{sec:asymptotics:limits}.
  If the spacetime is non-radiating in the sense of \eqref{eq:vanishing:Xi} then
  \begin{gather*}
    \Ab=0\,,\qquad\Bb=0\,,\\
    \partial_u P=0\,,\qquad\partial_uQ=0\,,\qquad \partial_uZ=0\,,\qquad \partial_u\Sigma=0,
  \end{gather*}
  and
  \begin{equation*}
    \limd r\partial_u\gammao=0\,,\qquad  \partial_uH=0\,,\qquad  \partial_u\underline{H}=0\,.
  \end{equation*}
  Moreover, if the solution is smooth at future null infinity in sense of Definition~\ref{def:smooth:null:infinity} then in fact
  \begin{equation*}
    \lvert\alphab\rvert_{\gs}=\mathcal{O}(r^{-3})\,,\qquad \lvert\betab\rvert_{\gs}=\mathcal{O}(r^{-3})\,.
  \end{equation*}
\end{prop}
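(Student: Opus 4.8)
The plan is to read off most of the conclusions directly from the limiting relations already recorded in Section~\ref{sec:asymptotics:limits}, after inserting $\Xi=0$ as furnished by the non-radiating hypothesis \eqref{eq:vanishing:Xi}. Indeed \eqref{eq:asymptotics:Xi} gives $\Ab=-2\partial_u\Xi=0$ and $\Bb=\divsc\Xi=0$; \eqref{eq:asymptotics:Sigma} gives $\partial_u\Sigma=-\tfrac12\Xi=0$; \eqref{eq:asymptotics:Hb} gives $\partial_u\underline{H}=-\tfrac12\lvert\Xi\rvert^2=0$; and \eqref{eq:asymptotics:Nb} gives $\partial_u\Nb=-\tfrac14\lvert\Xi\rvert^2=0$. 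Finally $\partial_uH=0$ is immediate, since $H$ was already observed in Section~\ref{sec:asymptotics:limits} to be independent of $u$.

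It then remains to treat $\partial_uP$, $\partial_uQ$, $\partial_uZ$ and the metric variation. For $P$ and $Q$ I would pass to the limit in the incoming Bianchi equations \eqref{eq:bianchi:rho:Lb} and \eqref{eq:bianchi:sigma:Lb}: multiplying by $r^3$ and using $\Lb\sim 2\partial_u$ along $\mathcal{I}^+$ together with $\tr\chib\sim-2/r$, the left-hand sides converge to $2\partial_uP$ and $2\partial_uQ$ respectively, while on the right $\divs\betab\to\divsc\Bb$, $\curls\betab\to\curlsc\Bb$, the term $(\chih,\alphab)$ tends to a contraction of $\Sigma$ with $\Ab$, and all remaining terms ($(\zeta,\betab)$, $(\lambda,\beta)$, etc.) are quadratic in radiative quantities and decay too fast to survive. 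Since $\Ab=\Bb=0$, every surviving term vanishes and $\partial_uP=\partial_uQ=0$. For $Z$ I would differentiate the Hodge system \eqref{eq:asymptotics:zeta:Hodge} in $u$: with $\Xi=0$ it reduces to $\curlsc Z=Q$, $\divsc Z=\Nb+P$, and since $\divsc$, $\curlsc$ act on the fixed round sphere $(\mathbb{S}^2,\gammao)$ they commute with $\partial_u$, whence $\divsc\partial_uZ=\partial_u\Nb+\partial_uP=0$ and $\curlsc\partial_uZ=\partial_uQ=0$. As $\mathbb{S}^2$ carries no nontrivial harmonic $1$-forms, this elliptic Hodge system has trivial kernel, so $\partial_uZ=0$. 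The statement $\limd r\partial_u\gammao=0$ I would obtain from the first variation of the induced metric along $T=\tfrac12(L+\Lb)+\mathcal{O}(r^{-2})$ (Lemma~\ref{lemma:T:lead}), namely $\partial_u\gs\sim\chih+\chibh$ at the relevant order, whose leading trace-free part is carried by $\chibh$; the vanishing of its limit $\Xi$, together with $\partial_u\Sigma=0$, kills the $u$-variation of the normalized metric in the limit.

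For the improved decay under the smoothness hypothesis of Def.~\ref{def:smooth:null:infinity} the two components behave differently. For $\betab$ the generic rate is $\lvert\betab\rvert_{\gs}=\mathcal{O}(r^{-2})$ with leading limit $\Bb=0$, so a single application of \eqref{eq:smoothness:r} gains one power and yields $\lvert\betab\rvert_{\gs}=\mathcal{O}(r^{-3})$ at once. For $\alphab$ the generic rate is only $\mathcal{O}(r^{-1})$; vanishing of $\Ab$ first upgrades this to $\mathcal{O}(r^{-2})$ via \eqref{eq:smoothness:r}, but a second power is required. I would extract it from the outgoing Bianchi equation \eqref{eq:bianchi:alphab:L}: having already shown $\lvert\betab\rvert_{\gs}=\mathcal{O}(r^{-3})$, and noting that an angular derivative gains a power of $r^{-1}$ in the $\gs$-norm (as $\gs\sim r^2\gammao$), the term $\nablas\otimesh\betab$ is $\mathcal{O}(r^{-4})$; moreover $\chibh=\mathcal{O}(r^{-2})$ (since $\Xi=0$) renders the curvature terms $\chibh\rho$, $\chibh\sigma$ of order $\mathcal{O}(r^{-5})$, so the entire right-hand side is $\mathcal{O}(r^{-4})$. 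Integrating the transport equation $\nablas_L\alphab+\tfrac12\tr\chi\alphab=\mathcal{O}(r^{-4})$ with integrating factor $r$, and using that $r\alphab\to0$ at infinity (as $\alphab=\mathcal{O}(r^{-2})$) to fix the homogeneous mode to zero, then upgrades $\alphab$ to $\mathcal{O}(r^{-3})$.

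I expect this last step to be the only real difficulty: the $\alphab$ estimate hinges on the precise order-counting in \eqref{eq:bianchi:alphab:L}—in particular on the power gained by the angular derivative and on the improvement $\chibh=\mathcal{O}(r^{-2})$—and on carrying out the transport integration with the correct (backward, from infinity) normalization, in the spirit of Lemma~\ref{lemma:smoothness:ode}. The remaining bookkeeping, common to all the steps above, is the careful tracking of the frame rescalings implicit in the pull-backs $\Phi_u^\ast$, which relate the $\lvert\cdot\rvert_{\gs}$ decay rates to the limits on $(\mathbb{S}^2,\gammao)$.
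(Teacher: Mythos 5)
Your proposal is correct and follows essentially the same strategy as the paper: direct read\mbox{-}offs from the limiting relations of Section~\ref{sec:asymptotics:limits} once $\Xi=0$, the limit of the Bianchi equations for $\partial_uP,\partial_uQ$, the div-curl system on $\mathbb{S}^2$ for $\partial_uZ$, and an outgoing transport equation integrated from infinity for the extra order on $\alphab$. Your integration of $\nablas_L\alphab+\tfrac12\tr\chi\alphab=\mathcal{O}(r^{-4})$ with integrating factor $r$ and the boundary condition $r\alphab\to0$ is the same computation the paper packages as Lemma~\ref{lemma:smoothness:ode} applied to $r^2\alphab_{AB}$ (where the effective coefficient $\overline{\tr\chi}-\tfrac12\tr\chi\sim 1/s$ has the right sign); note that the improvement $\lvert\chibh\rvert_{\gs}=\mathcal{O}(r^{-2})$ you invoke is not actually needed there, since $\chibh\rho=\mathcal{O}(r^{-1})\cdot\mathcal{O}(r^{-3})$ already suffices. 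The one point that needs repair is your derivation of $\partial_uP=\partial_uQ=0$ from the incoming equations alone via ``$\Lb\sim2\partial_u$'': in the gauge of Section~\ref{sec:gauge} one has $\Lb=2\partial_u-l\,\partial_s+2b^a\partial_{\vartheta^a}$ with $l\to1$, so $\Lb\to2\partial_u-\partial_s$, and concluding $\limd\Lb(r^3\rho)=2\partial_uP$ requires the additional fact that $\limd\partial_s(r^3\rho)=0$. This is true, but it is exactly the content of the outgoing Bianchi equation \eqref{eq:bianchi:rho:L}; the paper avoids the issue by \emph{adding} \eqref{eq:bianchi:rho:L} and \eqref{eq:bianchi:rho:Lb} so that only the exactly tangential derivative $2T=L+\Lb$ (Lemma~\ref{lemma:T:lead}, Lemma~\ref{lemma:T:commute}) appears. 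So your route is fillable with an ingredient already at hand, but as written the identification is not justified.
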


\begin{proof}
It follows immediately from the asymptotic equations \eqref{eq:asymptotics:Xi} that the no-radiation condition \eqref{eq:vanishing:Xi} implies the vanishing of the leading order curvature components:
\begin{subequations}\label{eq:vanishing:AB}
\begin{gather}
  \limd\Phi_u^\ast r^{-1}\alphab=\Ab=-2\partial_u\Xi=0\\
  \limd\Phi_u^\ast r\betab=\Bb=\divsc\Xi=0
\end{gather}
\end{subequations}
Already the next curvature component in the hierarchy \eqref{eq:asymptotics:curvature}, the limit of $\rho$, cannot be expected to vanish identically\footnote{The value of $P=\lim r^3 \rho$ for the \emph{static} spherically symmetric Schwarzschild solution is $P=-2M$.}; however, it follows directly from the limit of the Bianchi equations that $P$, and $Q$ must be time-independent by virtue of \eqref{eq:vanishing:AB}.
Indeed, adding \eqref{eq:bianchi:rho:L}, \eqref{eq:bianchi:rho:Lb}, and multiplying by $r^3$, 
\begin{multline}
    r^3(L+\Lb)\rho+\frac{3}{2}r(\tr\chi+\tr\chib)r^2\rho=\divs r^3\beta-r^2(r\zeta,\beta)-\frac{1}{2}r^4(r^{-1}\chibh,\alpha)\\
    -\divs r^3\betab-r^2(r\zeta,\betab)-2r^2(\lambda,r\beta)-\frac{1}{2}r^4(\chih,r^{-1}\alphab)
\end{multline}
we obtain after taking the limit using the assumptions of Section~\ref{sec:asymptotics:limits}, in particular \eqref{eq:asymptotics:trchi}, \eqref{eq:asymptotics:chih}, and \eqref{eq:asymptotics:curvature},
\begin{equation}\label{eq:DT:P}
   T P=-\frac{1}{2}\divsc\underline{B}-\frac{1}{4}(\Sigma,\underline{A})\,,
\end{equation}
where we have used the leading order identity $2T=\Lb+L$ of Lemma~\ref{lemma:T:lead}, and the result that $T$ commutes with $r$ to leading order, c.f.~Lemma~\ref{lemma:T:commute},
\begin{equation}
  \limd T\cdot r=  \limd \frac{r}{2}\frac{1}{2}\bigl(\overline{\tr\chi}+\overline{\tr\chib}\bigr)=0\,.
\end{equation}
Similarly, we obtain from \eqref{eq:bianchi:sigma:L} and \eqref{eq:bianchi:sigma:Lb} the limiting equation for $\sigma$,
\begin{equation}\label{eq:DT:Q}
  T Q=-\frac{1}{2}\curlsc\Bb-\frac{1}{4}\Sigma\wedge\Ab\,.
\end{equation}
Therefore, by \eqref{eq:vanishing:AB},
\begin{equation}\label{eq:vanishing:PQ}
  \partial_uP=0\,,\qquad\partial_uQ=0\,.
\end{equation}
Moreover, by \eqref{eq:vanishing:Xi} the Hodge system for the limit of the torsion \eqref{eq:asymptotics:zeta:Hodge} simplifies to
\begin{equation}\label{eq:Hodge:Z:Xi}
      \curlsc Z = Q\qquad    \divsc Z=\underline{N}+P
\end{equation}
and thus implies by \eqref{eq:vanishing:PQ}, and \eqref{eq:asymptotics:Nb}, that
\begin{equation}
      \curlsc \partial_uZ = \partial_uQ=0\qquad    \divsc \partial_uZ=-\frac{1}{4}\lvert\Xi\rvert^2+\partial_uP=0\,,
\end{equation}
which yields, this being a div-curl system on the unit sphere\footnote{We appeal here to the basic $\mathrm{L}^2$ theory of Hodge systems on compact $2$-dimensional manifolds, as presented e.g. in Chapter 2 of \cite{christodoulou:I}, c.f.~Lemma~2.2.2 therein.}, the time-independence of the torsion
\begin{equation}\label{eq:vanishing:Z}
  \partial_uZ=0\,.
\end{equation}  

Finally as a direct consequence of $\Xi=0$ the metric itself is time-independent, in fact beyond the leading order, 
\begin{equation}
  \limd r \partial_u\gammao=\limd \Phi_u^\ast rD_T(r^{-2}\gs)=\frac{1}{2}\limd \Phi_u^\ast r^{-1}(\chib+\chi)=\Xi=0\,.
\end{equation}
Also, by \eqref{eq:asymptotics:Sigma},
\begin{equation}
    \partial_u\Sigma=-\frac{1}{2}\Xi=0\,,
\end{equation}
and moreover, 
using \eqref{eq:structure:trchi:Lb},\eqref{eq:structure:trchi:L} and \eqref{eq:structure:trchib:Lb}, \eqref{eq:structure:trchib:L},
or simply \eqref{eq:asymptotics:Hb}
we verify that
\begin{subequations}
\begin{gather}
  \frac{\partial H}{\partial u}=\limd r^2\partial_u\tr\chi=\frac{1}{2}\limd r^2\Lb\tr\chi+\frac{1}{2}\limd r^2 L\tr\chi=0\\
  \frac{\partial \underline{H}}{\partial u}=\limd r^2\partial_u\tr\chib=-\frac{1}{2}\lvert\Xi\rvert^2=0\,.
\end{gather}
\end{subequations}

While \eqref{eq:vanishing:AB} already proves, for smooth solutions, that
\begin{equation}
  \lvert\alphab\rvert_{\gs}=\mathcal{O}(r^{-2})\,,\qquad\lvert\betab\rvert_{\gs}=\mathcal{O}(r^{-3})\,,
\end{equation}
it remains to improve the order of vanishing of $\alphab$ beyond these leading order asymptotics.
We return to the Bianchi equations for $\alphab$, and use Lemma~\ref{lemma:T:Lie} to rewrite \eqref{eq:bianchi:alphab:L} as follows:
\begin{equation}
  D_L\alphab-\frac{1}{2}\tr\chi\alphab=-\nablas\otimesh\betab+\frac{1}{2}(\chih,\alphab)\gs+5\zeta\otimesh\betab-3\chibh\rho+3\ld\chibh\sigma
\end{equation}
Since, by angular regularity, 
\begin{subequations}\label{eq:vanishing:Ab:angular}
\begin{gather}
  \limd \Phi_u^\ast(r\nablas\otimesh\betab)=\nablasc\otimesh \Bb=0\\
  \limd \Phi_u^\ast\Bigl( r(\chih,\alphab)\gs\Bigr)=(\Sigma,\Ab)\gammao=0
\end{gather}
\end{subequations}
and since in view of the smoothness assumptions the vanishing of the limits imply that the tensors on the left hand side of \eqref{eq:vanishing:Ab:angular} are in fact order $\mathcal{O}(r^{-1})$,
we obtain the propagation equation
\begin{equation}
    \partial_s\Phi_{u,s}^\ast\alphab-\frac{1}{2}\tr\chi\Phi_{u,s}^\ast\alphab=\mathcal{O}(r^{-2})\\
\end{equation}
which implies using \eqref{eq:vanishing:AB}, and Lemma~\ref{lemma:smoothness:ode}, that
\begin{equation}
  \limd\Phi_u^\ast\alphab=0\,,\qquad\lvert\alphab\rvert_{\gs}=\mathcal{O}(r^{-3})\,.
\end{equation}
This completes the proof of the proposition.
\end{proof}

\begin{rmk}
  The time-independence expressed in \eqref{eq:vanishing:PQ} and \eqref{eq:vanishing:Z} are related to the preservation of linear and angular momentum in the system, c.f.~\cite{rizzi:I}.
\end{rmk}

In Proposition~\ref{prop:non:radiating} we have thus discussed general properties of non-radiating solutions to the vacuum equations.
We now proceed under the \emph{stronger} assumption of \emph{time-periodicity}.
According to Definition~\ref{def:time-periodic} a time-periodic spacetime is endowed with a discrete isometry $\varphi$ which extends as a translation to future null infinity. Therefore any geometric quantity $w$, i.e.~tensors derived from the metric, such as components of the curvature tensor, which have a limit at future null infinity,
\begin{equation}
  \limd \Phi_u^\ast w=W(u,\xi)\,,\qquad u\in\mathbb{R}\,,\quad\xi\in\mathbb{S}^2\,,
\end{equation}
are periodic functions of retarded time $u$, if the spacetime is assumed to be time-periodic in sense of Definition~\ref{def:time-periodic}.

As a first consequence of the stronger time-periodicity assumption we state here the time-independence of the lower order curvature components $\alpha$, $\beta$, which cannot be inferred from the weaker no-radiation condition alone. The argument exploits time-periodicity as in \cite{tod:I}.

\begin{prop}\label{prop:time:periodic:leading}
  Consider a solution to the vacuum equations as in Prop.~\ref{prop:non:radiating}, which in particular satisfies the smoothness assumptions of Def.~\ref{def:smooth:null:infinity}. If the spacetime is \emph{time-periodic}  then in addition to the conclusions of Prop.~\ref{prop:non:radiating} we have that
  \begin{gather*}
    Q=0\,,\qquad P=-2M\,,\\
    \partial_u A=\limd \partial_u\Phi_u^\ast(r^3\alpha)=0\,,\qquad\partial_u B=\limd \partial_u\Phi_u^\ast r^3\beta=0\,.
  \end{gather*}
Moreover, we have in fact
\begin{equation*}
  \lvert\betab\rvert_{\gs}=\mathcal{O}(r^{-4})\,.
\end{equation*}
\end{prop}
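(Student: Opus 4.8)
The engine of the proof is the observation that, because $\varphi$ restricts to the translation $u\mapsto u+b$ on $\mathcal{I}^+$ and is an isometry, every geometric limit $\limd\Phi_u^\ast w=W(u,\xi)$ is a \emph{periodic}, hence bounded, function of the retarded time $u$. Consequently a limiting quantity whose $u$-derivative is shown to be \emph{independent} of $u$ can exhibit no secular (linear) growth and must therefore have vanishing $u$-derivative. This ``no-secular-growth'' principle is the only place where time-periodicity enters, and it serves to upgrade the time-\emph{independence} already obtained for non-radiating spacetimes in Proposition~\ref{prop:non:radiating} (namely $\Xi=0$, $\Ab=\Bb=0$, and $\partial_uP=\partial_uQ=\partial_uZ=\partial_u\Sigma=0$) to the sharper identities claimed.

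I would first treat $\partial_uB$ together with $P$ and $Q$. Adding the Bianchi equations \eqref{eq:bianchi:beta:L} and \eqref{eq:bianchi:beta:Lb}, using the leading-order identity $2T=L+\Lb$ of Lemma~\ref{lemma:T:lead}, and passing to the limit with the weight appropriate to $B=\limd\Phi_u^\ast r^3\beta$ exactly as in the derivation of \eqref{eq:DT:P}, all quadratic terms and the $\divs\alpha$-term fall to lower order, and the leading identity collapses to $2\,\partial_uB=\nablasc P+\ld\nablasc Q$. Since $P$ and $Q$ are $u$-independent by Proposition~\ref{prop:non:radiating}, the right-hand side is $u$-independent, so the no-secular-growth principle forces simultaneously $\partial_uB=0$ and $\nablasc P+\ld\nablasc Q=0$. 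Applying $\divsc$ and $\curlsc$ to the latter and using the identities $\curlsc\nablasc=\divsc\,\ld\nablasc=0$ and $\divsc\nablasc=\curlsc\,\ld\nablasc=\Delta$ on $\mathbb{S}^2$ gives $\Delta P=\Delta Q=0$, so that $P$ and $Q$ are \emph{constant} on the sphere. Integrating the reduced torsion system \eqref{eq:Hodge:Z:Xi} finally pins these constants: $\int_{\mathbb{S}^2}\curlsc Z\,\dm{\gammao}=0$ yields $Q=0$, while $\int_{\mathbb{S}^2}\divsc Z\,\dm{\gammao}=0$ combined with $\frac{1}{4\pi}\int_{\mathbb{S}^2}\Nb\,\dm{\gammao}=2M$ from \eqref{eq:asymptotics:Nb:average} yields $4\pi P+8\pi M=0$, that is $P=-2M$.

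Next I would turn to $\partial_uA$. The component $\alpha$ carries \emph{no} transport equation in the outgoing direction $L$, so I cannot form $\partial_uA$ directly from a propagation equation; instead I would use only \eqref{eq:bianchi:alpha} for $\nablas_{\Lb}\alpha$. Rewriting it via the projected Lie derivative (Lemma~\ref{lemma:T:Lie}) and $D_{\Lb}=2D_T-D_L+\ldots$, the smoothness assumption \eqref{eq:smoothness:L} makes the $D_L$-contribution of the weighted pullback $r^3\alpha$ vanish in the limit, so the left-hand side reduces to $2\,\partial_uA$ and \eqref{eq:bianchi:alpha} becomes $2\,\partial_uA=\nablasc\otimesh B-3\Sigma P-3\,\ld\Sigma Q$. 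Because I have \emph{already} shown $\partial_uB=0$ (so $B$ is $u$-independent) and $\Sigma,P,Q$ are $u$-independent, the right-hand side is again $u$-independent, whence the no-secular-growth principle gives $\partial_uA=0$. Here I would also note that $\partial_uA=\limd\partial_u\Phi_u^\ast(r^3\alpha)$ exists even though $\limd\Phi_u^\ast(r^3\alpha)$ itself may fail to (because of a logarithmic term), precisely since the coefficient of that logarithm is $u$-independent, cf.\ \cite{christodoulou:Ib} and the footnote to \eqref{eq:limit:alphabeta}; differentiating in $u$ removes it.

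The final assertion $\lvert\betab\rvert_{\gs}=\mathcal{O}(r^{-4})$ then improves the order $\mathcal{O}(r^{-3})$ of Proposition~\ref{prop:non:radiating}. Since $\nablasc P=\nablasc Q=0$ has just been established, the source terms $-\nablas\rho+\ld\nablas\sigma$ in the outgoing-null propagation equation \eqref{eq:bianchi:betab:L} lose their leading part, and feeding the improved source into Lemma~\ref{lemma:smoothness:ode} (exactly as $\alphab$ was treated at the end of Proposition~\ref{prop:non:radiating}) gains one power of $r$. The main obstacle throughout is the weight bookkeeping: correctly isolating the ``$\partial_u$ of the limit'' by disentangling the $L$- and $\Lb$-derivatives from the radial $r$-weights so that the scaling contributions carried by $\tr\chi$ and $\tr\chib$ are absorbed cleanly. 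For $\alpha$ this is compounded by the absence of a transport equation, which is exactly the point where the smoothness hypothesis of Definition~\ref{def:smooth:null:infinity}, and the $u$-independence of the logarithmic coefficients, become indispensable.
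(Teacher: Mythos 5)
Your proposal is correct and follows essentially the same route as the paper: the limiting Bianchi equations for $\beta$ and $\alpha$, a second $u$-derivative that vanishes by the $u$-independence of $P,Q,\Sigma,B$, periodicity ruling out the resulting linear growth, back-substitution to get $\nablasc P+\ld\nablasc Q=0$ and hence $P,Q$ constant, integration of the torsion Hodge system to pin $Q=0$, $P=-2M$, and finally the outgoing propagation equation for $\betab$ fed into Lemma~\ref{lemma:smoothness:ode}. The only cosmetic differences are that you make explicit the $\divsc/\curlsc$ step and the $D_{\Lb}=2D_T-D_L$ decomposition for $\alpha$, which the paper leaves implicit behind the smoothness assumptions of Definition~\ref{def:smooth:null:infinity}.
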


\begin{proof}

The Bianchi equations for $\beta$ are \eqref{eq:bianchi:beta:L} and \eqref{eq:bianchi:beta:Lb}, which we add and multiply by $r^3$ to obtain
\begin{multline}
    r^3(\nablas_L\beta+\nablas_{\Lb}\beta)+2r\tr\chi\, r^2\beta+r\tr\chib\,r^2\beta+r\omegab\,r^2\beta=r^3\divs\alpha+r^3\zeta^\sharp\cdot\alpha\\
  +r^3\nablas\rho+r^3\ld\nablas\sigma+3\zeta\, r^3\rho+3\ld\zeta \,r^3\sigma+2r^3\chih^\sharp\cdot\betab-r^3\lambda^\sharp\cdot\alpha\,.
\end{multline}
The limiting equation, under the assumptions of Section~\ref{sec:asymptotics:limits}, in particular \eqref{eq:asymptotics:alphabeta}, reads
\begin{equation}\label{eq:DT:B}
  D_TB=\nablasc P+\ld\nablasc Q+2\Sigma\cdot \Bb
\end{equation}
where $D$ denotes (the projection of) the Lie derivative, 
\begin{equation}
  D_TB=\limd\Phi_u^\ast\mathcal{L}\!\!\!/\,{}_T(r^3\beta)\,,
\end{equation}
and we have used Lemma~\ref{lemma:T:Lie} relating the covariant and Lie derivatives of a 1-form with respect to $T$.
While we have already shown that $\Bb=0$, and that $P(u,\xi)$, $Q(u,\xi)$ are time-independent, there is a priori no reason why $P$ and $Q$ should be spherically symmetric. The main idea to exploit time-periodicity --- as it is also done in \cite{tod:I} ---  is now to differentiate \eqref{eq:DT:B} another time, and use \eqref{eq:vanishing:PQ}:
\begin{equation}
  D_T^2B=\nablasc (T\cdot P)+\ld\nablasc (T\cdot Q)=0
\end{equation}
In fact, it is evident from \eqref{eq:DT:P} and \eqref{eq:DT:Q} that the last equality requires the vanishing of second order \emph{angular} derivatives of $\Bb$, and $\Ab$; this is ensured by $\Bb=\Ab=0$ in view of the smoothness assumptions imposed in Def.~\ref{def:smooth:null:infinity}.
Therefore, $B$ must be a linear function in $u$,
\begin{equation}\label{eq:B:linear}
  B(u_2,\xi)-B(u_1,\xi)=B_0(\xi) (u_2-u_1)\,;
\end{equation}
however, since $B$ is also periodic in $u$, we must have $B_0(\xi)=0$, hence
\begin{equation}
  \label{eq:vanishing:B}
  \partial_uB=0\,.
\end{equation}
We may now return to \eqref{eq:DT:B} to infer that
\begin{equation}
  \nablasc P+\ld\nablasc Q=0\,,
\end{equation}
which in view of the (Hodge) structure of this equation implies that $P$, and $Q$ are separately spherically symmetric.
But then integrating \eqref{eq:Hodge:Z:Xi} on the unit sphere implies
\begin{equation}\label{eq:vanishing:PQ:values}
  Q=\overline{Q}=0\qquad P=\overline{P}=-2M
\end{equation}
where $\overline{Q}$ denotes the average value of $Q$ on $\mathbb{S}^2$, and we used \eqref{eq:asymptotics:Nb:average}.
Similarly to \eqref{eq:DT:B} we derive the limiting equation for $\alpha$ from the Bianchi equation \eqref{eq:bianchi:alpha} using the asymptotics of Section~\ref{sec:asymptotics:limits}:
\begin{equation}\label{eq:DT:A}
  D_TA=\frac{1}{2}\nablasc\otimesh B-\frac{3}{2}\Sigma\,P-\frac{3}{2}\ld\Sigma Q
\end{equation}
Therefore, using \eqref{eq:vanishing:B},\eqref{eq:vanishing:PQ:values} and \eqref{eq:asymptotics:Sigma},
\begin{equation}
  D_T^2A=\frac{1}{2}\nablasc\otimesh D_TB-\frac{3}{2}M\Xi=0\,,
\end{equation}
where in view of \eqref{eq:DT:B}, \eqref{eq:vanishing:PQ:values} we invoke again the smoothness assumption of Def.~\ref{def:smooth:null:infinity} for the angular derivatives.
As in \eqref{eq:B:linear} it then follows from the time-periodicity assumption that
\begin{equation}
  \partial_uA=0\,.
\end{equation}

It remains to improve the order of vanishing of the curvature component $\betab$.
We return to the Bianchi equation \eqref{eq:bianchi:betab:L} which we may rewrite using Lemma~\ref{lemma:T:Lie} as follows:
\begin{equation}
    D_L\betab+\frac{1}{2}\tr\chi\betab=-\nablas\rho+\ld\nablas\sigma+\chih^\sharp\cdot\betab+3\zeta\rho-3\ld\zeta\sigma+2\chibh^\sharp\cdot\beta
\end{equation}
Since, by \eqref{eq:vanishing:PQ:values} and angular regularity,
\begin{subequations}
\begin{gather}
  \limd \Phi_u^\ast(r^3\nablas\rho,r^3\nablas\sigma)=(\nablasc P,\nablasc Q)=0\\
  \limd \Phi_u^\ast (r^3\chih^\sharp\cdot\betab)=\Sigma\cdot\Bb=0
\end{gather}
\end{subequations}
we obtain, in view of the smoothness assumptions, the propagation equation
\begin{equation}
    \partial_s\Phi_{u,s}^\ast(r^2\betab)-(\overline{\tr\chi}-\frac{1}{2}\tr\chi)\Phi_u^\ast(r^2\betab)=\mathcal{O}(r^{-2})
\end{equation}
which implies by Lemma~\ref{lemma:smoothness:ode},
\begin{equation}
  \limd \Phi_u^\ast(r^2\betab)=0\,,\qquad\lvert\betab\rvert_{\gs}=\mathcal{O}(r^{-4})\,.
\end{equation}
\end{proof}

\begin{rmk}
  One may proceed to derive a complete description of the leading order asymptotics in this setting. 
In fact, using the ideas in \cite{rizzi:I} one can exploit the freedom in the choice of $H$ (related to the gauge freedom in the choice of $S_0^\ast$) to arrange for $\Sigma=0$, since $Q=0$, and $\Xi=0$, c.f. (6-8) in \cite{rizzi:I}.  Returning to \eqref{eq:DT:A} we then obtain
\begin{equation}\label{eq:vanishing:B:conformal}
  \nablasc\otimesh B=0\,,
\end{equation}
which says that $B$ is a conformal Killing vectorfield on the unit sphere.
Moreover, by \eqref{eq:Hodge:Z:Xi} the torsion $Z$ is a gradient vectorfield on the sphere, here in view of \eqref{eq:asymptotics:Sigma},
\begin{equation}
  Z=\nablasc \phi^\prime\qquad \phi^\prime=\phi-\frac{1}{2}H
\end{equation}
where $\phi^\prime$ is a solution of vanishing mean to
\begin{equation}
  \stackrel{\circ}{\triangle\!\!\!\!/}\phi^\prime=\Nb-2M\,.
\end{equation}
\end{rmk}

\begin{rmk}
  We point out that the time-periodicity assumption is only used in the proof of Prop.~\ref{prop:time:periodic:leading} to deduce the time-independence of $B$, \eqref{eq:vanishing:B}, from the linearity of $B$ in $u$, \eqref{eq:B:linear}, and similarly for $A$.
  Clearly the time-periodicity assumption can be dropped if instead we assume the existence of the limits
  \begin{subequations}\label{eq:non:radiating:data:limits}
    \begin{gather}
    A(\xi)=\lim_{u\to-\infty}A(u,\xi)=\lim_{u\to-\infty}\limd \Phi_u^\ast(r^3\alpha)\\
    B(\xi)=\lim_{u\to-\infty}B(u,\xi)=\lim_{u\to-\infty}\limd \Phi_u^\ast(r^3\beta)\,.  
    \end{gather}
  \end{subequations}
  For then taking the limit $u_1\to-\infty$ in \eqref{eq:B:linear},
\begin{equation}
  B(u_2,\xi)-B(u_1,\xi)=B_0(\xi) (u_2-u_1)\,,
\end{equation}
forces $B_0(\xi)=0$, which then again implies $\partial_u B=0$; similarly for $A$.
The condition \eqref{eq:non:radiating:data:limits} requires in particular that \emph{on the level of the initial data}, i.e.~on a spacelike Cauchy hypersurface $\Sigma$,
  \begin{equation}\label{eq:non:radiating:data}
    \sup_{\Sigma} r^5\lvert \alpha \rvert_{\gs}<\infty\,,\qquad \sup_{\Sigma} r^4\lvert\beta\rvert<\infty\,.
  \end{equation}
This yields the following statement whose proof is identical to that given for Prop.~\ref{prop:time:periodic:leading}.
\end{rmk}

\begin{prop}\label{prop:non:radiating:data}
Let $(\mathcal{M},g)$ be a dynamical solution to the vacuum equations satisfying the asymptotics of Section~\ref{sec:asymptotics:limits} towards null infinity, and the smoothness assumptions of Def.~\ref{def:smooth:null:infinity}. If the spacetime arises from initial data such that \eqref{eq:non:radiating:data:limits} and \eqref{eq:non:radiating:data} hold, and if the spacetime is non-radiating, i.e.~the Bondi mass is constant along future null infinity, then all conclusions of Propositions~\ref{prop:non:radiating},~\ref{prop:time:periodic:leading} hold true, in particular
  \begin{gather*}
    \lvert\alphab\rvert_{\gs}=\mathcal{O}(r^{-3})\,,\qquad\lvert\betab\rvert_{\gs}=\mathcal{O}(r^{-4})\,,\quad\lvert\sigma\rvert=\mathcal{O}(r^{-4})\,,\\
    \partial_u\rho=0\,,\quad\partial_uB=0\,,\quad\partial_uA=0\,.
  \end{gather*}
\end{prop}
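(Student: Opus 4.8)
The plan is to run the proof of Proposition~\ref{prop:time:periodic:leading} essentially verbatim, the \emph{only} modification being that the single place where time-periodicity was invoked --- the passage from the linear-in-$u$ law for $B$ (and for $A$) to its $u$-independence --- is now supplied by the hypothesis \eqref{eq:non:radiating:data:limits} on the existence of the limits as $u\to-\infty$.

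First I would note that the non-radiating hypothesis $\partial_u M=0$ forces $\Xi=0$: by the Bondi mass loss formula \eqref{eq:bondi:mass:loss} the non-negative quantity $\int_{\mathbb{S}^2}\lvert\Xi\rvert^2\,\dm{\gammao}$ vanishes identically, so \eqref{eq:vanishing:Xi} holds. Since the proof of Proposition~\ref{prop:non:radiating} uses \emph{only} \eqref{eq:vanishing:Xi}, all of its conclusions follow without further input: $\Ab=\Bb=0$, the limits $P,Q,Z,\Sigma$ are $u$-independent, $\partial_uH=\partial_u\underline{H}=0$, and for smooth solutions $\lvert\alphab\rvert_{\gs}=\mathcal{O}(r^{-3})$ together with $\lvert\betab\rvert_{\gs}=\mathcal{O}(r^{-3})$.

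Next I would reproduce the limiting Bianchi equation \eqref{eq:DT:B} for $B$, differentiate it once more along $T$, and invoke $\partial_uP=\partial_uQ=0$ from \eqref{eq:vanishing:PQ} together with $\Ab=\Bb=0$ (the latter guaranteeing, via the smoothness assumption of Def.~\ref{def:smooth:null:infinity}, that the second-order angular derivatives appearing after differentiation also vanish) to conclude $D_T^2B=0$. This yields the linear law \eqref{eq:B:linear}. At this step, in place of periodicity, I would send $u_1\to-\infty$ and use the finite limit $B(\xi)=\lim_{u\to-\infty}B(u,\xi)$ from \eqref{eq:non:radiating:data:limits}: the left-hand side of \eqref{eq:B:linear} remains bounded while its right-hand side diverges unless the slope $B_0(\xi)$ vanishes, whence $\partial_uB=0$. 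Returning to \eqref{eq:DT:B} gives $\nablasc P+\ld\nablasc Q=0$, so $P$ and $Q$ are separately spherically symmetric; integrating the Hodge system \eqref{eq:Hodge:Z:Xi} over $\mathbb{S}^2$ and using \eqref{eq:asymptotics:Nb:average} then pins down $Q=0$ and $P=-2M$. The identical manoeuvre applied to \eqref{eq:DT:A} (reaching $D_T^2A=0$ and again sending $u_1\to-\infty$) gives $\partial_uA=0$. Once $Q=0$ is known, the smoothness property \eqref{eq:smoothness:r} applied with $q=3$ to $\sigma$ upgrades $\limd\Phi_u^\ast r^3\sigma=Q=0$ to $\lvert\sigma\rvert=\mathcal{O}(r^{-4})$, and finally I would improve $\betab$ exactly as at the end of Proposition~\ref{prop:time:periodic:leading}: rewrite \eqref{eq:bianchi:betab:L} through Lemma~\ref{lemma:T:Lie} as a transport equation whose forcing is $\mathcal{O}(r^{-2})$ (using that $P,Q$ are spherically symmetric and $\Bb=0$) and apply Lemma~\ref{lemma:smoothness:ode} to obtain $\lvert\betab\rvert_{\gs}=\mathcal{O}(r^{-4})$.

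Since both \eqref{eq:non:radiating:data:limits} and \eqref{eq:non:radiating:data} are taken as hypotheses, there is no genuinely new obstacle here: the algebra of the limiting Bianchi and null structure equations is unchanged and the entire argument is a transcription of Proposition~\ref{prop:time:periodic:leading}. The one conceptual point deserving care is that the limit-taking $u_1\to-\infty$ is legitimate --- i.e.\ that $A(u,\xi)$ and $B(u,\xi)$ are defined for all sufficiently negative $u$ and approach the finite limits posited in \eqref{eq:non:radiating:data:limits} --- which is precisely the role of the initial-data bounds \eqref{eq:non:radiating:data} ($\sup_\Sigma r^5\lvert\alpha\rvert_{\gs}<\infty$, $\sup_\Sigma r^4\lvert\beta\rvert<\infty$), replacing the periodicity that was available in Theorem~\ref{thm:stationary:periodic}.
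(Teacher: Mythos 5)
Your proposal is correct and follows the paper's own route exactly: the paper's proof consists precisely of the observation (made in the remark preceding the proposition) that time-periodicity enters the proof of Proposition~\ref{prop:time:periodic:leading} only to kill the slope $B_0(\xi)$ in the linear law \eqref{eq:B:linear}, and that sending $u_1\to-\infty$ under the finite-limit hypothesis \eqref{eq:non:radiating:data:limits} accomplishes the same, after which the argument is declared identical to that of Proposition~\ref{prop:time:periodic:leading}. Your additional remarks --- that Proposition~\ref{prop:non:radiating} needs only $\Xi=0$, and that $\lvert\sigma\rvert=\mathcal{O}(r^{-4})$ follows from $Q=0$ via the smoothness property --- are consistent with the paper and require no further comment.
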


Note that the time-independence of $A$ and $B$ imply that if $A$ and $B$ are finite on the level of the initial data, then these bounds are propagated along future null infinity.

\subsection{Induction}
\label{sec:vanishing:induction}

We have seen that as a consequence of time-periodicity all leading order asymptotic quantities are time-independent. Now we will prove that the argument can be iterated yielding the statement that the vectorfield $T$ is Killing to all orders at infinity.

\begin{prop}
  \label{prop:vanishing:induction}
  Let $(\mathcal{M},g)$ be a solution to the vacuum equations satisfying the asymptotics of Section~\ref{sec:asymptotics:limits} towards future null infinity and the regularity assumptions of Definition~\ref{def:smooth:null:infinity}. If $(\mathcal{M},g)$ is time-periodic, then
  \begin{subequations}\label{eq:vanishing:induction}
  \begin{gather}
    \limd  \Phi_u^\ast r^k D_T(\alphab,r^2\betab, r^{3}\rho,r^3\sigma, r^3\beta,r^3\alpha)=0\label{eq:prop:vanishing}\\
    \limd \Phi_u^\ast r^k D_T(r\zeta,\chih,r^2\tr\chi,r^{-1}\chibh,r\tr\chib,r\omegab,\lambda)=0\qquad\forall k\in\mathbb{N}\cup\{0\}\\
    \limd \Phi_u^\ast r^{k+1} D_T(r^{-2}\gs,r^{-1}l,r^{-1}b)=0\,.
  \end{gather}
\end{subequations}
\end{prop}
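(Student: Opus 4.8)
The plan is to argue by induction on the order $k$, upgrading the leading-order analysis of Propositions~\ref{prop:non:radiating} and~\ref{prop:time:periodic:leading} one power of $r$ at a time. The first observation is that, under the smoothness hypotheses of Definition~\ref{def:smooth:null:infinity} and the expansions of Definition~\ref{def:expansion}, each renormalized quantity $\kappa$ occurring in \eqref{eq:vanishing:induction} admits a full asymptotic expansion $\Phi_u^\ast\kappa\sim\sum_l\kappa^l(u,\xi)\,r^{-l}$, and that the claimed statement $\limd\Phi_u^\ast r^k D_T\kappa=0$ for every $k$ is equivalent to the time-independence $\partial_u\kappa^l=0$ of \emph{all} expansion coefficients (since $T=\partial_u$ and, by \eqref{eq:L:T}, $D_T$ passes to the coefficients of the limit). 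The base case $k=0$ is exactly what was proved in the previous subsection. For the inductive step I would assume $\partial_u\kappa^l=0$ for every $l\le k-1$ and every quantity in the list, and deduce the same at order $k$.

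The heart of the step reproduces, one order higher, the two devices of the leading-order proof. For each curvature component I would add its two Bianchi equations in the $L$- and $\Lb$-directions (for instance \eqref{eq:bianchi:beta:L} with \eqref{eq:bianchi:beta:Lb}, and \eqref{eq:bianchi:rho:L} with \eqref{eq:bianchi:rho:Lb}), convert the left-hand side into a $D_T$-derivative by means of $2T=L+\Lb+\mathcal{O}(r^{-2})$ from Lemma~\ref{lemma:T:lead} and the Lie-derivative formulae of Lemma~\ref{lemma:T:Lie}, multiply by the appropriate power of $r$, and extract the order-$k$ coefficient. This produces, schematically, $D_T\kappa^k=(\text{angular derivatives of order-}{\le}k\text{ coefficients})+(\text{products of strictly lower-order coefficients})$, precisely as \eqref{eq:DT:B} and \eqref{eq:DT:A} arise when $k=0$. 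By the inductive hypothesis the strictly lower-order factors are time-independent, so applying $D_T$ once more kills every source term assembled from them, leaving $D_T^2\kappa^k=0$; hence $\kappa^k$ is affine in $u$. This closure only works if the quantities are handled in a definite \emph{order} within the step --- mirroring the leading-order chain $\Xi\to(\Ab,\Bb)\to(P,Q)\to Z\to\Sigma\to B\to A$ --- arranged so that the residual same-order terms in each equation involve only coefficients already shown time-independent, or are removed by the div--curl (Hodge) structure on $\mathbb{S}^2$ exactly as in the torsion argument leading to \eqref{eq:vanishing:Z}.

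Once $\kappa^k$ is known to be affine in $u$, I would invoke time-periodicity as in \eqref{eq:B:linear}: because $\varphi$ extends to a translation of $\mathcal{I}^+$ (Definition~\ref{def:time-periodic}), every limit function is periodic in $u$, so an affine function is constant and $\partial_u\kappa^k=0$. To close the induction I must also generate the order-$(k+1)$ coefficient, i.e.\ improve the decay: here I would use the pure transport ($\nablas_L$) equations as ODEs in the affine parameter $s$, whose forcing terms now gain one order of decay from the time-independence just established together with the angular-regularity clause of Definition~\ref{def:smooth:null:infinity}, and apply Lemma~\ref{lemma:smoothness:ode}, exactly as $\lvert\alphab\rvert_{\gs}=\mathcal{O}(r^{-3})$ and $\lvert\betab\rvert_{\gs}=\mathcal{O}(r^{-4})$ were obtained. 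The connection quantities $(\zeta,\chih,\tr\chi,\chibh,\tr\chib,\omegab,\lambda)$ are treated in the same way through the null structure equations, and the metric components $(\gs,l,b)$ then follow algebraically from \eqref{eq:propagate:lb} and from the identity $\limd\Phi_u^\ast rD_T(r^{-2}\gs)=\frac{1}{2}\limd\Phi_u^\ast r^{-1}(\chi+\chib)$ used in Proposition~\ref{prop:non:radiating}. Conceptually, it is worth noting that the free vanishing $\Xi=0$ available at leading order from the Bondi mass-loss formula has no analogue at higher orders, so from $k\ge1$ onwards time-periodicity (through the affine-plus-periodic argument) is the sole engine driving the vanishing.

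I expect the principal obstacle to be the \emph{bookkeeping of the ordering}: one must verify that at every order the quantities can be sequenced so that the $D_T^2=0$ closure never depends circularly on a same-order coefficient not yet known to be time-independent, and that the associated Hodge systems on $\mathbb{S}^2$ remain solvable. Two further sources of difficulty are, first, the exceptional components $\alpha$ and $\alphab$, which admit a transport equation in only one null direction, so that the decay improvement via Lemma~\ref{lemma:smoothness:ode} must be arranged carefully and rests on a genuine smoothness assumption; and second, the fact that $T$ differs from $\frac{1}{2}(L+\Lb)$ by terms $\mathcal{O}(r^{-2})$ (governed by the ODEs \eqref{eq:propagate:Talpha}) and that $T$ fails to commute with $r$ beyond leading order (Lemma~\ref{lemma:T:commute}), so that both the subleading null-frame components of $T$ and the commutators $[D_T,\nablas_L]$ contribute corrections at each order which must themselves be folded into the inductive hypothesis.
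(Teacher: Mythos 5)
Your plan follows essentially the same route as the paper: induction on the order, with the connection coefficients (and then the metric components) upgraded one power of $r$ via the outgoing $\nablas_L$-transport equations and Lemma~\ref{lemma:smoothness:ode} using the inductive hypothesis on the curvature, and the curvature components $\rho,\sigma,\beta,\alpha$ handled by summing the $L$- and $\Lb$-Bianchi equations to get $D_T^2(\cdot)=0$ and then invoking periodicity to kill the affine growth, with $\alphab,\betab$ recovered by pure $s$-transport. One simplification you can make: since $[L,T]=0$ holds \emph{exactly} by construction (not just asymptotically), working with Lie derivatives along the coordinate/Jacobi frame removes the commutator corrections $[D_T,\nablas_L]$ you worry about, and the Hodge-system device is only needed at leading order, not in the inductive step.
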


Here $D$ denotes the Lie derivative defined in \eqref{def:D}, c.f.~Lemma~\ref{lemma:T:Lie}; see also the commutation relations of Lemma~\ref{lemma:T:commute} that are relevant for the proof.

\bigskip
\noindent\emph{Proof by induction.} 
We have by Propositions~\ref{prop:non:radiating},~\ref{prop:time:periodic:leading} that \eqref{eq:vanishing:induction} holds for $k=0$.
Let us now assume that \eqref{eq:vanishing:induction} holds for some $k\in\mathbb{N}$.
We will prove that \eqref{eq:vanishing:induction} then holds for $k+1$.

\smallskip
\noindent\textbf{Step 1a: Connection Coefficients.}
In a first step we shall show that the inductive assumptions on the curvatures $\alpha$, and $\beta$, namely
  \begin{equation}
    \limd \Phi_u^\ast r^k D_T(r^3\beta,r^3\alpha)=0\,, \label{induction:DT:alpha}
  \end{equation}   
allow us to improve the order of vanishing of the null second fundamental form $\chi$, and torsion $\zeta$,
\begin{equation}
    \limd \Phi_u^\ast r^{k+1} D_T(r\zeta,\chih,r^2\tr\chi)=0\,.
  \end{equation}

\subsubsection*{Propagation equations}

While the leading order argument presented in Section~\ref{sec:vanishing:leading} mainly draws consequences from the equations along future null infinity, the induction argument relies also on the use of propagation equations along the \emph{outgoing} null hypersurfaces $C_u^+$. 
These equations are expressed in Section~\ref{sec:null:structure} in terms of a null frame $(\Lb,L;E_A)$, where $E_A:A=1,2$ is an orthonormal frame on each sphere $S_{u,s}$ transported according to the equation
\begin{equation}
  \nabla_L E_A=-\zeta_A L\,.
\end{equation}
Since we are interested in ``time-derivatives'', namely Lie derivatives with respect to $T=\partial_u$, it shall be more convenient to use coordinate vectorfields, i.e.~Jacobi fields $X_a:a=1,2$ along the null generators of $C_u^+$:
\begin{equation}
  L=\frac{\partial}{\partial s}\qquad
  X_a=\frac{\partial}{\partial y^a}\qquad
  [L,X]=0
\end{equation}
The change from covariant to Lie derivatives is readily faciliated using Lemma~\ref{lemma:T:Lie}.


\subsubsection*{Null second fundamental form}

We can write \eqref{eq:structure:chih:L} using  Lemma~\ref{lemma:T:Lie} as
\begin{equation}
  D_L\chih=(\chih,\chih)\gs-\alpha\,.
\end{equation}
Since $L$ and $T$ commute by construction, we have in the above coordinates,
\begin{equation}
  \frac{\partial(\partial_u\chih_{ab})}{\partial s}=\lvert\chih\rvert^2\partial_u\gs_{ab}+2\chih^{cd}\partial_u\chih_{cd}\gs_{ab}-2(\partial_u\gs_{cd})\chih^{cf}\chih^{d}_{\phantom{d}f}\gs_{ab}-\partial_u\alpha_{ab}\,.
\end{equation}
Now by the inductive assumption \eqref{eq:vanishing:induction},
\begin{subequations}
\begin{gather}
  r^{k+1}\chih^{cd}\partial_u\chih_{cd}\gs_{ab}=\mathcal{O}(r^{-2})\\
  r^{k+1}(\lvert\chih\rvert^2 \partial_u\gs_{ab},\chih^{cf}\chih^{d}_{\phantom{d}f}\gs_{ab}\partial_u\gs_{cd})=\mathcal{O}(r^{-2})
\end{gather}
\end{subequations}
and in particular by \eqref{induction:DT:alpha},
\begin{equation}
     r^{k+1}\partial_u\alpha_{ab}=\mathcal{O}(r^{-2})\,,  
\end{equation}
we derive that
\begin{equation}
  \frac{\partial (r^{k+1}\partial_u\chih_{ab})}{\partial s}-\frac{k+1}{2}r^{k+1}\overline{\tr\chi}\partial_u\chih_{ab}=\mathcal{O}(r^{-2})
\end{equation}
which implies in view of Lemma~\ref{lemma:smoothness:ode} that
\begin{equation}
  \limd r^{k+1}\partial_u\chih_{ab}=0\,.\label{eq:induction:chih}
\end{equation}

Similarly for the trace part we have by the Raychadhuri equation \eqref{eq:structure:trchi:L},
\begin{equation}
    \partial_s\partial_u\tr\chi=-2\chih^{ab}\partial_u\chih_{ab}+(\partial_u \gs_{ab})\chih^{ac}\chih^{b}_{\phantom{b}c}-\tr\chi\partial_u\tr\chi
\end{equation}
  and the inductive assumptions,
  \begin{subequations}
  \begin{gather}
    r^{3}\chih^{ab}r^k\partial_u\chih_{ab}=\mathcal{O}(r^{-2})\\
    r^{k-1}(\partial_u \gs_{ab})r^4\chih^{ac}\chih^{b}_{\phantom{b}c}\,,
  \end{gather}    
  \end{subequations}
that 
\begin{equation}
  \frac{\partial (r^{k+3}\partial_u\tr\chi)}{\partial s}-\Bigl(\frac{k+3}{2}\overline{\tr\chi}-\tr\chi\Bigr)r^{k+3}\partial_u\tr\chi=\mathcal{O}(r^{-2})
\end{equation}
and hence
\begin{equation}
  \limd r^{k+3}\partial_u\tr\chi=0\,.\label{eq:induction:trchi}
\end{equation}

\subsubsection*{Torsion}
The equation \eqref{eq:structure:zeta:L} can be written using Lemma~\ref{lemma:T:Lie} as
\begin{equation}
  D_L\zeta+\chi^\sharp\cdot \zeta=-\beta
\end{equation}
We multiply by $r$ and decompose $\chi$ into its trace and trace-free part to obtain the equation that guarantees the existence of the limit \eqref{eq:asymptotics:zeta}:
\begin{equation}
  \partial_s(r\zeta_a)+\frac{1}{2}(\tr\chi-\overline{\tr\chi})r\zeta_a=-r\chih_a^b\zeta_b-r\beta_a
\end{equation}
Now differentiate, noting that $T$ and $L$ commute by construction, and multiply by $r^{k+1}$ to obtain,
\begin{multline}
  \frac{\partial( r^{k+1}\partial_u(r\zeta_a))}{\partial s}-\frac{k+1}{2}\overline{\tr\chi}\:r^{k+1}\partial_u(r\zeta_a)\\+\frac{1}{2}(\tr\chi-\overline{\tr\chi})r^{k+1}\partial_u(r\zeta_a)+\chih_a^br^{k+1}\partial_u(r\zeta_b)=\\=-\frac{1}{2}r^{k+1}\partial_u(\tr\chi-\overline{\tr\chi})r\zeta_a+r^{k+2}\partial_u\gs^{bc}\chih_{ab}\zeta_c-r^{k+1}(\partial_u\chih_a^b)r\zeta_b-r^{k+1}\partial_u(r\beta_a)\,.
\end{multline}
Therefore, in view of the inductive assumptions, and smoothness assumptions in the form,
\begin{subequations}
  \begin{gather}
    r^{k+1}\partial_u(r\zeta_b)=\mathcal{O}(1)\label{eq:zeta:nextorder}\\
    (r^{k+1}\partial_u\tr\chi\,,r^{k-3}\partial_u\gs_{ab},r^{k-1}\partial_u\chih_{ab}\,,r^{k+2}\partial_u\beta_a)=\mathcal{O}(r^{-2})\,,
  \end{gather}
\end{subequations}
the function defined by the left hand side of \eqref{eq:zeta:nextorder} is a solution to an o.d.e. satisfying the assumptions of Lemma~\ref{lemma:smoothness:ode}, which yields
\begin{equation}
  \limd r^{k+2}\partial_u\zeta_b=0\,.\label{eq:induction:torsion}
\end{equation}

\smallskip
\noindent\textbf{Step 1b:}
 In a next step we show that the inductive assumption on $\rho$, namely
  \begin{equation}
    \limd \Phi_u^\ast r^k D_T(r^3\rho)=0\,, \label{induction:DT:rho}
  \end{equation}   
allows us to improve the order of vanishing of the \emph{conjugate} null second fundamental form $\chib$,
\begin{equation}
    \limd \Phi_u^\ast r^{k+1} D_T(r^{-1}\chibh,r\tr\chib)=0\,.
  \end{equation}

\subsubsection*{Conjugate null second fundamental form}
The propagation equation \eqref{eq:structure:chibh:L} for $\chibh$ reads in view of Lemma~\ref{lemma:T:Lie},
\begin{equation}
      \partial_s\chibh_{ab}-\frac{1}{2}\tr\chi\chibh_{ab}=\chih^{cd}\chih_{cd}\gs_{ab}-(\nablas\otimesh\zeta)_{ab}-\frac{1}{2}\tr\chib\chih_{ab}+(\zeta\otimesh\zeta)_{ab}
\end{equation}
and thus
\begin{multline}
      \partial_s\partial_u\chibh_{ab}-\frac{1}{2}\tr\chi\partial_u\chibh_{ab}=\frac{1}{2}(\partial_u\tr\chi)\chibh_{ab}\\+\chih^{cd}\chih_{cd}\partial_u\gs_{ab}+2\chih^{cd}(\partial_u\chih_{cd})\gs_{ab}+2(\partial_u\gs_{cd})\chih^{cf}\chih^d_{\phantom{d}f}\gs_{ab}\\
      -\nablas_a\partial_u\zeta_b-\nablas_b\partial_u\zeta_a+\partial_u\gs_{ab}\divs\zeta+\gs_{ab}\divs\partial_u\zeta 
      -\frac{1}{2}(\partial_u\tr\chib)\chih_{ab}-\frac{1}{2}\tr\chib\partial_u\chih_{ab}\\+2(\partial_u\zeta_a)\zeta_b+2\zeta_a(\partial_u\zeta_b)-(\partial_u\gs_{cd})\zeta^c\zeta^d\gs_{ab}-2\zeta^c\partial_u\zeta_c\gs_{ab}-\zeta^c\zeta_c\partial_u\gs_{ab}\,.
\end{multline}
Note that the inductive assumption on $\partial_u\zeta_a$, and our angular regularity assumptions imply
\begin{equation}
  \limd \Phi_u^\ast(r^{k+3-2}\nablas D_T\zeta)=0\,,\qquad r^k\nablas_a\partial_u\zeta_b=\mathcal{O}(r^{-2})\,.
\end{equation}
Therefore we obtain
\begin{equation}
  \frac{\partial (r^k\partial_u\chibh_{ab})}{\partial s}-\Bigl(\frac{k}{2}\overline{\tr\chi}+\frac{1}{2}\tr\chi\Bigr)r^k\partial_u\chibh_{ab}=\mathcal{O}(r^{-2})
\end{equation}
which implies in view of Lemma~\ref{lemma:smoothness:ode} that
\begin{equation}
  \limd r^{k+1}\partial_u(r^{-1}\chibh_{ab})=0\,. \label{eq:induction:chibh}
\end{equation}

Now for the trace of $\chib$ we have \eqref{eq:structure:trchib:L} which yields
\begin{multline}
      \partial_s\partial_u\tr\chib+\frac{1}{2}\tr\chi\partial_u\tr\chib=-\frac{1}{2}(\partial_u\tr\chi)\tr\chib\\
      +2(\partial_u\gs_{ab})\nablas^a\zeta^b-2\divs\partial_u\zeta-\chih^{ab}\partial_u\chibh_{ab}-\chibh^{ab}\partial_u\chih_{ab}-2(\partial_u\gs_{ab})\chibh^{ac}\chih^{b}_{\phantom{b}c}\\
      +4\zeta^a\partial_u\zeta_a+2(\partial_u\gs_{ab})\zeta^a\zeta^b+2\partial_u\rho\,.
\end{multline}
Since in particular by the inductive assumption 
\begin{equation}
  r^{k+2}\partial_u\rho=\mathcal{O}(r^{-2})
\end{equation}
we have using the assumptions in this step,
\begin{equation}
  \frac{\partial (r^{k+2}\partial_u\tr\chib)}{\partial s}-\Bigl(\frac{k+2}{2}\overline{\tr\chi}-\frac{1}{2}\tr\chi\Bigr)r^{k+2}\partial_u\tr\chib=\mathcal{O}(r^{-2})\,
\end{equation}
which yields as desired
\begin{equation}
  \limd r^{k+1}\partial_u(r\tr\chib)=0\,.\label{eq:induction:trchib}
\end{equation}

\smallskip
\noindent\textbf{Step 1c:} The remaining connection coefficients are $\omegab$ and $\lambda$ which would vanish in a null frame constructed from two optical functions. By \eqref{eq:structure:omegab} we have
\begin{equation}
      \partial_s\partial_u\omegab=-6(\partial_u\gs_{ab})\zeta^a\zeta^b-12 \zeta^a\partial_u\zeta_a-2\partial_u\rho \label{eq:structure:omegab:u}
\end{equation}
which implies in view of the inductive assumptions that
\begin{equation}
  \limd r^{k+2}\partial_u\omegab=0\,. \label{eq:induction:omegab:improved}
\end{equation}
Note also that directly integrating \eqref{eq:structure:omegab:u} justifies the assumption \eqref{eq:asymptotics:omegab}.

For the coefficient $\lambda$ we combine \eqref{eq:structure:lambda} and \eqref{eq:structure:zeta:Lb} to obtain in view of Lemma~\ref{lemma:T:Lie}:
\begin{equation}
  D_L\lambda=\chib^\sharp\cdot \zeta+\nablas\omegab
\end{equation}
Therefore
\begin{multline}
  \partial_s\partial_u\lambda_a=-(\partial_u\gs_{bc})\chibh^c_{\phantom{c}a}\zeta^b+\gs^{bc}(\partial_u\chibh_{ca})\zeta_b+\chibh^b_{\phantom{b}a}\partial_u\zeta_b\\
  +\frac{1}{2}(\partial_u\tr\chib)\zeta_a+\tr\chib\partial_u\zeta_a+\nablas_a\partial_u\omegab
\end{multline}
and given that we have already proven \eqref{eq:induction:omegab:improved} we obtain by angular regularity
\begin{equation}
  r^{k+1}\nablas_a\partial_u\omegab=\mathcal{O}(r^{-2})\,.
\end{equation}
The inductive assumptions then imply
\begin{equation}
  \frac{\partial(r^{k+1}\partial_u\lambda_a)}{\partial s}-\frac{k+1}{2}\overline{\tr\chi}r^{k+1}\partial_u\lambda_a=\mathcal{O}(r^{-2})
\end{equation}
which shows that as desired
\begin{equation}
  \limd r^{k+1}\partial_u\lambda_a=0\,.
\end{equation}

\smallskip
\noindent\textbf{Step 2: Metric.}
By the very definition of the null second fundamental form
\begin{equation}
  D_L\gs=2\chi \label{eq:metric:L}
\end{equation}
we derive the following propagation equation for the metric components which ensures the existence of the limit \eqref{eq:asymptotics:metric}:
\begin{equation}
  \frac{\partial (r^{-2}\gs_{ab})}{\partial s}=2 r^{-2}\chih_{ab}+\Bigl(\tr\chi-\overline{\tr\chi}\Bigr)r^{-2}\gs_{ab}
\end{equation}
Here we multiply \eqref{eq:metric:L} by $r^k$, and use that $[L,T]=0$ to derive
\begin{equation}
  \frac{\partial(r^k\partial_u\gs_{ab})}{\partial s}-\Bigl(\frac{k}{2}\overline{\tr\chi}+\tr\chi\Bigr)r^k\partial_u\gs_{ab}
  =2r^k\partial_u\chih_{ab}+r^k(\partial_u\tr\chi)\gs\,.\label{eq:induction:metric:propagation}
\end{equation}
Given that we have already improved the inductive assumption on $\chi$, c.f.~\eqref{eq:induction:chih} and \eqref{eq:induction:trchi},
\begin{equation}
  r^{k+1}\partial_u\chih_{ab}=\mathcal{O}(r^{-1})\,,\qquad r^{k+1}(\partial_u\tr\chi)\gs=r^{k+3}(\partial_u\tr\chi)\gammao=\mathcal{O}(r^{-1})\,,
\end{equation}
we conclude that the right hand side of \eqref{eq:induction:metric:propagation} is $\mathcal{O}(r^{-2})$, hence
\begin{equation}
  \limd r^k\partial_u\gs_{ab}=0\,.
\end{equation}

The remaining metric components $l$ and $b$ in \eqref{eq:gauge:metric} satisfy the propagation equations \eqref{eq:propagate:lb} which after multiplying by $r^{k+1}$ yield
\begin{subequations}\label{eq:induction:lb}
  \begin{gather}
    \frac{\partial(r^{k+1}\partial_u l)}{\partial s}-\frac{k+1}{2}\overline{\tr\chi}r^{k+1}\partial_u l=-r^{k+1}\partial_u\omegab\\
    \frac{\partial(r^{k+1}\partial_u b^a)}{\partial s}-\frac{k+1}{2}\overline{\tr\chi}r^{k+1}\partial_u b^a=2\gs^{ac}\gs^{db}r^{k+1}\partial_u\gs_{cd}\zeta_b-2\gs^{ab}r^{k+1}\partial_u\zeta_b\,.
  \end{gather}
\end{subequations}
In view of the above obtained \eqref{eq:induction:omegab:improved}, and the inductive assumption \eqref{eq:vanishing:induction} we conclude that the right hand side of \eqref{eq:induction:lb} is $\mathcal{O}(r^{-2})$. Therefore by Lemma~\ref{lemma:smoothness:ode},
\begin{subequations}
  \begin{gather}
    \limd r^{k+1}\partial_u l=0\\
    \limd r^{k+1}\partial_u b^a=0\,.
  \end{gather}
\end{subequations}

\smallskip
\noindent\textbf{Step 3: Curvature.}
While Step 1 \& 2 mainly relied on the inductive assumptions \eqref{eq:vanishing:induction} used in conjuction with the propagation equations for the connection coefficients, the argument for the curvature components exploits the time-periodicity assumption similarly to the leading order argument of Section~\ref{sec:vanishing:leading}.

Recall here the leading order expression for the candidate vectorfield $T$ given in Lemma~\ref{lemma:T:lead}, and the commutation property of Lemma~\ref{lemma:T:commute} below.

\subsubsection*{Curvature components $\rho$, $\sigma$.}

The Bianchi equations \eqref{eq:bianchi:rho:L} and \eqref{eq:bianchi:rho:Lb} imply
  \begin{multline}
    2T(r^{3+k+1}\rho)+\frac{3}{2}r\bigl(\tr\chi-\overline{\tr\chi}\bigr)r^{3+k}\rho+\frac{3}{2}r\bigl(\tr\chib-\overline{\tr\chib}\bigr)r^{3+k}\rho=\\
    =r^{3+k+1}\Bigl[\divs\beta-(\zeta,\beta)-\frac{1}{2}(\chibh,\alpha)-\divs\betab-(\zeta,\betab)-\frac{1}{2}(\chih,\alphab)\Bigr]
  \end{multline}
  We observe that the terms on the right hand side are time-independent by the inductive assumptions. In fact, 
  \begin{subequations}
  \begin{gather}
    \limd  \partial_u\Bigl(r^{3+k+1}(\chih,\alphab)_{\gs}\Bigr)= \limd \bigl(r^{k} \partial_u\chih,\alphab\bigr)_{\gammao}+\limd \bigl(\chih,r^{k} \partial_u\alphab\bigr)_{\gammao}=0\\
    \limd \partial_u\Bigl(r^{3+k+1}(\zeta,\betab)_{\gs}\Bigr)=\limd (r^{k+1}\partial_u\zeta,r\betab)_{\gammao}+\limd (r\zeta,r^{k+1}\partial_u\betab)_{\gammao}=0\\
    \limd \partial_u\Bigl(r^{3+k+1}\divs\betab\Bigr)=\limd r^{k+4}\divs \partial_u\betab=0\,,\label{eq:vanishing:Bb:div}
  \end{gather}
  \end{subequations}
  where the last limit \eqref{eq:vanishing:Bb:div} follows from the assumption \eqref{eq:vanishing:induction} on the curvature, and the angular regularity assumptions, c.f.~Def.~\ref{def:smooth:null:infinity}.
  Moreover, by \eqref{eq:vanishing:induction},
  \begin{subequations}
    \begin{multline}
          \limd \partial_u\Bigl(r^{3+k+1}(\chibh,\alpha)_{\gs}\Bigr)=\\=\limd r^{-2}\bigl(r^{k-1}\partial_u\chibh,r^3\alpha\bigr)_{\gammao} +\limd r^{-2}\bigl(r^{-1}\chibh,r^{k+3}\partial_u\alpha\bigr)_{\gammao}=0
    \end{multline}
    \begin{multline}
         \limd \partial_u\Bigl(r^{3+k+1}(\zeta,\beta)_{\gs}\Bigr)=\\=\limd r^{-2}\bigl(r^{k+1}\partial_u\zeta, r^3\beta\bigr)_{\gammao}+\limd r^{-2}\bigl(r\zeta,r^{k+3}\partial_u\beta\bigr)_{\gammao}=0
    \end{multline}
  \begin{equation}
    \limd \partial_u\Bigl(r^{3+k+1}\divs\beta\Bigr)=\limd r^{k+1}\divs \partial_u(r^3\beta)=0\,.
  \end{equation}
  \end{subequations}
  Similarly on the left hand side,
  \begin{multline}
   \limd \partial_u\bigl(r(\tr\chi-\overline{\tr\chi})r^{3+k}\rho\bigr)=\\=\limd r^{k+1}\partial_u\bigl(\tr\chi-\overline{\tr\chi}\bigr)\,r^3\rho+\limd r(\tr\chi-\overline{\tr\chi})r^{k+3}\partial_u\rho=0\,.
  \end{multline}
  Therefore,
  \begin{equation}\label{eq:induction:DDT:rho}
    \limd \partial_u^2\bigl(r^{3+k+1}\rho\bigr)=0\,,
  \end{equation}
  which implies \emph{by time-periodicity} that
  \begin{equation}\label{eq:induction:DT:rho}
    \limd r^{k+1}\partial_u(r^3\rho)=0\,;
  \end{equation}
  time-periodicity is here exploited as in the proof of Prop.~\ref{prop:non:radiating} above, and as in the argument given in \cite{tod:I}.
  In complete analogy to the above, we prove
  \begin{equation}
    \limd r^{k+1}\partial_u(r^3\sigma)=0\,. \label{eq:induction:DT:sigma}
  \end{equation}

  \begin{rmk}
    Instead of using the time-periodicity assumption in this step, the statement \eqref{eq:induction:DT:rho} can be inferred from \eqref{eq:induction:DDT:rho} more generally by virtue of a smoothness assumption on null infinity. Indeed, in the context of Theorem~\ref{thm:stationary:non:radiating}, we have in view of \eqref{eq:thm:non:radiating:expansion} that by \eqref{eq:induction:DDT:rho}
    \begin{equation}
      \partial_u^2 \rho_+^{k+1}(u,\xi)=0\,.
    \end{equation}
    Therefore $\partial_u\rho_+^{k+1}$ is independent of $u$, and 
    \begin{equation}
      \rho_+^{k+1}(u_2)-\rho_+^{k+1}(u_1)=\partial_u\rho_+(\xi)(u_2-u_1)
    \end{equation}
    which implies after taking $u_1\to -\infty$ that by the key assumption \eqref{eq:thm:non:radiating:assumption}:
    \begin{equation}
      \partial_u\rho_+^{k+1}=0\,.
    \end{equation}
    Similarly for the curvature components $\sigma$, and $\beta$, $\alpha$ below.
  \end{rmk}

\subsubsection*{Curvature component $\beta$.}

We add the Bianchi equations \eqref{eq:bianchi:beta:L} and \eqref{eq:bianchi:beta:Lb} after multiplying by $r^3$, and use Lemma~\ref{lemma:T:Lie} to obtain:
  \begin{multline}
    2D_T(r^3\beta)+\frac{3}{2}\bigl(\tr\chi-\overline{\tr\chi}\bigr)r^3\beta+\frac{3}{2}\bigl(\tr\chib-\overline{\tr\chib}\bigr)r^3\beta\\-r\tr\chib\, r^2\beta-r^3\chibh^\sharp\cdot\beta-r^3\chih^\sharp\cdot\beta+r\omegab\, r^2\beta=\\
    =r^3\divs\alpha+r\zeta^\sharp\cdot r^2\alpha+\nablas( r^3\rho)+\ld\nablas( r^3\sigma)+3r\zeta\,r^2\rho+3r\ld\zeta\,r^2\sigma+2 r^3\chih\cdot\betab-r^3\lambda^\sharp\cdot\alpha \label{eq:beta:DT}
  \end{multline}
Multiplying the equation further by $r^{k+1}$ and differentiating in $u$, we prove that all terms on the right hand side are $u$-independent in the limit.
In particular, by virtue of \eqref{eq:induction:DT:rho} and \eqref{eq:induction:DT:sigma}, and angular regularity, we have
\begin{equation}
\limd \partial_u\Bigl(r^{k+1}\nablas_a(r^3\rho,r^3\sigma)\Bigr)=\limd r^{3+k+1}\nablas_a(\partial_u\rho,\partial_u\sigma)=0\,.
\end{equation}
Moreover, by the inductive assumptions,
    \begin{multline}
          \limd \partial_u\Bigl(r^{3+k+1}\gs^{bc}\chih_{ab}\betab_c\Bigr)=\\=\limd \gammao^{bc}r^k\partial_u\chih_{ab}\, r^2\betab_c+\limd \gammao^{bc}\chih_{ab}\, r^{k+2}\partial_u\betab_c=0
    \end{multline}
and similarly for all remaining terms, including on the left hand side where we have
\begin{subequations}
  \begin{multline}
    \limd \partial_u\Bigl( r^{3+k+1}\gs^{bc}\chibh_{ab}\beta_c\Bigr)=\\=\limd \gammao^{bc} r^{k-1}\partial_u\chibh_{ab}\,r^3\beta+\limd \gammao^{bc}r^{-1}\partial_u\chibh_{ab}\,r^{k+3}\partial_u\beta_c=0
  \end{multline}
  \begin{equation}
    \limd \partial_u\Bigl(r^{3+k+1}\omegab\, \beta_a\Bigr)=\limd r^{k+1}\partial_u\omegab\, r^3\beta_a+\limd r\omegab \,r^{k+3}\partial_u\beta_a=0\,.
  \end{equation}
\end{subequations}
 Therefore
 \begin{equation}
   \limd \partial_u^2(r^{3+k+1}\beta_a)=0\,,
 \end{equation}
 hence, by time-periodicity as above,
 \begin{equation}
   \limd r^{3+k+1}\partial_u\beta_a=0\,. \label{eq:induction:DT:beta}
 \end{equation}

\subsubsection*{Curvature component $\alpha$.}

Recall the Bianchi equation \eqref{eq:bianchi:alpha} in the form
\begin{multline}
  D_{\Lb}(r^3\alpha)-\frac{1}{2}\bigl(\tr\chib+3\overline{\tr\chib}\bigr)r^3\alpha-r^3(\chibh,\alpha)\gs+2\omegab r^3\alpha=\\
  =r^3\nablas\otimesh\beta+5r^3 \zeta\otimesh\beta-3r^3 \chih\rho-3r^3 \ld\chih\sigma\,.\label{eq:bianchi:alpha:D}
\end{multline}
Since by inductive assumption
\begin{equation}
  \limd r^{3+k} D_T\alpha=0
\end{equation}
 we here invoke the smoothness assumption of Section~\ref{sec:smoothness} to infer that
 \begin{equation}
   r^{3+k} D_LD_T\alpha=\mathcal{O}(r^{-2})
 \end{equation}
or simply
\begin{equation}
  \limd r^{3+k+1}\partial_u\partial_s\alpha_{ab}=0\,.\label{eq:limit:alpha:smooth}
\end{equation}
After multiplying \eqref{eq:bianchi:alpha:D} by $r^{k+1}$ and differentiating in $u$, the limit thus yields a statement for $D_T^2\alpha$ in view of \eqref{eq:limit:alpha:smooth}. Now given that we have already improved $D_T\chih$, c.f.~\eqref{eq:induction:chih}, and $D_T(\rho,\sigma)$, c.f.~\eqref{eq:induction:DT:rho}, \eqref{eq:induction:DT:sigma} we obtain
\begin{multline}
  \limd D_T\Bigl(r^{3+k+1} \chih(\rho,\sigma))\Bigr)=\\=\limd r^{k+1}D_T\chih \,(r^3\rho,r^3\sigma)+\limd \chih\,r^{k+1}D_T(r^3\rho,r^3\sigma)=0\,.
\end{multline}
Moreover by \eqref{eq:induction:DT:beta} and angular regularity we have
\begin{equation}
  \limd r^{3+k+1}\nablas\otimesh D_T\beta=0\,,
\end{equation}
and all other terms in \eqref{eq:bianchi:alpha:D} are $u$-independent in the limit by the inductive assumptions.
Therefore
\begin{equation}
  \limd r^{k+1}D_T^2(r^3\alpha)=0
\end{equation}
and again by time-periodicity
\begin{equation}
  \limd r^{3+k+1}\partial_u\alpha_{ab}=0\,.
\end{equation}

\smallskip
\noindent\textbf{Step 3b.}
Finally we improve the order of vanishing of $D_T\alphab$, and $D_T\betab$ with an argument that is similar to the treatment of the connection coefficients. Indeed, we can view the Bianchi equation \eqref{eq:bianchi:alphab:L} as a propagation equation:
\begin{multline}
  D_L(r^{-1}\alphab)-\frac{1}{2}(\tr\chi-\overline{\tr\chi})r^{-1}\alphab-(\chih,r^{-1}\alphab)\gs=\\=-r^{-1}\nablas\otimesh\betab+5r^{-1}\zeta\otimesh\betab-3r^{-1}\chibh\rho-3r^{-1}\ld\chibh\sigma
\end{multline}
Therefore by the inductive assumptions \eqref{eq:vanishing:induction} we have after differentiating in $u$ and multiplying by $r^{k+2}$,
\begin{equation}
  \partial_s\Bigl(r^{k+2}\partial_u(r^{-1}\alphab_{ab})\Bigr)-\frac{k+2}{2}\overline{\tr\chi}r^{k+2}\partial_u(r^{-1}\alphab_{ab})=\mathcal{O}(r^{-2})
\end{equation}
and thus
\begin{equation}
  \limd \Phi_u^\ast\Bigl( r^{k+1}D_T\alphab \Bigr)=0\,.
\end{equation}

Similarly we rewrite \eqref{eq:bianchi:betab:L} as
\begin{equation}
  D_L(r\betab)+\frac{1}{2}(\tr\chi-\overline{\tr\chi})r\betab-\chih^\sharp\cdot r\betab
  =-r\nablas\rho+r\ld\nablas\sigma+3\zeta r\rho-3\ld\zeta r\sigma+2r\chibh^\sharp\cdot\beta
\end{equation}
and observe that after differentiating in $u$, and multiplying by $r^{k+2}$ the right hand side is $\mathcal{O}(r^{-2})$ by the inductive assumptions \eqref{eq:vanishing:induction}. Moreover, by \eqref{eq:induction:chih} and \eqref{eq:induction:trchi} also the remaining terms on the left hand side satisfy
\begin{equation}
  r^{k+2}\partial_u\Bigl(\gs^{bc}\chih_{ab} r\betab_c\Bigr)=\mathcal{O}(r^{-2})\,.
\end{equation}
Therefore, as above, 
\begin{equation}
  \limd \Phi_u^\ast\Bigl(r^{k+2}D_T(r\betab)\Bigr)=0\,.
\end{equation}

\bigskip
We have thus proven \eqref{eq:vanishing:induction} holds for $k+1$, and hence completed the induction step, thus proving Proposition~\ref{prop:vanishing:induction}.\qed

\begin{proof}[Proof of Proposition~\ref{prop:time:periodic:killing:all:orders}]
Proposition~\ref{prop:vanishing:induction} in particular proves the statement of Proposition~\ref{prop:time:periodic:killing:all:orders} because in $(u,s,\vartheta^1,\vartheta^2)$ coordinates, 
\begin{subequations}
  \begin{gather}
    (\mathcal{L}_T g)_{ab}=\partial_u\gs_{ab}\\
    (\mathcal{L}_T g)_{au}=-2(\partial_u\gs_{ac})b^c-2\gs_{ac}(\partial_u b^c)\\
    (\mathcal{L}_T g)_{uu}=-\partial_u l+(\partial_u\gs_{ac})b^ab^c+2\gs_{ac}(\partial_ub^a )b^c\,.
  \end{gather}
\end{subequations} 
\end{proof}

\begin{lemma}
  \label{lemma:T:commute}
  The vectorfield $T$ commutes to leading order with factors in $r$, namely
  \begin{equation}
    \limd \partial_u r=0\,.
  \end{equation}
\end{lemma}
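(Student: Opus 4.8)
The plan is to compute $\partial_u r=T\cdot r$ directly from the definition \eqref{eq:def:arearadius} of the area radius through a first-variation-of-area calculation, and then to pass to the limit using the fact that $r\tr\chi$ and $r\tr\chib$ have equal and opposite leading values $\pm2$, so that their contributions cancel. The upshot will be the formula $\partial_u r=\frac{r}{4}(\overline{\tr\chi}+\overline{\tr\chib})+\mathcal{O}(r^{-2})$ already quoted in the proof of Proposition~\ref{prop:non:radiating}.

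First I would differentiate $4\pi r^2(u,s)=\int_{S_{u,s}}\dm{\gs}$ in $u$. Since the coordinate vectorfields $X_a=\partial/\partial\vartheta^a$ commute with $T=\partial_u$ and the coordinate area form $\ud\vartheta^1\ud\vartheta^2$ is $u$-independent, one obtains
\begin{equation*}
  8\pi r\,\partial_u r=\int_{S_{u,s}}\tfrac12\gs^{ab}\partial_u\gs_{ab}\,\dm{\gs}\,.
\end{equation*}
Writing $\partial_u\gs_{ab}=\partial_u g(X_a,X_b)$ and using $\nabla_T X_a=\nabla_{X_a}T$ (as $[T,X_a]=0$) together with $\nabla g=0$ gives $\tfrac12\gs^{ab}\partial_u\gs_{ab}=\gs^{ab}g(\nabla_{X_a}T,X_b)$.

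Next I would decompose $T=T^{\Lb}\Lb+T^LL+T^AE_A$ and use the frame relations \eqref{eq:frame}: because $\Lb,L$ are normal to the spheres while the $\chi$- and $\chib$-terms are tangential, projecting onto $X_b$ isolates $g(\nabla_{X_a}T,X_b)=T^{\Lb}\chib_{ab}+T^L\chi_{ab}+(\text{tangential})$, so that after tracing
\begin{equation*}
  \tfrac12\gs^{ab}\partial_u\gs_{ab}=T^{\Lb}\tr\chib+T^L\tr\chi+\divs(T^AE_A)\,.
\end{equation*}
Integrating over the closed sphere $S_{u,s}$ the divergence term drops, and inserting $T^{\Lb}=T^L=\tfrac12+\mathcal{O}(r^{-2})$ from Lemma~\ref{lemma:T:lead} yields
\begin{equation*}
  8\pi r\,\partial_u r=2\pi r^2\bigl(\overline{\tr\chi}+\overline{\tr\chib}\bigr)+\mathcal{O}(r^{-1})\,,
\end{equation*}
hence $\partial_u r=\tfrac{r}{4}(\overline{\tr\chi}+\overline{\tr\chib})+\mathcal{O}(r^{-2})$. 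Finally I would take $\limd$ and invoke \eqref{eq:asymptotics:trchi}, namely $\limd r\tr\chi=2$ and $\limd r\tr\chib=-2$, to conclude $\limd\tfrac{r}{4}(\overline{\tr\chi}+\overline{\tr\chib})=\tfrac14(2-2)=0$.

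The routine but delicate part is the bookkeeping of the $\mathcal{O}(r^{-2})$ corrections to the null-frame coefficients of $T$: since these multiply $\tr\chi,\tr\chib=\mathcal{O}(r^{-1})$ and are integrated against $\dm{\gs}=\mathcal{O}(r^2)\dm{\gammao}$, they contribute only $\mathcal{O}(r^{-1})$ to $r\,\partial_u r$ and thus vanish in the limit. The conceptual heart of the statement is that the normal part $\tfrac12(L+\Lb)$ of $T$ produces no net first variation of area to leading order precisely because $r\tr\chi+r\tr\chib\to0$; this cancellation is exactly the gauge property $\tr\chi+\tr\chib=0$ on $S_0^\ast$ built into the construction of $T$ in Section~\ref{sec:gauge}, now seen to persist in the limit along all of null infinity.
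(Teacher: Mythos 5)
Your argument is correct and is essentially the paper's own proof: both differentiate the area formula \eqref{eq:def:arearadius}, use Lemma~\ref{lemma:T:lead} to reduce $\tfrac12\gs^{ab}\partial_u\gs_{ab}$ to $\tfrac12(\tr\chi+\tr\chib)$ up to corrections that vanish in the limit, and conclude from $\limd r\tr\chi=2$, $\limd r\tr\chib=-2$. You merely make explicit the frame computation and error bookkeeping that the paper compresses into one line, and your constant $\tfrac{r}{4}(\overline{\tr\chi}+\overline{\tr\chib})$ agrees with the expression used in the proof of Proposition~\ref{prop:non:radiating}, whereas the factor $\tfrac18$ displayed in the paper's proof of the lemma appears to be a harmless slip.
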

\begin{proof}
  Since $r(u,s)$ is defined by \eqref{eq:def:arearadius} as the area radius of $S_{u,s}$ and
\begin{equation}
  \frac{\partial}{\partial u}\sqrt{\det\gs_{s,u}}=\frac{1}{2}\gs^{ab}\partial_u \gs_{ab}\sqrt{\det\gs_{u,s}}
\end{equation}
  we have by Lemma~\ref{lemma:T:lead} that
  \begin{equation}
    \limd \partial_u r=\limd T\cdot r=\frac{1}{8}\limd r(\overline{\tr\chi}+\overline{\tr\chib})=0
  \end{equation}
  where the overline denotes the average on the sphere $S_{u,s}$.
\end{proof}

\begin{lemma}
  \label{lemma:T:Lie}
  Let $\theta$ be a $S_{u,s}$ 1-form (i.e.~$\theta(L)=\theta(\Lb)=0$), and $\omega$ a symmetric trace-free $S_{u,s}$ 2-form (i.e.~$\omega(L,\cdot)=\omega(\Lb,\cdot)=0$), then
  \begin{subequations}\label{eq:lemma:L:Lie}
    \begin{gather}
      D_L\theta=\nablas_L\theta+\chih^\sharp\cdot\theta+\frac{1}{2}\tr\chi \theta\\
        D_L\omega=\nablas_L\omega+(\chih,\omega)\gs+\tr\chi\omega\,.
    \end{gather}
  \end{subequations}
  Moreover, to leading order in $r$, 
  \begin{subequations}\label{eq:T:Lie}
    \begin{gather}
          D_T\theta=\nablas_T\theta+\frac{1}{2}\chibh^\sharp\cdot\theta+\frac{1}{4}\tr\chib\theta+\frac{1}{2}\chih^\sharp\cdot\theta+\frac{1}{4}\tr\chi\theta\\
  D_T\omega=\nablas_T\omega+\frac{1}{2}(\chibh,\omega)\gs+\frac{1}{2}(\chih,\omega)\gs+\frac{1}{2}\tr\chib\omega+\frac{1}{2}\tr\chi\omega\,.
    \end{gather}
  \end{subequations}
\end{lemma}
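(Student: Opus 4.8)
The plan is to reduce both \eqref{eq:lemma:L:Lie} and \eqref{eq:T:Lie} to the standard torsion-free identity relating a Lie derivative to covariant derivatives: for any vector field $X$ and $S_{u,s}$-tangent $1$-form $\theta$ one has $(\mathcal{L}_X\theta)(Y)=(\nabla_X\theta)(Y)+\theta(\nabla_Y X)$, and for a symmetric $(0,2)$-tensor $\omega$ one has $(\mathcal{L}_X\omega)(Y,Z)=(\nabla_X\omega)(Y,Z)+\omega(\nabla_Y X,Z)+\omega(Y,\nabla_Z X)$. I would first prove \eqref{eq:lemma:L:Lie} by taking $X=L$ and evaluating on the frame vectors $E_A$, and then obtain \eqref{eq:T:Lie} by averaging with the analogous $\Lb$-computation via Lemma~\ref{lemma:T:lead}.

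For the $1$-form I evaluate $(\mathcal{L}_L\theta)(E_A)$ and substitute $\nabla_{E_A}L=\chi_A^{\sharp B}E_B-\zeta_A L$ from \eqref{eq:frame}. Since $\theta(L)=0$ the term $-\zeta_A L$ drops, leaving $\theta(\nabla_{E_A}L)=\chi_A^{\sharp B}\theta_B$. Projecting onto $S_{u,s}$ retains only the $E_A$-components, and on these tangent arguments $\nabla_L\theta$ agrees with $\nablas_L\theta$ because $\nabla_L E_A=-\zeta_A L$ is proportional to $L$; hence $D_L\theta=\nablas_L\theta+\chi^\sharp\cdot\theta$. Decomposing $\chi=\chih+\tfrac12\tr\chi\,\gs$ gives the first line of \eqref{eq:lemma:L:Lie}.

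For the $2$-form the same substitution produces the symmetrized expression $\chi_A^{\sharp C}\omega_{CB}+\chi_B^{\sharp C}\omega_{AC}$; splitting off the trace contributes $\tr\chi\,\omega_{AB}$, and one is left with $\chih_A^{\sharp C}\omega_{CB}+\chih_B^{\sharp C}\omega_{AC}$. The crucial point is the two-dimensional identity, valid for symmetric trace-free tensors on a surface, that this anticommutator is pure trace: $\chih_A^{\sharp C}\omega_{CB}+\chih_B^{\sharp C}\omega_{AC}=(\chih,\omega)\gs_{AB}$. I would verify it by a direct computation in an orthonormal frame, writing $\chih=\begin{pmatrix} p & q\\ q & -p\end{pmatrix}$ and $\omega=\begin{pmatrix} a & b\\ b & -a\end{pmatrix}$; the off-diagonal entries cancel while each diagonal entry equals $2(pa+qb)=(\chih,\omega)$, yielding the second line of \eqref{eq:lemma:L:Lie}.

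For \eqref{eq:T:Lie} I invoke Lemma~\ref{lemma:T:lead}, which gives $T=\tfrac12(L+\Lb)+\sum_\mu\mathcal{O}(r^{-2})E_\mu$, so that to leading order $D_T=\tfrac12(D_L+D_{\Lb})$. I then repeat the previous computation with $X=\Lb$, using $\nabla_{E_A}\Lb=\chib_A^{\sharp B}E_B+\zeta_A\Lb$ from \eqref{eq:frame}; since $\theta(\Lb)=0$ and $\omega(\Lb,\cdot)=0$ the $\zeta_A\Lb$ terms again drop, and the identical trace decomposition and two-dimensional identity give $D_{\Lb}\theta=\nablas_{\Lb}\theta+\chibh^\sharp\cdot\theta+\tfrac12\tr\chib\,\theta$ and its $2$-form analogue. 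Averaging the $L$- and $\Lb$-expressions and writing $\nablas_T=\tfrac12(\nablas_L+\nablas_{\Lb})$ produces \eqref{eq:T:Lie}. The main obstacle is the purely two-dimensional linear-algebra identity above together with the bookkeeping of which frame components survive the projection $\Pi$; everything else is a mechanical substitution of \eqref{eq:frame}. One should also note that the $\mathcal{O}(r^{-2})$ corrections to $T$, combined with the decay of $\chi,\chib,\zeta$, affect only lower-order terms, which is precisely why \eqref{eq:T:Lie} is asserted to leading order in $r$.
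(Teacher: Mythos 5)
Your proposal is correct and follows essentially the same route as the paper: the torsion-free identity relating $\mathcal{L}_X$ to $\nabla_X$ plus $\theta\cdot\nabla X$, the frame relations \eqref{eq:frame}, and Lemma~\ref{lemma:T:lead} to reduce $D_T$ to the average of the $L$- and $\Lb$-computations to leading order. The only difference is that the paper delegates the $L$-identities and the two-dimensional trace-free anticommutator identity $\chih^{\sharp}\omega+\omega\chih^{\sharp}=(\chih,\omega)\gs$ to Chapter~1 of \cite{christodoulou:III}, whereas you verify them directly; your explicit matrix check is correct.
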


  \begin{proof}
    The identities \eqref{eq:lemma:L:Lie} are discussed in Chapter 1 of \cite{christodoulou:III}.
Moreover, by definition,   $D_T\theta=\Pi\mathcal{L}_T\theta$ where $\Pi$ is the projection to the $S_{u,v}$-spheres. Therefore
\begin{equation}
  (D_T\theta)_A=(\nablas_T\theta)_A+\theta\cdot\nabla_AT\,,
\end{equation}
and the formula follows from Lemma~\ref{lemma:T:lead}.
Similarly,
\begin{equation}
  (D_T\omega)_{AB}=(\nablas_T\omega)_{AB}+\omega(\nabla_{e_A}T,e_B)+\omega(e_A,\nabla_{e_B}T)
\end{equation}
which gives to leading order by Lemma~\ref{lemma:T:lead},
\begin{equation}
  D_T\omega=\nablas_T\omega+\frac{1}{2}\chib\times\omega+\frac{1}{2}\omega\times\chib+\frac{1}{2}\chi\times\omega+\frac{1}{2}\omega\times\chi\,.
\end{equation}
Since $\omega$ is trace-free symmetric this simplifies to \eqref{eq:T:Lie} using the formulas of Chapter~1 in \cite{christodoulou:III}.
  \end{proof}

\section{Asymptotic extension of time-like Killing vectorfields}
\label{sec:extension}

In Section~\ref{sec:vanishing} we have proven that the time-like vectorfield $T$ constructed in Section~\ref{sec:gauge} generates an isometry at infinity to all orders, c.f.~Proposition~\ref{prop:time:periodic:killing:all:orders}.
We shall now prove that this vectorfield is in fact a Killing vectorfield in a neighborhood of infinity.

\begin{prop}\label{prop:vanishing:main}
  Let $(\mathcal{M},g)$ be a time-periodic solution to the vacuum equations satisfying the assumptions of Section~\ref{sec:smoothness}.
  Let $T$ be the time-like vectorfield constructed in Section~\ref{sec:gauge}. Recall that $T$ has the asymptotic form towards future null infinity given in Lemma~\ref{lemma:T:lead}, and satisfies
  \begin{equation}
    [L,T]=0\,,\qquad \nabla_L L=0\,,\qquad g(L,L)=0\,,
  \end{equation}
  in a neighborhood $\mathcal{D}_\omega$ of infinity, $\omega>0$; see \eqref{eq:D:omega} for the precise definition below. 
  Then
  \begin{equation}
    \mathcal{L}_T g=0\qquad\text{: on }\mathcal{D}_{\omega^\prime}
  \end{equation}
  for some $0<\omega^\prime<\omega$, i.e. $(\mathcal{M},g)$ is stationary in a neighborhood of infinity.

\end{prop}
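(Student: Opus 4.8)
The plan is to realize the identity $\mathcal{L}_T g = 0$ as a unique continuation statement, continuing \emph{inward} from null infinity, for a closed system of covariant wave equations satisfied by the deformation tensor and by the Lie derivative of the curvature. Write $\pi := \mathcal{L}_T g$ for the deformation tensor, so that $T$ is Killing precisely when $\pi\equiv 0$. Following the structure introduced in \cite{alexakis-ionescu-klainerman, ionescu-klainerman:extension}, the first step is to form the modified Lie derivative of the Riemann (equivalently, Weyl) curvature, $\mathbf{W}:=\mathcal{L}_T R$ corrected by lower-order terms quadratic in $\pi$ and $R$, and to derive from the second Bianchi identity and the vacuum equations $\mathrm{Ric}(g)=0$ a closed first-order/wave system: a tensorial wave equation
\[
  \Box_g \mathbf{W} = \mathcal{O}(\mathbf{W}) + \mathcal{O}(\nabla\pi) + \mathcal{O}(\pi),
\]
together with the Killing-type second-order identity expressing $\nabla^2\pi$ algebraically through $\mathbf{W}$, $\pi$, $\nabla\pi$ and $R$. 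All coefficients here are smooth by the regularity assumption of Section~\ref{sec:smoothness}. In this way $\pi$ is propagated by transport/ODE relations once $\mathbf{W}$ is known, while $\mathbf{W}$ is governed by a genuine wave equation sourced by $\pi$ and $\nabla\pi$.

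Second, I would record the vanishing data at infinity. Proposition~\ref{prop:time:periodic:killing:all:orders} gives $\limd r^k\pi = 0$ for all $k$; differentiating the defining relations and using the asymptotic frame expansion of Lemma~\ref{lemma:T:lead} together with the commutation $[L,T]=0$ of \eqref{eq:L:T}, the same infinite-order vanishing is inherited by $\nabla\pi$ and by $\mathbf{W}$ at $\mathcal{I}^+$ (and, by the symmetric argument at past null infinity, at $\mathcal{I}^-$). This furnishes the infinite-order vanishing along the boundary that the Carleman machinery requires as ``Cauchy data at infinity''.

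Third, I would invoke the Carleman estimates of \cite{alexakis-schlue-shao}. As emphasized there, these are tailored to unique continuation from a neighborhood of spatial infinity $\iota^0$, with a pseudoconvex foliation whose pseudoconvexity near infinity is guaranteed by the positivity of the mass; the commutation $[L,T]=0$ and $\nabla_L L=0$ are used to align the candidate field $T$ and the geodesic generators $L$ with this foliation. One applies the Carleman estimate to the wave equation for $\mathbf{W}$ on $\mathcal{D}_\omega$, and couples it with weighted transport estimates for $\pi$ and $\nabla\pi$ coming from the first-order part of the system, so that the zeroth- and first-order terms $\mathcal{O}(\pi)$, $\mathcal{O}(\nabla\pi)$ on the right-hand side are absorbed by the large Carleman parameter. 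Granted the infinite-order vanishing of the second step, this forces $\mathbf{W}\equiv 0$, hence $\pi\equiv 0$, on a smaller neighborhood $\mathcal{D}_{\omega'}$; that is, $\mathcal{L}_T g = 0$ near infinity.

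The main obstacle, as flagged in the introduction, is that the estimate of \cite{alexakis-schlue-shao} is \emph{not directly} applicable as a uniqueness result and must instead be implemented at the level of the weighted inequalities, coupling the wave-Carleman estimate for $\mathbf{W}$ with the ODE/transport system for $\pi$. The delicate points are: (i) verifying that the bulk and boundary contributions produced by the transport equations for $\pi$ are absorbable into the Carleman weight (this is where the coupling must be balanced, cf.~\cite{alexakis-ionescu-klainerman}); (ii) carrying out the whole argument in the physical, non-compactified region near $\iota^0$, where the weight, the decay of the coefficients, and the pseudoconvexity must be controlled simultaneously via positivity of mass; and (iii) matching the infinite-order vanishing derived at $\mathcal{I}^\pm$ to the precise vanishing hypothesis of the Carleman estimate. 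I expect step (i) --- closing the coupled wave/transport Carleman estimate --- to be the crux.
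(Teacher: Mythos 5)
Your proposal follows essentially the same route as the paper: the Ionescu--Klainerman tensorial system (a wave equation for the modified Lie derivative $W$ of the curvature coupled to transport equations along $L$ for the deformation-tensor quantities $B$ and $P$), the infinite-order vanishing data supplied by Proposition~\ref{prop:time:periodic:killing:all:orders}, and the coupled wave/ODE Carleman estimates of \cite{alexakis-schlue-shao} near spatial infinity, closed by absorbing the lower-order terms for a large Carleman parameter. The one ingredient the paper singles out that you leave implicit is that all components must be evaluated in an asymptotically \emph{Cartesian} frame --- which is precisely why only the tensorial method of \cite{ionescu-klainerman:extension}, and not that of \cite{alexakis-ionescu-klainerman}, is usable here: in a null frame the resulting coefficients decay too slowly toward infinity for the Carleman estimate to close, so smoothness of the coefficients alone (as you invoke) is not the relevant condition.
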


The proof of Prop.~\ref{prop:vanishing:main} relies crucially on our unique continuation from infinity results for linear waves on asymptotically flat spacetimes proven in collaboration with A.~Shao in \cite{alexakis-schlue-shao}; in fact it employs the Carleman estimates developed therein (rather than the uniqueness theorem \emph{per se}) and combines them with the general framework developed by Ionescu and Klainerman in \cite{ionescu-klainerman:extension}.

We note that the latter provided an alternative (purely tensorial) to the method originally developed in \cite{alexakis-ionescu-klainerman}
on the problem of  extending  Killing vectorfields in Ricci flat manifolds using Carleman estimates. Interestingly, although for the purpose of extending Killing fields across a finite boundary both the method in \cite{ionescu-klainerman:extension} and the earlier in \cite{alexakis-ionescu-klainerman} are applicable,\footnote{The advantage of the former being that the Killing fields need not be tangential to the boundary.} in the case at hand where we seek to extend from infinity, \emph{only} the newer method in \cite{ionescu-klainerman:extension} is applicable. This is due to its tensorial nature, which allows us to evaluate the resulting tensorial equation against any frame. In particular, we use an asymptotically Cartesian frame, which has the advantage that the (connection) coefficients in the resulting wave equation decay fast enough towards infinity for the theorem in \cite{alexakis-schlue-shao} to apply. \footnote{If one employed the method in \cite{alexakis-ionescu-klainerman} one would be forced to use a null frame, for which the resulting coefficients decay too slowly.}

In Section~\ref{sec:ik} we derive the relevant equations from \cite{ionescu-klainerman:extension} in the present setting. Then in Section~\ref{sec:carleman} we restate the Carleman estimate of \cite{alexakis-schlue-shao} in \emph{physical space}; (this estimate was first proven in a conformally inverted space; see also \cite{ionescu-klainerman:ill-posed}).
Finally, in Section~\ref{sec:uniqueness} we prove the unique continuation theorem, and complete the proof of Theorem~\ref{thm:stationary:periodic}.

\subsection{Ionescu-Klainerman system of tensorial equations}
\label{sec:ik}

In Section~\ref{sec:asymptotics} we have defined the time-like vectorfield $T$ away from future null infinity according to
\begin{equation}
  [L,T]=0
\end{equation}
where $L$ is the geodesic generator of the outgoing null hypersurfaces $C_u^+$,
\begin{equation}
  \nabla_L L=0\,,\qquad g(L,L)=0\,.
\end{equation}
This implies of course that $T$ is a solution to the Jacobi equation
\begin{equation}
  \nabla_L\nabla_LT=R(L,T)\cdot L\,.
\end{equation}

Let $\pi$ denote the deformation tensor of $T$, $\pi=\mathcal{L}_T g$, 
and following \cite{ionescu-klainerman:extension} let
\begin{equation}
  B=\frac{1}{2}\bigl(\pi+\omega\bigr)
\end{equation}
where $\omega$ is an anti-symmetric 2-form defined by the transport equation
\begin{equation} \label{eq:omega}
  \nabla_L\omega_{\alpha\beta}=\pi_{\alpha\rho}\nabla_\beta L^\rho- \pi_{\beta\rho}\nabla_\alpha L^\rho\,;
\end{equation}
here and in the remainder of this subsection the components are expressed relative to an \emph{arbitrary} frame.

Moreover, define as in \cite{ionescu-klainerman:extension} the modified Lie derivative of the curvature $R$ by
\begin{subequations}
\begin{gather}
  W:=\hat{\mathcal{L}}_TR:=\mathcal{L}_T R-B\odot R\,,\\
    (B\odot R)_{\alpha_1\alpha_2\alpha_3\alpha_4}=\sum_{j=1}^4 B_{\alpha_j}^{\phantom{\alpha}\rho}R_{\alpha_1\ldots\rho\ldots\alpha_4} \label{eq:B.R}
\end{gather}
\end{subequations}
and also a tensor $\Pi$ algebraically similar to the Christoffel symbols (formerly denoted by $\Gamma$ in \cite{ionescu-klainerman:extension}), namely
\begin{equation}
  \Pi_{\alpha\beta\mu}:=\frac{1}{2}\Bigl(\nabla_\alpha \pi_{\beta\mu}+\nabla_\beta \pi_{\alpha\mu}-\nabla_\mu \pi_{\alpha\beta}\Bigr)
\end{equation}
and
\begin{equation}
  P_{\alpha\mu\beta}:=\Pi_{\alpha\beta\mu}-\nabla_\beta B_{\alpha\mu}\,.
\end{equation}
It is then proven in \cite{ionescu-klainerman:extension} Proposition~2.7 that $B$ and $P$ satisfy the transport equations
\begin{subequations}\label{eq:transport:BP}
\begin{gather}
  \nabla_L B_{\alpha\beta}=P_{\rho\beta\alpha}L^\rho-B_{\rho\beta} \nabla_\alpha L^\rho\label{eq:transport:B}\\
  \nabla_L P_{\alpha\beta\mu}=W_{\alpha\beta\mu\nu} L^\nu+R_{\alpha\beta\rho\nu} B_\mu^{\phantom{\mu}\rho} L^\nu-P_{\alpha\beta\rho}\nabla_\alpha L^\rho\label{eq:transport:P}\,.
\end{gather}
\end{subequations}

These transport equations complement a covariant wave equation satisfied by $W$.
Indeed, as a consequence of the Bianchi equations on Ricci flat spacetimes, the curvature satisfies a covariant wave equation:
\begin{equation}\label{eq:wave:R:covariant}
  \begin{split}    
  \Box R_{\alpha_1\alpha_2\alpha_3\alpha_4}&=+R_{\sigma\rho\alpha_3\alpha_4}R^{\sigma\phantom{\alpha_1\alpha_2}\rho}_{\phantom{\sigma}\alpha_1\alpha_2}+R_{\sigma\alpha_2\rho\alpha_4} R^{\sigma\phantom{\alpha_1\alpha_3}\rho}_{\phantom{\sigma}\alpha_1\alpha_3}+R_{\sigma\alpha_2\alpha_3\rho} R_{\phantom{\sigma}\alpha_1\alpha_4}^{\sigma\phantom{\alpha_1\alpha_4}\rho}\\
  &\quad -R_{\sigma\rho\alpha_3\alpha_4}R^{\sigma\phantom{\alpha_2\alpha_1}\rho}_{\phantom{\sigma}\alpha_2\alpha_1}-R_{\sigma\alpha_1\rho\alpha_4}R^{\sigma\phantom{\alpha_2\alpha_3}\rho}_{\phantom{\sigma}\alpha_2\alpha_3}-R_{\sigma\alpha_1\alpha_3\rho}R^{\sigma\phantom{\alpha_2\alpha_4}\rho}_{\phantom{\sigma}\alpha_2\alpha_4}
  \end{split}
\end{equation}
or for brevity,
\begin{equation}
  \Box R=R\odot R\,.
\end{equation}
Now using the commutation properties of covariant and Lie derivatives (c.f.~\cite{ionescu-klainerman:extension} Lemma 2.2) we have
\begin{multline}
  \Box (\mathcal{L}_T R)_{\alpha_1\alpha_2\alpha_3\alpha_4}=\mathcal{L}_T(\Box R_{\alpha_1\alpha_2\alpha_3\alpha_4})+\Pi_{\sigma\phantom{\sigma}\rho}^{\phantom{\sigma}\sigma}\nabla^\rho R_{\alpha_1\alpha_2\alpha_3\alpha_4}\\+\sum_{j=1}^4\Pi_{\alpha_j\phantom{\sigma}\rho}^{\phantom{\alpha}\sigma}\nabla_\sigma R_{\alpha_1\ldots\alpha_4}^{\phantom{\ldots}\rho}
  +\sum_{j=1}^4(\nabla^\sigma\Pi_{\alpha_j\sigma\rho})R_{\alpha_1\ldots\alpha_4}^{\phantom{\ldots}\rho}+\sum_{j=1}\Pi_{\alpha_j\sigma \rho}\nabla^\sigma R_{\alpha_1\ldots\alpha_4}^{\phantom{\ldots}\rho}
\end{multline}
In view of the presence of metric contractions in \eqref{eq:wave:R:covariant} we have, schematically,
\begin{equation}
  \mathcal{L}_T\Box R=\pi\odot R\odot R+R\odot \mathcal{L}_T R\,,
\end{equation}
or, more precisely,
\begin{equation}
  \begin{split}
  \mathcal{L}_T\Box R_{\alpha_1\alpha_2\alpha_3\alpha_4}
&=+\mathcal{L}_T R_{\sigma\rho\alpha_3\alpha_4}R^{\sigma\phantom{\alpha_1\alpha_2}\rho}_{\phantom{\sigma}\alpha_1\alpha_2}+\mathcal{L}_T R_{\sigma\alpha_2\rho\alpha_4} R^{\sigma\phantom{\alpha_1\alpha_3}\rho}_{\phantom{\sigma}\alpha_1\alpha_3}+\mathcal{L}_T R_{\sigma\alpha_2\alpha_3\rho} R_{\phantom{\sigma}\alpha_1\alpha_4}^{\sigma\phantom{\alpha_1\alpha_4}\rho}\\
  &\quad -\mathcal{L}_T R_{\sigma\rho\alpha_3\alpha_4}R^{\sigma\phantom{\alpha_2\alpha_1}\rho}_{\phantom{\sigma}\alpha_2\alpha_1}-\mathcal{L}_T R_{\sigma\alpha_1\rho\alpha_4}R^{\sigma\phantom{\alpha_2\alpha_3}\rho}_{\phantom{\sigma}\alpha_2\alpha_3}-\mathcal{L}_T R_{\sigma\alpha_1\alpha_3\rho}R^{\sigma\phantom{\alpha_2\alpha_4}\rho}_{\phantom{\sigma}\alpha_2\alpha_4}  \\  
&\quad+R_{\sigma\rho\alpha_3\alpha_4}\mathcal{L}_T R^{\sigma\phantom{\alpha_1\alpha_2}\rho}_{\phantom{\sigma}\alpha_1\alpha_2}+R_{\sigma\alpha_2\rho\alpha_4} \mathcal{L}_T R^{\sigma\phantom{\alpha_1\alpha_3}\rho}_{\phantom{\sigma}\alpha_1\alpha_3}+R_{\sigma\alpha_2\alpha_3\rho} \mathcal{L}_T R_{\phantom{\sigma}\alpha_1\alpha_4}^{\sigma\phantom{\alpha_1\alpha_4}\rho}\\
  &\quad -R_{\sigma\rho\alpha_3\alpha_4}\mathcal{L}_T R^{\sigma\phantom{\alpha_2\alpha_1}\rho}_{\phantom{\sigma}\alpha_2\alpha_1}-R_{\sigma\alpha_1\rho\alpha_4}\mathcal{L}_T R^{\sigma\phantom{\alpha_2\alpha_3}\rho}_{\phantom{\sigma}\alpha_2\alpha_3}-R_{\sigma\alpha_1\alpha_3\rho}\mathcal{L}_T R^{\sigma\phantom{\alpha_2\alpha_4}\rho}_{\phantom{\sigma}\alpha_2\alpha_4} \\
&\quad-2\pi_\sigma^{\phantom{\sigma}\lambda}R_{\lambda\rho\alpha_3\alpha_4}R^{\sigma\phantom{\alpha_1\alpha_2}\rho}_{\phantom{\sigma}\alpha_1\alpha_2}-\pi_\sigma^{\phantom{\sigma}\lambda} R_{\lambda\alpha_2\rho\alpha_4} R^{\sigma\phantom{\alpha_1\alpha_3}\rho}_{\phantom{\sigma}\alpha_1\alpha_3}-\pi_\sigma^{\phantom{\sigma}\lambda}R_{\lambda\alpha_2\alpha_3\rho} R_{\phantom{\sigma}\alpha_1\alpha_4}^{\sigma\phantom{\alpha_1\alpha_4}\rho}\\
&\quad-2\pi_\rho^{\phantom{\rho}\lambda}R_{\sigma\lambda\alpha_3\alpha_4}R^{\sigma\phantom{\alpha_1\alpha_2}\rho}_{\phantom{\sigma}\alpha_1\alpha_2}-\pi_\rho^{\phantom{\rho}\lambda} R_{\sigma\alpha_2\rho\alpha_4} R^{\sigma\phantom{\alpha_1\alpha_3}\rho}_{\phantom{\sigma}\alpha_1\alpha_3}-\pi_\rho^{\phantom{\rho}\lambda}R_{\sigma\alpha_2\alpha_3\lambda} R_{\phantom{\sigma}\alpha_1\alpha_4}^{\sigma\phantom{\alpha_1\alpha_4}\rho}\\
 &\quad\qquad +\pi_\sigma^{\phantom{\sigma}\lambda}R_{\lambda\alpha_1\rho\alpha_4}R^{\sigma\phantom{\alpha_2\alpha_3}\rho}_{\phantom{\sigma}\alpha_2\alpha_3}+\pi_\sigma^{\phantom{\sigma}\lambda}R_{\lambda\alpha_1\alpha_3\rho}R^{\sigma\phantom{\alpha_2\alpha_4}\rho}_{\phantom{\sigma}\alpha_2\alpha_4}\\
 & \quad\qquad +\pi_\rho^{\phantom{\rho}\lambda}R_{\sigma\alpha_1\lambda\alpha_4}R^{\sigma\phantom{\alpha_2\alpha_3}\rho}_{\phantom{\sigma}\alpha_2\alpha_3}+\pi_\rho^{\phantom{\rho}\lambda}R_{\sigma\alpha_1\alpha_3\lambda}R^{\sigma\phantom{\alpha_2\alpha_4}\rho}_{\phantom{\sigma}\alpha_2\alpha_4}
  \end{split}
\end{equation}
where
\begin{equation}
  \pi_\sigma^{\phantom{\sigma}\lambda}=\nabla_\sigma T^\lambda+\nabla^\lambda T_\sigma\,.
\end{equation}
Furthermore, by \eqref{eq:B.R} we have
\begin{multline}
  \Box (B\odot R)_{\alpha_1\alpha_2\alpha_3\alpha_4}=\sum_{j=1}^4 (\Box B_{\alpha_j}^{\phantom{\alpha}\rho})R_{\alpha_1\ldots\rho\ldots\alpha_4}\\+2\sum_{j=1}^4 \nabla^\sigma B_{\alpha_j}^{\phantom{\alpha}\rho}\nabla_\sigma R_{\alpha_1\ldots\rho\ldots\alpha_4}+\sum_{j=1}^4 B_{\alpha_j}^{\phantom{\alpha}\rho}\Box R_{\alpha_1\ldots\rho\ldots\alpha_4}
\end{multline}
and therefore
\begin{multline}
  \Box W_{\alpha_1\alpha_2\alpha_3\alpha_4}=\Box (\mathcal{L}_T R)_{\alpha_1\alpha_2\alpha_3\alpha_4}-\Box (B\odot R)_{\alpha_1\alpha_2\alpha_3\alpha_4}\\
   =\mathcal{L}_T(\Box R_{\alpha_1\alpha_2\alpha_3\alpha_4})+\Pi_{\sigma\phantom{\sigma}\rho}^{\phantom{\sigma}\sigma}\nabla^\rho R_{\alpha_1\alpha_2\alpha_3\alpha_4}+\sum_{j=1}^4\Pi_{\alpha_j\phantom{\sigma}\rho}^{\phantom{\alpha}\sigma} \nabla_\sigma R_{\alpha_1\ldots\alpha_4}^{\phantom{\ldots}\rho}\\
  +\sum_{j=1}^4(\nabla^\sigma P_{\alpha_j\rho\sigma})R_{\alpha_1\ldots\alpha_4}^{\phantom{\ldots}\rho}+\sum_{j=1}\Pi_{\alpha_j\sigma \rho}\nabla^\sigma R_{\alpha_1\ldots\alpha_4}^{\phantom{\ldots}\rho}\\
-2\sum_{j=1}^4 \nabla^\sigma B_{\alpha_j}^{\phantom{\alpha}\rho}\nabla_\sigma R_{\alpha_1\ldots\rho\ldots\alpha_4}-\sum_{j=1}^4 B_{\alpha_j}^{\phantom{\alpha}\rho}\Box R_{\alpha_1\ldots\rho\ldots\alpha_4}
\end{multline}
where we have used that
\begin{equation}\label{eq:wave:W:pi}
  \nabla^\sigma\Pi_{\alpha_j\sigma\rho}-\Box B_{\alpha_j\rho}=\nabla^\sigma\bigl(\Pi_{\alpha_j\sigma\rho}-\nabla_\sigma B_{\alpha_j\rho})=\nabla^\sigma P_{\alpha_j\rho\sigma}\,.
\end{equation}
In conclusion, we have schematically,
\begin{equation}
  \Box W=R\odot \mathcal{L}_T R+(R\odot R)\odot\pi+\nabla R\odot\nabla \pi+R\odot \nabla P+\nabla R\odot\nabla B+(R\odot R)\odot B
\end{equation}
or equivalently, since $\omega$ is anti-symmetric and
\begin{equation}
  \pi_{\alpha\beta}=B_{\alpha\beta}+B_{\beta\alpha}\,
\end{equation}
we can substitute $B$ in place of $\pi$ in \eqref{eq:wave:W:pi} and obtain, finally,
\begin{equation}\label{eq:wave:W}
  \Box W=R\odot W+\nabla R\odot\nabla B+R^2\odot B+R\odot\nabla P\,.
\end{equation}
The above can be viewed as a wave equation for $\mathcal{L}_T R$ with fast decaying coefficients, given that in this setting the curvature $R$ is  $\mathcal{O}(r^{-3})$; c.f.~Prop.~\ref{prop:non:radiating}. While this is morally the reason why the uniqueness theorems of \cite{alexakis-schlue-shao} are applicable, we have to revisit the underlying Carleman estimate because \eqref{eq:wave:W} is coupled to the differential equations \eqref{eq:transport:BP}.

\subsection{Carleman esimates in physical space}
\label{sec:carleman}

The results in \cite{alexakis-schlue-shao} concern linear wave equations on asymptotically flat spacetimes $(M,g)$.
The class of space-times $(\mathcal{M},g)$ of particular relevance here have \emph{positive mass} in the sense that
\begin{multline}\label{eq:metric:positivemass}
  g=g_{uu}\ud u^2-4K\ud u\ud v+g_{vv}\ud v^2+\sum_{A,B=1}^2 r^2 \gamma_{AB}\ud y^A\ud y^B\\ +\sum_{A=1}^2 (g_{Au}\,\ud u\,\ud y^A+g_{Av}\,\ud v\,\ud y^A)
\end{multline}
where
\begin{equation}
  K=1-\frac{2m}{r}\,,\qquad m\geq m_{\text{min}}>0\,,
\end{equation}
and the differential of $r$ satisfies
\begin{equation}\label{eq:dr}
  \Bigl(1+\frac{2m}{r}\Bigr)\ud r=\Bigl(1+\mathcal{O}(r^{-2})\Bigr)\ud v-\Bigl(1+\mathcal{O}(r^{-2})\Bigr)\ud u+\sum_{A=1}^2\mathcal{O}(r^{-1})\ud y^A\,.
\end{equation}
For the precise assumptions on the remaining metric components we refer the reader to Section~2.2 of \cite{alexakis-schlue-shao}. 
We recall here that while these assumptions include
\begin{equation}
  g_{uu}, g_{vv}=\mathcal{O}_1(r^{-3})\,,
\end{equation}
the cross-terms are allowed to behave like
\begin{equation}
  g_{Au}, g_{Av}=\mathcal{O}_1(r^{-1})\,.
\end{equation}
In particular, it is shown in Appendix A in \cite{alexakis-schlue-shao}, that the above class of metrics includes the Kerr family.
In Section~\ref{sec:data} we will use this fact to show that the space-times considered in Theorem~\ref{thm:stationary:periodic}, \ref{thm:stationary:non:radiating} are in the class of positive mass spacetimes in this sense of \cite{alexakis-schlue-shao}, c.f.~Lemma~\ref{lemma:data:ass} below.

In these coordinates we define
\begin{equation}\label{eq:f:F}
  f=-\frac{1}{uv}\,,\qquad F(f)=\log f-f^{2\delta}\,,
\end{equation}
for some fixed $\delta>0$. In view of the freedom of choice of the constant of integration in \eqref{eq:dr}, we can set $(u=0,v=0)$ on a $(u,v)$-level set where $r$ is arbitrarily large; in other words, by the choice of the $u=0$, $v=0$ level sets, the domain
\begin{equation}\label{eq:D:omega}
  \mathcal{D}_\omega=\bigl\{ (u,v,y^1,y^2): 0<f(u,v)<\omega\bigr\}
\end{equation}
is an \emph{arbitrarily small} neighborhood of spatial infinity, which extends to small portions of future and past null infinity $\mathcal{I}^+$, $\mathcal{I}^-$. (It is useful to keep in mind here the Schwarzschild geometry, where the freedom in the choice of the $u=0$ and $v=0$ hypersurfaces corresponds to the choice of a radius $r$ where $r^\ast:=v-u=0$, c.f.~Section~2.2 in \cite{alexakis-schlue-shao}.)

For convenience we introduce the weight function $\mathcal{W}$, and associated weighted norms, for any $\lambda>0$ and domain $\mathcal{D}=\mathcal{D}_\omega$, $\omega>0$,
\begin{equation}\label{eq:carleman:notation}
  \mathcal{W}=e^{-\lambda F}f^\frac{1}{2}\,,\qquad \lVert \cdot \rVert_{\mathcal{W}} = \lVert \mathcal{W} \cdot \rVert_2\,,\qquad \lVert \phi\rVert_2^2= \int_{\mathcal{D}}\phi^2\dm{g}
\end{equation}
relevant for the formulation and use of the Carleman estimates that follow.

\begin{thm}[Carleman estimate near infinity for linear waves, \cite{alexakis-schlue-shao}]\label{thm:carleman}
  Let $(M,g)$ be an asymptotically flat spacetime with positive mass $m\geq m_{\text{min}}>0$, and $\mathcal{D}_\omega$ a neighborhood of infinity of the form \eqref{eq:D:omega} for some $\omega>0$.
Let $\delta>0$, and let $\phi$ be a smooth function on $\mathcal{D}_\omega$ that vanishes to all orders at infinity, in the sense that for each $N\in\mathbb{N}$ there exists an exhaustion $(\mathcal{D}_k)$ of $\mathcal{D}_\omega$ such that
\begin{equation}
  \lim_{k\to\infty}\int_{\partial\mathcal{D}_k}r^N(\phi^2+\lvert\partial\phi\rvert^2)=0\,.
\end{equation}
Then, for $\omega>0$ sufficiently small and $\lambda>0$ sufficiently large,
\begin{equation}\label{eq:carleman:wave}
  \lambda^3\VertW{ f^\delta\phi }+\lambda \VertW{  f^{-\frac{1}{2}}\Psi^\frac{1}{2}\nabla \phi } \lesssim \VertW{ f^{-1}\Box\phi } \,,
\end{equation}
where $\Psi$ is defined by \footnote{As discussed in \cite{alexakis-schlue-shao}, the function  $\Psi$ measures the strength of the pseudo-convexity of the level sets of $f$.}
\begin{equation}
  \Psi:=\frac{m_{\text{min}}\log r}{r}\,.
\end{equation}
\end{thm}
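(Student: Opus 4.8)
The statement is quoted from \cite{alexakis-schlue-shao}, where it was established in a conformally inverted picture; the plan is to indicate the physical-space argument directly, since this is the form in which we shall invoke it. The proof is a Carleman estimate of the standard conjugation type. First I would introduce the conjugated field $\psi=\mathcal{W}\phi$ and the conjugated operator $P_\lambda\psi=\mathcal{W}\,\Box(\mathcal{W}^{-1}\psi)$, which expands as $P_\lambda=\Box+2\lambda\,\nabla F\cdot\nabla+(\lambda^2\lvert\nabla F\rvert^2+\lambda\,\Box F+\text{l.o.t.})$, the derivatives of $F=\log f-f^{2\delta}$ being computed from \eqref{eq:dr} and the metric form \eqref{eq:metric:positivemass}. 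One then splits $P_\lambda=\mathcal{S}+\mathcal{A}$ into its self-adjoint and skew-adjoint parts with respect to $\dm{g}$, so that
\[
  \lVert P_\lambda\psi\rVert_2^2=\lVert\mathcal{S}\psi\rVert_2^2+\lVert\mathcal{A}\psi\rVert_2^2+\langle[\mathcal{S},\mathcal{A}]\psi,\psi\rangle+(\text{boundary}),
\]
and the whole estimate is reduced to producing a coercive lower bound for the commutator (bulk) term.

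Second I would carry out the pseudoconvexity computation: the symbol of $[\mathcal{S},\mathcal{A}]$, tested on null covectors, reduces to the second fundamental form of the level sets $\{f=\text{const}\}$ contracted with null directions, plus the contribution of the subprincipal correction $-f^{2\delta}$. The point of this correction (and the source of the $\lambda^3$ gain on the zeroth-order term, versus the $\lambda^1$ gain on the gradient term) is to upgrade the borderline degenerate convexity of the level sets to a strict, quantitatively controlled one, with the margin measured precisely by $\Psi=m_{\text{min}}\log r/r$. This is where the positivity of the mass enters decisively: from \eqref{eq:dr} the null geodesics near spatial infinity are bent by the $-2m/r$ term in $K$, and it is exactly the sign $m>0$ that renders the level sets of $f$ pseudoconvex with respect to the characteristics of $\Box$ in a full neighborhood of $\iota^0$; for $m\le 0$ the estimate fails. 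Collecting terms yields $\langle[\mathcal{S},\mathcal{A}]\psi,\psi\rangle\gtrsim\lambda^3\lVert f^\delta\psi\rVert_2^2+\lambda\lVert f^{-1/2}\Psi^{1/2}\nabla\psi\rVert_2^2$ for $\omega$ small and $\lambda$ large, which after undoing the conjugation is the claimed \eqref{eq:carleman:wave}.

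Third, the boundary terms generated by all the integrations by parts are handled by the hypothesis that $\phi$ vanishes to all orders at infinity: along the exhaustion $(\mathcal{D}_k)$ the integrals $\int_{\partial\mathcal{D}_k}r^N(\phi^2+\lvert\partial\phi\rvert^2)$ tend to zero for every $N$, and since the weights $\mathcal{W}$ and their derivatives grow only polynomially in $r$ (equivalently in inverse powers of $f$), all boundary contributions drop out in the limit. The main obstacle is the pseudoconvexity step: one must track the competition between the principal curvature of the level sets (which degenerates like $1/r$ and would by itself only give a borderline estimate) and the $m$-dependent correction, uniformly up to $\iota^0$ and for the entire admissible metric class \eqref{eq:metric:positivemass}--\eqref{eq:dr} rather than merely for Kerr. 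Making this margin quantitative, and matching the resulting weights to the precise powers of $f$ and the factor $\Psi$ on the two sides of \eqref{eq:carleman:wave}, is the technical heart of the argument, and is precisely what was carried out in \cite{alexakis-schlue-shao}.
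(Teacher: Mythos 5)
Your proposal takes a genuinely different route from the paper. The paper does not re-derive the Carleman estimate at all: it quotes estimate (4.12) of Proposition~4.1 in \cite{alexakis-schlue-shao}, which is proved for the \emph{conformally inverted} metric $\overline{g}=\Omega^2 g$ with $\Omega^2=K^{-1}f^2$ and the rescaled unknown $\overline{\phi}=\Omega^{-1}\phi$, and then translates that inequality into physical space. The entire content of the paper's proof is the conformal bookkeeping: the transformation law $\Box_{\overline{g}}\overline{\phi}=\Omega^{-3}\Box_g\phi+\tfrac{1}{6}(\Omega^{-2}R_g-R_{\overline{g}})\Omega^{-1}\phi$ together with the bound $\lvert\Omega^{-2}R_g-R_{\overline{g}}\rvert\lesssim f^{-1+\delta}$ (so the scalar-curvature term is absorbed by the $\lambda^3$ zeroth-order term), the estimate $\lvert\nabla\Omega\rvert^2\lesssim f^3$ controlling the extra gradient contributions (under the constraint $4\delta<1$), and the conversion of the measures $\dm{\overline{g}}=\Omega^4\dm{g}$ and of the weights into the stated norms $\VertW{\cdot}$. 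You instead sketch a direct physical-space conjugation argument (split into self-adjoint and skew-adjoint parts, commutator lower bound via pseudoconvexity of the level sets of $f$, with the positive mass supplying the margin $\Psi$). Structurally this identifies the right mechanisms, and if carried out it would yield a self-contained proof where the paper's is a short reduction to an external result.

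The gap is that your proposal, as written, neither completes the direct derivation nor performs the translation that actually connects to the cited result. You defer the ``technical heart'' --- the quantitative commutator/pseudoconvexity computation producing exactly the weights $f^{\delta}$, $f^{-1/2}\Psi^{1/2}$, $f^{-1}$ --- to \cite{alexakis-schlue-shao}, but what is carried out there is the computation for $\Box_{\overline{g}}$ acting on $\overline{\phi}$ in the inverted picture, not for $\Box_g$ acting on $\phi$ in the form \eqref{eq:metric:positivemass}. These two statements are not identical: passing between them generates the scalar-curvature term, the $\lvert\nabla\Omega\rvert^2$ and mixed gradient terms, and the smallness condition on $\delta$, none of which appear in your argument. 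So either you must actually execute the physical-space commutator computation (which is a substantial calculation, not a citation), or you must supply the conformal translation --- which is precisely what the paper's proof consists of.
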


\begin{proof}
  We translate the Carleman estimate established in \cite{alexakis-schlue-shao} Proposition~4.1 into physical space.
The relevant estimate (4.12) therein,\footnote{We suppress for ease of presentation the common additional factor $e^{2\lambda f^{2\delta}}$ in all integrals.}
\begin{multline}\label{eq:carleman:inverted}
  \int_{\mathcal{D}}f^{-2\lambda+1}\lvert \Box_{\overline{g}}\overline{\phi}\rvert^2\dm{\overline{g}}\gtrsim
  \lambda^3\int_{\mathcal{D}}f^{-2\lambda+1}f^{-2+2\delta}\overline{\phi}^2\dm{\overline{g}}\\
  +\lambda\int_{\mathcal{D}} f^{-2\lambda+1}f^{2\delta-1}\lvert E_N\overline{\phi}\rvert^2\dm{\overline{g}}+\lambda\int_{\mathcal{D}} f^{-2\lambda+1} f^{-1}\Psi\sum_{A=1}^2\lvert E_A\overline{\phi}\rvert^2\dm{\overline{g}}\,,
\end{multline}
(where $E_A:A=1,2$ is tangential, and $E_N$ is normal to the level sets of $f$, and unit with respect to $\overline{g}$)
 is expressed for functions $\overline{\phi}$ in the inverted spacetime $(\mathcal{D},\overline{g})$ which are related to the corresponding functions $\phi$ in physical space $(\mathcal{D},g)$ via the transformations
\begin{subequations}
\begin{gather}
  \overline{g}=\Omega^2 g\qquad \Omega^2=K^{-1}f^2\qquad \dm{\overline{g}}=\Omega^4\dm{g}\\
  \overline{\phi}=\Omega^{-1}\phi\,.
\end{gather}
\end{subequations}
In particular, c.f.~Section~5.2 in \cite{alexakis-schlue-shao},
\begin{equation}\label{eq:Box:conformal}
  \Box_{\overline{g}}\overline{\phi} = \Omega^{-3}\Box_g\phi+\frac{1}{6}\Bigl(\Omega^{-2}R_g-R_{\overline{g}}\Bigr)\Omega^{-1}\phi
\end{equation}
where $R_g$, and $R_{\overline{g}}$ are the scalar curvatures of $g$, and $\overline{g}$ respectively, and we proved, c.f.~Section~5.4 therein, that
\begin{equation}
  \lvert \Omega^{-2} R_g-R_{\overline{g}}\rvert \lesssim f^{-1+\delta}\,.
\end{equation}

Firstly, we rewrite the zeroth order term on the right hand side of \eqref{eq:carleman:inverted},
\begin{equation}
  \int f^{-2\lambda+1}f^{-2+2\delta}\overline{\phi}^2\dm{\overline{g}}=\int f^{-2\lambda+1}K^{-2}f^{2\delta}\phi^2\dm{g}\,. \label{eq:carleman:zero}
\end{equation}
Therefore, by rewriting the principal term in \eqref{eq:carleman:inverted} using \eqref{eq:Box:conformal},
\begin{multline}
  \int_{\mathcal{D}} f^{-2\lambda+1}\lvert \Box_{\overline{g}}\overline{\phi}\rvert^2\dm{\overline{g}}\lesssim\\
  \lesssim\int_{\mathcal{D}} f^{-2\lambda+1}\Bigl\{\Omega^{-6}\lvert\Box_g\phi\rvert^2+\lvert \Omega^{-2}R_{\overline{g}}-R_g\rvert^2\Omega^{-2}\phi^2\Bigr\}\Omega^4\dm{g}\\
  \lesssim \int_{\mathcal{D}} f^{-2\lambda+1} K^2 f^{-2}\lvert \Box_g\phi\rvert^2\dm{g}+\int_{\mathcal{D}} f^{-2\lambda+1}K^{-2} f^{2\delta}\phi^2\dm{g}\,,
\end{multline}
we see that the scalar curvature term coming from the conformal inversion can be absorbed on the right hand side.

Secondly, we treat all derivatives uniformly (retaining only the weaker weight of the tangential derivatives), and write
\begin{multline}
  \int_{\mathcal{D}} f^{-2\lambda+1}f^{-1}\Psi\lvert\overline{\nabla}\overline{\phi}\rvert^2_{\overline{g}}\dm{\overline{g}}=\\=\int_{\mathcal{D}} f^{-2\lambda+1}f^{-1}\Psi\Omega^2\Bigl\{\Omega^{-4}\lvert\nabla\Omega\rvert^2\phi^2-2\Omega^{-3}\phi \nabla\Omega\cdot\nabla\phi+\Omega^{-2}\lvert\nabla\phi\rvert^2\Bigr\}\dm{g}\,;
\end{multline}
we observe that
\begin{equation}
  \lvert\nabla\Omega\rvert^2\lesssim K^{-3}\lvert\nabla K\rvert^2 f^2+K^{-1}\lvert\nabla f\rvert^2\lesssim f^3
\end{equation}
so the first term is bounded by 
\begin{equation}
  \int f^{-2\lambda+1}f^{-1}\Psi\Omega^2\Omega^{-4}\lvert\nabla\Omega\rvert^2\phi^2\dm{g} \lesssim \int f^{-2\lambda+1}\Psi\phi^2\dm{g}
\end{equation}
and is thus bounded by the zeroth order term \eqref{eq:carleman:zero} above, as long as $4\delta<1$.
We can thus also use the zeroth order term to bound the mixed term by Cauchy's inequality; in conclusion we have
\begin{multline}
    \int_{\mathcal{D}} f^{-2\lambda+1}f^{-1}\Psi\lvert\overline{\nabla}\overline{\phi}\rvert^2_{\overline{g}}\dm{\overline{g}}\gtrsim\\\gtrsim \int_{\mathcal{D}} f^{-2\lambda+1} f^{-1}\Psi\lvert\nabla\phi\rvert^2_g\dm{g}+\int_{\mathcal{D}} f^{-2\lambda+1}f^{2\delta}\phi^2\dm{g}\,.
\end{multline}
With the notation \eqref{eq:carleman:notation} we have thus obtained the physical space Carleman estimate,
\begin{equation}
  \lambda^3\lVert e^{-\lambda F}f^\frac{1}{2}f^\delta\phi\rVert_2+\lambda \lVert e^{-\lambda F}\Psi^\frac{1}{2}\nabla \phi\rVert_2 \lesssim \lVert e^{-\lambda F}f^\frac{1}{2}f^{-1}\Box\phi\rVert_2\,,
\end{equation}
which is precisely the statement \eqref{eq:carleman:wave}.
\end{proof}

Recalling the o.d.e.'s derived in Section~\ref{sec:ik} we also need a Carleman estimate for propagation equations of the form
\begin{equation}
  L\cdot \phi=\Phi\,,\qquad L=\frac{\partial}{\partial v}\,.
\end{equation}
We can readily adapt the proof of \cite{alexakis-ionescu-klainerman} Lemma~A.3 to obtain a Carleman estimate from infinity under the infinite order vanishing assumption.
\begin{lemma}\label{lemma:carleman:ode}
  Let $(M,g)$ be an asymptotically flat spacetime, and $L=\partial_v$ the outgoing null vectorfield in the coordinates \eqref{eq:metric:positivemass}.
Let $\phi$ be a smooth function on $\mathcal{D}_\omega$ that vanishes to all orders at infinity, in the sense that
for any $N\in\mathbb{N}$ there is an exhaustion $(\mathcal{D}_k)$ of $\mathcal{D}_\omega$ such that
\begin{equation}\label{eq:carleman:ode:assumption}
  \lim_{k\to\infty}\int_{\partial\mathcal{D}_k}r^N\phi^2=0\,.
\end{equation}
Then for any $q\geq 1$, and $\lambda>0$ sufficiently large,
\begin{equation}\label{eq:carleman:ode}
  \lambda \VertW{ \frac{1}{r}f^{-1}r^{-q}\phi } \lesssim \VertW{ f^{-1}r^{-q} L\cdot \phi }\,.
\end{equation}
\end{lemma}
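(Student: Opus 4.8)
The plan is to establish \eqref{eq:carleman:ode} by the standard conjugation and completion-of-squares argument for first order transport operators, adapting the proof of Lemma~A.3 in \cite{alexakis-ionescu-klainerman}. First I would conjugate by the weight: setting $P:=e^{-\lambda F}\phi$ and using $L=\partial_v$ together with the definition \eqref{eq:f:F} of $F$, one has the exact identity
$$e^{-\lambda F}\,L\phi = LP+\lambda\,(LF)\,P.$$
Since $\mathcal{W}^2=e^{-2\lambda F}f$, the right-hand side of \eqref{eq:carleman:ode} squares to $\VertW{f^{-1}r^{-q}L\phi}^2=\int_{\mathcal D}\mu\,(LP+\lambda(LF)P)^2\,\dm{g}$, where $\mu:=f^{-1}r^{-2q}$. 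Expanding the square produces a manifestly positive bulk term $\lambda^2\int_{\mathcal D}\mu\,(LF)^2P^2\,\dm{g}$, the nonnegative term $\int_{\mathcal D}\mu\,|LP|^2\,\dm{g}$, and a cross term $\lambda\int_{\mathcal D}\mu\,(LF)\,L(P^2)\,\dm{g}$ which I will handle by integration by parts.

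Next I would integrate the cross term by parts along $L$, writing $\int_{\mathcal D}\mu(LF)L(P^2)\,\dm{g}=-\int_{\mathcal D}\bigl(L(\mu\,LF)+\mu\,(LF)\,\mathrm{div}\,L\bigr)P^2\,\dm{g}$ plus a boundary contribution. Exhausting $\mathcal D_\omega$, the boundary term vanishes by the hypothesis \eqref{eq:carleman:ode:assumption}: the weights $e^{-2\lambda F}=f^{-2\lambda}e^{2\lambda f^{2\delta}}$ and the powers of $f,r$ are polynomially bounded in $r$, so the infinite-order vanishing of $\phi$ (with $N$ arbitrary) annihilates them. The heart of the matter is then a pointwise comparison of this lower-order cross term against the positive bulk term. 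From \eqref{eq:f:F} one computes $LF=\tfrac{Lf}{f}(1-2\delta f^{2\delta})$, and from $f=-1/(uv)$ that $Lf=\partial_v f=-f/v$, whence $LF=-\tfrac1v(1-2\delta f^{2\delta})$, so $|LF|\approx |v|^{-1}$ near infinity. Using $\partial_v r\approx 1$ and $\mathrm{div}\,L\approx 2/r$ from \eqref{eq:dr}--\eqref{eq:metric:positivemass}, together with $r\gtrsim |v|$ in $\mathcal D_\omega$ (since $r\sim v-u$ with $u<0<v$), I would verify that both $|L(\mu\,LF)|$ and $|\mu\,(LF)\,\mathrm{div}\,L|$ are $\lesssim \mu\,(LF)^2$, pointwise and uniformly in $\lambda$.

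Consequently, for $\lambda$ sufficiently large the cross term is absorbed into half of the bulk term, yielding
$$\VertW{f^{-1}r^{-q}L\phi}^2=\int_{\mathcal D}\mu\,(LP+\lambda(LF)P)^2\,\dm{g}\ \gtrsim\ \frac{\lambda^2}{2}\int_{\mathcal D}\mu\,(LF)^2P^2\,\dm{g}.$$
Finally, invoking the lower bound $|LF|\gtrsim 1/r$ (which follows from $|LF|\approx|v|^{-1}$ and $r\gtrsim|v|$) and $P=e^{-\lambda F}\phi$, the right-hand side dominates $\lambda^2\int_{\mathcal D}e^{-2\lambda F}f^{-1}r^{-2q-2}\phi^2\,\dm{g}=\lambda^2\VertW{\tfrac1r f^{-1}r^{-q}\phi}^2$, which is \eqref{eq:carleman:ode} after taking square roots. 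I expect the main obstacle to be the bookkeeping in the pointwise comparison: establishing $|LF|\gtrsim 1/r$ with correct constants and checking that every term generated by the integration by parts — in particular $\mathrm{div}\,L$ relative to the volume form $\dm{g}$ — is genuinely of lower order than $\mu\,(LF)^2$, so that the absorption for large $\lambda$ goes through.
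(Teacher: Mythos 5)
Your proof is correct and follows essentially the same route as the paper's: conjugate by the Carleman weight $e^{-\lambda F}$, exploit the positivity of the resulting first-order operator, integrate the cross term by parts using the infinite-order vanishing to kill the boundary contribution, and absorb the lower-order terms $L(LF)\simeq (LF)^2\simeq v^{-2}$ and $(LF)\,\mathrm{div}\,L=\mathcal{O}(v^{-1}r^{-1})$ for $\lambda$ large, finishing with $\lvert LF\rvert\gtrsim 1/r$ from $r\gtrsim v$. The only (cosmetic) differences are that the paper pairs $e^{-\lambda F}L\phi$ against $(LF)\psi$ and applies Cauchy--Schwarz rather than expanding the full square, and it first proves the unweighted estimate and then applies it to $f^{-1/2}r^{-q}\phi$ instead of carrying the weight $\mu=f^{-1}r^{-2q}$ through the integration by parts as you do.
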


\begin{proof}
  Given a smooth functions $\phi$ and $F$ on $\mathcal{D}$, consider
\begin{equation}
  \psi=e^{-\lambda F}\phi\,, 
\end{equation}
then for any vectorfield $L$,
\begin{equation}
  e^{-\lambda F}L\cdot \phi=\lambda (L\cdot F)\psi+L\cdot \psi
\end{equation}
and so
\begin{multline}
  \int_{\mathcal{D}} e^{-\lambda F}(L\cdot \phi) (L\cdot F)\psi=\int_{\mathcal{D}}\lambda(L\cdot F)^2\psi^2+\frac{1}{2}(L\cdot F)L\cdot \psi^2\\
  =\int_{\mathcal{D}} \lambda(L\cdot F)^2\psi^2-\frac{1}{2}(L\cdot LF)\psi^2-\frac{1}{2}(L\cdot F)(\nabla_\alpha L^\alpha)\psi^2\,.
\end{multline}
as long as there is no contribution from the boundary term,
\begin{equation}
  \int_{\partial\mathcal{D}} (L\cdot F)\psi^2 =0\,.
\end{equation}

With the choices \eqref{eq:f:F}, and $L=\partial_v$, we have
\begin{equation}
  (L\cdot F)^2\simeq\frac{1}{v^2}\simeq L\cdot (LF)
\end{equation}
and in the asymptotically flat setting of \eqref{eq:metric:positivemass},
\begin{equation}
  \nabla_\alpha L^\alpha=\mathcal{O}(r^{-1})\,;
\end{equation}
moreover, by assumption \eqref{eq:carleman:ode:assumption},
\begin{equation}
  \lim_{k\to\infty}\int_{\partial \mathcal{D}_k}(L\cdot F)\psi^2=\lim_{k\to\infty}\int_{\partial\mathcal{D}_k}\frac{1}{v}f^{-2\lambda}\phi^2=0\,,\qquad \lambda\gg 1\,.
\end{equation}
Therefore
\begin{equation}
  \int e^{-\lambda F}(L\cdot \phi) (L\cdot F)\psi\gtrsim \lambda \int \frac{1}{v^2}\psi^2
\end{equation}
and hence by Cauchy-Schwarz,
\begin{equation}\label{eq:carleman:ode:prelim}
  \lVert e^{-\lambda F}L\cdot \phi\rVert_2\gtrsim \lambda \lVert \frac{1}{v}\psi\rVert_2 
\end{equation}
or, since $r\gtrsim v$,
\begin{equation}
  \lambda \lVert e^{-\lambda F}\frac{1}{r}\phi\rVert_2\lesssim \lVert e^{-\lambda F}L\cdot \phi\rVert_2\,.
\end{equation}

Finally, we can apply \eqref{eq:carleman:ode:prelim} to the function $f^{-\frac{1}{2}}r^{-q}\phi$; since
\begin{equation}
  L\cdot f^{-\frac{1}{2}}r^{-q}=\mathcal{O}(v^{-1}f^{-\frac{1}{2}}r^{-q})
\end{equation}
we have under the same assumptions
\begin{equation}
  \lambda \lVert e^{-\lambda F}\frac{1}{r}f^{-\frac{1}{2}}r^{-q}\phi\rVert_2\lesssim \lVert e^{-\lambda F} f^{-\frac{1}{2}}r^{-q}L\cdot \phi\rVert_2
\end{equation}
which is precisely the statement of \eqref{eq:carleman:ode}.
\end{proof}

\subsection{Construction of space-time domain from initial data}
\label{sec:data}

In this section we prove that the space-times under consideration in Theorems~\ref{thm:stationary:periodic}, \ref{thm:stationary:non:radiating}, arising from initial data which are suitably close to a Kerr solution, fall under the assumptions of Section~\ref{sec:carleman}. The argument applies to any space-time that admits a time-like vectorfield which statisfies the Killing equation to sufficiently high order, in particular time-periodic, and smooth non-radiating space-times as established in Section~\ref{sec:vanishing}.

In \eqref{eq:initial:data} we have imposed that $g\rvert_{\Sigma}$, and $k\rvert_\Sigma$  asymptote suitably fast to the values induced by a Kerr metric $g_{(m,a)}^{\text{Kerr}}$. More specifically, we assume that $g\rvert_\Sigma$ and $\partial_t g\rvert_\Sigma$ are obtained on $\Sigma=\{t=0\}$ from the expression (for some $m>0$, and $0<\lvert a\rvert< m$),
\begin{multline}\label{eq:initial:data:specific}
  g=g^{\text{Kerr}}_{(m,a)}+g^\infty=
  -\Bigl(1-\frac{2m}{r}\Bigr)\ud t^2+\mathcal{O}_2^\infty(r^{-3})\ud t^2\\
  +\Bigl(1+\frac{2m}{r}-\frac{a^2}{r^2}\sin^2\vartheta^1+\frac{2m}{r}\frac{a^2}{r^2}\cos^2\vartheta^1\Bigr)\ud r^2+\mathcal{O}_2^\infty(r^{-4})\ud r^2\\
+r^2\bigl(\ud \vartheta^1\bigr)^2+r^2\sin^2\vartheta^1\bigl(\ud\vartheta^2\bigr)^2+\sum_{A,B=1}^2\mathcal{O}_2^\infty(r^{0})\ud \vartheta^A\ud\vartheta^B\\
+\sum_{A=1}^2\mathcal{O}_2^\infty(r^{-1})\ud t\ud \vartheta^A+\sum_{A=1}^2\mathcal{O}_2^\infty(r^{-3})\ud r\ud \vartheta^A+\mathcal{O}_2^\infty(r^{-4})\ud t\ud r\,.
\end{multline}
Note that the explicit leading order terms arise from the expansion in $1/r$ of the components of the Kerr metric in Boyer Lindquist coordinates; see e.g.~\cite{hawking-ellis}.


\begin{lemma}\label{lemma:data:ass}
  Let $(M,g)$ be a space-time satisfying the asymptotics of Section~\ref{sec:asymptotics}, and $g$ on $\Sigma\subset\mathcal{M}$ be given by \eqref{eq:initial:data:specific}.
  Suppose the conclusion of Prop.~\ref{prop:time:periodic:killing:all:orders} hold, namely with $T$ constructed as in Section~\ref{sec:gauge} we have that \eqref{eq:prop:killing:all:orders} holds.
  Then there exists a domain $\mathcal{D}$ whose boundary at future and past null infinity contains segments $\mathcal{I}^+_{u_0}$, and $\mathcal{I}^-_{v_0}$, respectively, c.f.~Fig.~\ref{fig:thm:domain}, and a system of coordinates on $\mathcal{D}$ for which $g$ satisfies the assumptions of Theorem~\ref{thm:carleman}, c.f.~Section.~\ref{sec:carleman}.  
\end{lemma}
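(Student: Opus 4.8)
The plan is to exhibit coordinates $(u,v,y^1,y^2)$ on a neighborhood $\mathcal{D}$ of spatial infinity in which $g$ takes the positive-mass form \eqref{eq:metric:positivemass}, with $K=1-2m/r$, $m\ge m_{\text{min}}>0$, the short-range diagonal terms $g_{uu},g_{vv}=\mathcal{O}_1(r^{-3})$, the long-range angular terms $g_{Au},g_{Av}=\mathcal{O}_1(r^{-1})$, and the differential of the area radius obeying \eqref{eq:dr}. The outgoing optical function $u$ and the hypersurfaces $C_u^+$ with $T=\partial_u$, $L=\partial_s$ are already available from Section~\ref{sec:gauge}. The first step is to complete this to an approximate double-null foliation by constructing a complementary ingoing optical function $v$---solving $g^{\alpha\beta}\partial_\alpha v\,\partial_\beta v=0$ with data on a fixed outgoing cone, normalized so that $v$ matches the affine/area parametrization of $\Lb$---and then to reparametrize the radial variable from the affine parameter $s$ of \eqref{eq:gauge:metric} to $v$. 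Carrying out the analogous construction toward past null infinity and gluing near $\iota^0$ produces the domain $\mathcal{D}$ whose boundary contains the segments $\mathcal{I}^+_{u_0}$ and $\mathcal{I}^-_{v_0}$.

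The point of passing to $(u,v)$ is that the large coefficient $-l\,\ud u^2$ (with $l\to 1$) of \eqref{eq:gauge:metric} is absorbed into the principal cross-term once $v$ is also optical: the metric then organizes as $-4K\,\ud u\,\ud v+r^2\gamma_{AB}\ud y^A\ud y^B+\ldots$ with $K=1-2m/r+\ldots$, where $m=m(u,v)$ is the Hawking mass \eqref{eq:hawking:mass:def} of the spheres $S_{u,v}$, and the residual diagonal coefficients $g_{uu},g_{vv}$ now measure only the \emph{failure} of $u,v$ to be exactly null, hence are governed by the Ricci coefficients $\omegab,\lambda$ and the torsion $\zeta$. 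Reading off decay from the asymptotics of Section~\ref{sec:asymptotics:limits}---in particular $\limd r\omegab=0$ from \eqref{eq:asymptotics:omegab} and $\lvert\zeta\rvert_{\gs}=\mathcal{O}(r^{-2})$ from \eqref{eq:asymptotics:zeta}---together with the full expansion \eqref{eq:metric:expansion}, yields $g_{uu},g_{vv}=\mathcal{O}_1(r^{-3})$, while the Kerr rotation of \eqref{eq:initial:data:specific} (the angular-momentum terms of size $a/r$) is precisely what forces the slower rate $g_{Au},g_{Av}=\mathcal{O}_1(r^{-1})$. The relation \eqref{eq:dr} is then the statement that $v-u$ equals, to leading order, the tortoise coordinate with $\ud r^\ast=K^{-1}\ud r$, the errors being the $\mathcal{O}(r^{-2})$ corrections of the expansion.

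The hypothesis that $T$ is Killing to all orders (Proposition~\ref{prop:time:periodic:killing:all:orders}) enters to guarantee that these metric coefficients are $u$-independent up to errors decaying faster than any power of $r$; this is what propagates the Kerr-like initial data \eqref{eq:initial:data:specific} on $\Sigma$ into the Kerr-like spacetime asymptotics \eqref{eq:metric:expansion} throughout $\mathcal{D}$, and lets me identify $K$ with a mass that is constant to leading order. For positivity, the parameter $m>0$ of \eqref{eq:initial:data:specific} agrees with the limiting Hawking mass, which by Proposition~\ref{prop:time:periodic:leading} satisfies $P=-2M$ with $M>0$ the constant Bondi mass; this furnishes $m\ge m_{\text{min}}>0$. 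Using the freedom in the constant of integration in \eqref{eq:dr}, I place the $u=0$, $v=0$ level sets at arbitrarily large $r$, so that $\mathcal{D}_\omega$ as in \eqref{eq:D:omega} is an arbitrarily small neighborhood of $\iota^0$; and since the Kerr family itself belongs to this class (Appendix~A of \cite{alexakis-schlue-shao}), the remaining corrections, lying in the spaces $\mathcal{O}_2^\infty$, cannot spoil any of the required rates.

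The main obstacle is the construction and bookkeeping of the (approximate) double-null coordinates together with the sharp separation of decay rates: one must confirm that $g_{uu},g_{vv}$ genuinely decay at the short rate $r^{-3}$, so that $g_{uu}\ud u^2+g_{vv}\ud v^2$ is negligible against the principal term $-4K\,\ud u\,\ud v$, while the angular cross-terms carry only the weaker $r^{-1}$ decay dictated by the angular momentum. Verifying \eqref{eq:dr}, which couples the normalization of $u,v$ to the area radius and thereby controls the geometry of the level sets of $f=-1/(uv)$ on which the pseudo-convexity underlying Theorem~\ref{thm:carleman} depends, is the delicate point on which applicability of the Carleman estimate ultimately rests.
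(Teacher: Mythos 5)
Your proposal shares the paper's overall strategy---use the positivity of $m$ from \eqref{eq:initial:data:specific} together with the all-orders Killing property to propagate the Kerr-like structure of the data into a full neighborhood of spatial infinity, then exploit the freedom in the integration constant of \eqref{eq:dr}---but it omits the step on which the whole argument actually turns. The tensorial statement \eqref{eq:prop:killing:all:orders} says nothing about the components $g_{\mu\nu}(t,r,\vartheta)$ in the initial-data coordinates until one knows how the candidate vectorfield $T$ (constructed from \emph{future null infinity} in Section~\ref{sec:gauge}) sits relative to $\partial_t$ on $\Sigma$. The paper's proof therefore begins by writing $T=(1+\alpha)\partial_t+\beta\partial_r+\gamma^A\partial_{\vartheta^A}$ and extracting, from selected components of $\mathcal{L}_Tg=\mathcal{O}(r^{-k})$ combined with the explicitly computed Lie derivatives of \eqref{eq:initial:data:specific}, the o.d.e.'s in $r$ that force $\alpha,\beta=\mathcal{O}(r^{-3})$, $\gamma^A=\mathcal{O}(r^{-2})$; this is precisely where the assumed $r^{-4}$-order agreement of $g_{rr}$ and $g_{tr}$ with Kerr is consumed. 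Only after this identification can one pass to coordinates adapted to the flow of $T$ and integrate $\partial_{\tilde t}g_{\mu\nu}=(\mathcal{L}_Tg)_{\mu\nu}=\mathcal{O}(r^{-4})$ in time to conclude that $g$ retains the form \eqref{eq:initial:data:less:specific} on all of $\mathcal{D}$. Your assertion that the all-orders Killing property ``propagates the Kerr-like initial data into the Kerr-like spacetime asymptotics'' skips this; a priori $T$ could differ from $\partial_t$ by slowly decaying terms that contaminate the metric components at exactly the orders ($r^{-1}$, $r^{-2}$ in $g_{uu},g_{vv}$) that the positive-mass form \eqref{eq:metric:positivemass} requires to vanish. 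Note also that \eqref{eq:metric:expansion} alone only controls $g^\infty$ at order $r^{-1}$, so it cannot by itself deliver the $\mathcal{O}_1(r^{-3})$ rates; the remark that the $\mathcal{O}_2^\infty$ corrections ``cannot spoil any of the required rates'' is not justified.

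The second divergence is that you propose to build a genuine ingoing optical function $v$ by solving the eikonal equation and to read off \eqref{eq:metric:positivemass} from an approximate double-null gauge, whereas the paper never constructs a second optical function: having shown $g$ is Kerr-like throughout $\mathcal{D}$, it simply invokes the explicit ``comoving coordinates'' transformation of Appendix~A of \cite{alexakis-schlue-shao}, which is already verified there to put such metrics in the required form. Your route is not impossible, but the key estimates are asserted rather than derived: in particular the heuristic that $g_{uu},g_{vv}$ ``measure only the failure of $u,v$ to be exactly null'' is incorrect---even in an exact double-null gauge one has $g_{uu}=\gamma_{AB}b^Ab^B\neq 0$ from the shift, so the $\mathcal{O}_1(r^{-3})$ decay must be traced through the precise orders of the Kerr expansion rather than through $\omegab$, $\lambda$, $\zeta$ alone. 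Since you yourself flag this bookkeeping as the unresolved ``main obstacle,'' the proposal as written does not close the proof.
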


\begin{proof}
We know that $g$ on $\Sigma$ is a perturbation of the Kerr metric.\footnote{As already noted above, for a general asymptotically flat initial data set, this can be achieved by choosing $m$, and $a$, suitably.} We will use the existence of a time-like vectorfield $T$ such that \eqref{eq:prop:killing:all:orders} holds to prove that $g$ is in fact a perturbation of the Kerr metric in the entire domain of dependence of $\Sigma\cap \{r\geq R\}$, for $R$ sufficiently large. The transformation to ``comoving'' coordinates of Appendix~A in \cite{alexakis-schlue-shao} then shows that $(\mathcal{D},g)$ is positive mass space-time, as discussed in Section~\ref{sec:carleman}, and in particular satisfies the assumptions of  Theorem~\ref{thm:carleman}.

Firstly, we show that on $\Sigma$ the vectorfield $T$ obtained in Section~\ref{sec:vanishing} agrees to leading orders with the vectorfield $\partial_t$ in Boyer Lindquist coordinates; in fact
\begin{equation}\label{eq:data:lemma:T:t}
  T-\frac{\partial}{\partial t}=\mathcal{O}(r^{-3})\frac{\partial}{\partial r}+\mathcal{O}(r^{-3})\frac{\partial}{\partial t}+\sum_{A=1}^2\mathcal{O}(r^{-2})\frac{\partial}{\partial \vartheta^A}\,.
\end{equation}
To prove this we write
\begin{equation}
  T=(1+\alpha)\partial_t+\beta\partial_r+\gamma^A\partial_{\vartheta^A}\,.
\end{equation}
Recall that in Section~\ref{sec:gauge} we have first defined $T$ on a single cut $S_0^\ast$ of future null infinity $\mathcal{I}^+$, which is specified by a freely chosen surface $S_0\subset\Sigma$; c.f.~Fig.~\ref{fig:foliation}. If $(\mathcal{M},g)$ is time-periodic we may choose $S_0$ such that the limiting sphere $S_0^\ast$ on $C_0$ can be identified with the sphere at infinity on $\Sigma$. Then by construction, $\alpha,\beta,\gamma\to 0$ as $r\to \infty$. In the general case this is achieved by taking $S_0=\partial B_d$, of an exhaustion $B_d\subset\Sigma$ of the initial hypersurface by balls $B_d$, and thus normalises $T$ on $S_0^\ast$ to agree with $\partial_t$ in the limit as $d\to\infty$.

Now consider the equation
\begin{multline}\label{eq:LTg}
  (\mathcal{L}_T g)_{\mu\nu}=(\partial_\mu\alpha) g_{t\nu}+(\partial_\nu \alpha)g_{\mu t}+(1+\alpha)(\mathcal{L}_{\partial_t}g)_{\mu\nu}\\
  +(\partial_\mu\beta)g_{r\nu}+(\partial_\nu\beta)g_{\mu r}+\beta(\mathcal{L}_{\partial_r} g)_{\mu\nu}
  +(\partial_\mu\gamma^A)g_{A\nu}+(\partial_\nu\gamma^A) g_{\mu A}+\gamma^A(\mathcal{L}_{\partial_{\vartheta^A}}g)_{\mu\nu}\,.
\end{multline}
On one hand, we have by \eqref{eq:prop:killing:all:orders}, that
 \begin{equation}\label{eq:data:LTg}
   \mathcal{L}_T g=\mathcal{O}(r^{-k})\qquad\forall k\in\mathbb{N}\,.
 \end{equation}
On the other hand, by assumption \eqref{eq:initial:data:specific}, we calculate
\begin{subequations}\label{eq:data:Ltg}
\begin{gather}
  (\mathcal{L}_{\partial_t}g)_{rr}=\mathcal{O}_1^\infty(r^{-4})\,,\label{eq:data:Ltg:rr}
  (\mathcal{L}_{\partial_t}g)_{tr}=\mathcal{O}_1^\infty(r^{-4})\,,
  (\mathcal{L}_{\partial_t}g)_{r\vartheta^A}=\mathcal{O}_1^\infty(r^{-3})\,,\\
  (\mathcal{L}_{\partial_r}g)_{rr}=-\frac{2m}{r^2}+\mathcal{O}_2^\infty(r^{-3})\,,
  (\mathcal{L}_{\partial_r}g)_{tr}=\mathcal{O}_2^\infty(r^{-5})\,,
  (\mathcal{L}_{\partial_r}g)_{r\vartheta^A}=\mathcal{O}_2^\infty(r^{-4})\,,\\
  (\mathcal{L}_{\partial_{\vartheta^A}}g)_{rr}=\mathcal{O}_1^\infty(r^{-2})\,,
  (\mathcal{L}_{\partial_{\vartheta^A}}g)_{tr}=\mathcal{O}_1^\infty(r^{-4})\,,
  (\mathcal{L}_{\partial_{\vartheta^A}}g)_{r\vartheta^B}=\mathcal{O}_1^\infty(r^{-3})  \,.
\end{gather}
\end{subequations}
Inserting \eqref{eq:data:LTg} and \eqref{eq:data:Ltg} into \eqref{eq:LTg} we obtain equations for $\alpha$, $\beta$, and $\gamma$.
Indeed the $(\mu,\nu)=(r,\vartheta^A)$ component of the equation \eqref{eq:LTg} reads, to leading order,
\begin{equation}
  (\partial_r\gamma^A) +\frac{1}{r^2}(\partial_{\vartheta^A}\beta)+\frac{1}{r^2}(\partial_r\alpha)\mathcal{O}_2^\infty(r^{-1})=\mathcal{O}(r^{-k})\qquad\forall k\in\mathbb{N}\,.
\end{equation}
In view of $\beta\to 0$, $\alpha\to 0$,  and smoothness of these functions, we have that $\partial_{\vartheta^A}\beta,\partial_r\alpha=\mathcal{O}(r^{-1})$ and we obtain that $\gamma^A=\mathcal{O}(r^{-2})$.
From the $(\mu,\nu)=(r,r)$ component of \eqref{eq:LTg} we get the equation
\begin{equation}
  2(\partial_r\beta)\Bigl(1+\frac{2m}{r}\Bigr)-\Bigl(\frac{2m}{r^2}+\mathcal{O}_2(r^{-2})\Bigr)\beta+\gamma^A\mathcal{O}_1^\infty(r^{-2})=\mathcal{O}(r^{-4})\,;
\end{equation}
note that we here used \eqref{eq:data:Ltg:rr}, which directly reflects the assumption made on $g_{rr}$ component up to order $r^{-4}$ in \eqref{eq:initial:data:specific}.
This implies that $\beta=\mathcal{O}(r^{-3})$.
Finally the  $(\mu,\nu)=(r,t)$ component reads
\begin{equation}
  -2(\partial_r\alpha)\Bigl(1-\frac{2m}{r}\Bigr)+(\partial_r\gamma^A)\mathcal{O}_2^\infty(r^{-1})+\gamma^A\mathcal{O}_1^\infty(r^{-3})=\mathcal{O}(r^{-4})
\end{equation}
where we used \eqref{eq:data:Ltg:rr}, which is due to the assumption on $g_{tr}$ in \eqref{eq:initial:data:specific}.
This also implies that $\alpha=\mathcal{O}(r^{-3})$.

Secondly, we pass to a new system of coordinates $(\tilde{t},r,\vartheta^1,\vartheta^2)$, where $\tilde{t}=0$ on $\Sigma$, $T\tilde{t}=1$, and the coordinates $(r,\vartheta^1,\vartheta^2)$, are defined to be constant along the integral curves of $T$. Now \eqref{eq:data:lemma:T:t} shows that $g$ on $\Sigma$ is again of the form
\begin{multline}\label{eq:initial:data:less:specific}
  g= -\Bigl(1-\frac{2m}{r}\Bigr)\ud \tilde{t}^2+\mathcal{O}_2^\infty(r^{-3})\ud \tilde{t}^2+\mathcal{O}_2^\infty(r^{-3})\ud \tilde{t}\ud r+\sum_{A=1}^2\mathcal{O}_2^\infty(r^{-1})\ud \tilde{t}\ud \vartheta^A\\
  +\Bigl(1+\frac{2m}{r}-\frac{a^2}{r^2}\sin^2\vartheta^1\Bigr)\ud r^2+\mathcal{O}_2^\infty(r^{-3})\ud r^2+\sum_{A=1}^2\mathcal{O}_2^\infty(r^{-3})\ud r\ud \vartheta^A\\
+r^2\bigl(\ud \vartheta^1\bigr)^2+r^2\sin^2\vartheta^1\bigl(\ud\vartheta^2\bigr)^2+\sum_{A,B=1}^2\mathcal{O}_2^\infty(r^{0})\ud \vartheta^A\ud\vartheta^B\,.
\end{multline}

Thirdly, in these coordinates,
\begin{subequations}\label{eq:data:lemma:integrate}
\begin{gather}
  g_{\mu\nu}(\tilde{t},r,\vartheta^1,\vartheta^2)=  g_{\mu\nu}(0,r,\vartheta^1,\vartheta^2)+\int_0^{\tilde{t}}\partial_{\tilde{t}}  g_{\mu\nu}(t,r,\vartheta^1,\vartheta^2)\ud t\\
  \lvert g_{\mu\nu}(\tilde{t},r,\vartheta^1,\vartheta^2)-  g_{\mu\nu}(0,r,\vartheta^1,\vartheta^2)\rvert \leq\int_0^{\tilde{t}}\lvert(\mathcal{L}_Tg)_{\mu\nu}(t,r,\vartheta^1,\vartheta^2)\rvert\ud t\,.
\end{gather}
\end{subequations}
Therefore, on a domain of the form
\begin{equation}
  \mathcal{D}=\bigl\{(\tilde{t},r,\vartheta^1,\vartheta^2): r\geq R, \lvert \tilde{t}\rvert\leq r+2m\log r -(R+2m\log R)\bigr\}
\end{equation}
we have from \eqref{eq:data:lemma:integrate}, by virtue of \eqref{eq:prop:killing:all:orders} say for $k=4$, $\vert\mathcal{L}_Tg\rvert=\mathcal{O}(r^{-4})$ that on $\mathcal{D}$ thus defined,
\begin{equation}
  \lvert g_{\mu\nu}(\tilde{t},r,\vartheta^1,\vartheta^2)-g_{\mu\nu}(0,r,\vartheta^1,\vartheta^2)\rvert=\mathcal{O}(r^{-3})\,.
\end{equation}
Thus $g$ is of the form \eqref{eq:initial:data:less:specific} in the entire domain $\mathcal{D}$, and the transformation to ``comoving coordinates'' as discussed in Appendix~A of \cite{alexakis-schlue-shao} can be applied to bring the metric into the desired form \eqref{eq:metric:positivemass}, such that all assumptions of Section~\ref{sec:carleman} are satisfied.
\end{proof}

\subsection{Unique continuation of time-like Killing vectorfields from infinity}
\label{sec:uniqueness}

We shall now prove a unique continuation result from infinity for Killing vectorfields based on the Carleman estimates of Theorem~\ref{thm:carleman} and Lemma~\ref{lemma:carleman:ode}.

\begin{prop}[Extension of Killing vectorfields from infinity]\label{prop:stationary:conditional}
Let $(M,g)$ be an asymptotically flat spacetime with positive mass (c.f.~Section~\ref{sec:carleman}) and rapidly decaying curvature in the sense that uniformly with respect to an orthonormal frame
\begin{equation}\label{eq:stationary:condition:curvature}
  \lvert R_{\alpha\beta\gamma\delta}\rvert =\mathcal{O}(r^{-3})\,,\qquad \vert \nabla_\epsilon R_{\alpha\beta\gamma\delta}\rvert =\mathcal{O}(r^{-4})\,.
\end{equation}
Let $T$ be a time-like vectorfield on $\mathcal{D}_\omega$, $\omega>0$, chosen to be the binormal on future null infinity, and extended according to (c.f.~Section~\ref{sec:gauge})
\begin{equation}
  [L,T]=0\,,
\end{equation}
where $L$ is the affine outgoing null geodesic vectorfield.
Then we have: \\If $T$ satisfies the Killing equation to all orders at infinity, i.e.~for all $N\in\mathbb{N}$ there is an exhaustion $(\mathcal{D}_k)$ such that 
  \begin{equation}\label{eq:stationary:condition}
    \lim_{k\to\infty}\int_{\partial\mathcal{D}_k}r^N\mathcal{L}_T g=0\qquad
    \lim_{k\to\infty}\int_{\partial\mathcal{D}_k}r^N\mathcal{L}_T R=0\,,
  \end{equation}
  then $(M,g)$ is in fact a genuine Killing vectorfield for $(\mathcal{M},g)$, namely
  \begin{equation}
    \mathcal{L}_T g \equiv 0 \,,\qquad \mathcal{L}_T R \equiv 0 \qquad\text{ : on }\mathcal{D}_{\omega^\prime}
  \end{equation}
  for some $0<\omega^\prime<\omega$.
\end{prop}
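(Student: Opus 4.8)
The plan is to show that the three Ionescu--Klainerman quantities $B$, $P$ and $W=\hat{\mathcal{L}}_TR$ constructed in Section~\ref{sec:ik} all vanish identically on a slightly smaller neighborhood $\mathcal{D}_{\omega'}$: once this is established we read off $\mathcal{L}_Tg=\pi=B+B^{\mathrm{t}}\equiv 0$ and $\mathcal{L}_TR=W+B\odot R\equiv 0$, which is exactly the conclusion. First I would record that the hypothesis \eqref{eq:stationary:condition} feeds the Carleman machinery: $B$, $\omega$ and $P$ are built from $\pi=\mathcal{L}_Tg$ and its derivatives through the definitions of Section~\ref{sec:ik} and the transport equation \eqref{eq:omega}, while $W$ is built from $\mathcal{L}_TR$ and $B\odot R$; since $\mathcal{L}_Tg$ and $\mathcal{L}_TR$ vanish to all orders at infinity, and $\omega$ inherits this by integrating \eqref{eq:omega} inward from infinity, the components of $B$, $P$ and $W$ all satisfy the infinite-order vanishing hypotheses required by both Theorem~\ref{thm:carleman} and Lemma~\ref{lemma:carleman:ode}. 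Throughout I would express all tensors in the asymptotically Cartesian frame of the preamble to Section~\ref{sec:extension}, so that every component is a scalar function and the coefficients in \eqref{eq:wave:W} and \eqref{eq:transport:BP} decay at the rates dictated by \eqref{eq:stationary:condition:curvature}.

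Next I would apply the two Carleman estimates to the coupled system. To each component of $W$ I apply the wave estimate \eqref{eq:carleman:wave}, and to each component of $B$ and $P$ I apply the transport estimate \eqref{eq:carleman:ode} with $L=\partial_v$, choosing the exponent $q$ to match the natural decay weight of the quantity in question. On the right-hand sides, the wave equation \eqref{eq:wave:W} controls $\VertW{f^{-1}\Box W}$ by the schematic terms $\VertW{f^{-1}R\odot W}$, $\VertW{f^{-1}\nabla R\odot\nabla B}$, $\VertW{f^{-1}R^2\odot B}$ and $\VertW{f^{-1}R\odot\nabla P}$, while the transport equations \eqref{eq:transport:B} and \eqref{eq:transport:P} control $\VertW{f^{-1}r^{-q}L\cdot B}$ and $\VertW{f^{-1}r^{-q}L\cdot P}$ by the quantities $P$, $W$, $B$ contracted against $L$ and $\nabla L$.

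The decisive point is the absorption step. Using $\lvert R\rvert=\mathcal{O}(r^{-3})$, $\lvert\nabla R\rvert=\mathcal{O}(r^{-4})$ and $\nabla_\alpha L^\alpha,\nabla L=\mathcal{O}(r^{-1})$, every coupling term carries extra powers of $r^{-1}$ relative to the left-hand side of the Carleman estimate into which it must be absorbed: the terms $R\odot W$ and $R^2\odot B$ are subordinate to the $\lambda^3\VertW{f^\delta\,\cdot\,}$ terms, and the $L$-derivative couplings to the $\lambda$-weighted transport left-hand sides. Adding the wave estimate to the two transport estimates, the curvature-decay gains render all coupling terms subordinate once $\lambda$ is taken sufficiently large and $\omega$ sufficiently small, so that they can be moved to the left. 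This forces the weighted norms of $B$, $P$ and $W$ to vanish on $\mathcal{D}_{\omega'}$, hence $B=P=W=0$ there, which is the claim.

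I expect the main obstacle to be the closure of the coupling: the wave equation \eqref{eq:wave:W} feeds back the \emph{first derivatives} $\nabla B$ and $\nabla P$, whereas \eqref{eq:carleman:ode} directly controls only $B$ and $P$ through their $L$-derivatives. The resolution is that $\nabla B$ is, by the very definition of $P$ in Section~\ref{sec:ik}, recoverable algebraically from $P$, $B$ and the transported $\omega$ of \eqref{eq:omega}, so that $\lvert\nabla B\rvert\lesssim\lvert P\rvert+\lvert B\rvert$ pointwise, while the remaining second-order term $\nabla P$ is treated by commuting \eqref{eq:transport:P} with $\nabla$ (or by an integration by parts inside the wave estimate) and is harmless precisely because its coefficient $R=\mathcal{O}(r^{-3})$ decays. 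Making this weight bookkeeping consistent — so that the tangential-derivative weight $f^{-1/2}\Psi^{1/2}$ generated on the left of \eqref{eq:carleman:wave} matches the loss incurred on the right — is where the positivity of the mass enters decisively, through the pseudo-convexity function $\Psi=m_{\text{min}}\log r/r$ of Theorem~\ref{thm:carleman}; this is exactly the feature that permits the localization to an arbitrarily small neighborhood of spatial infinity in \eqref{eq:D:omega}.
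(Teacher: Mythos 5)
Your overall strategy is the paper's: work with the Ionescu--Klainerman quantities $(B,P,W)$ evaluated in asymptotically Cartesian components, deduce their infinite-order vanishing at infinity from \eqref{eq:stationary:condition} (with $\omega$ handled by integrating \eqref{eq:omega} inward), apply Theorem~\ref{thm:carleman} to $W$ and Lemma~\ref{lemma:carleman:ode} to the transported quantities, and absorb the couplings using \eqref{eq:stationary:condition:curvature} and the mass-induced pseudo-convexity weight $\Psi$. There is, however, one genuine gap: the claim that $\nabla B$ is recoverable algebraically from $P$ and $B$, so that $\lvert\nabla B\rvert\lesssim\lvert P\rvert+\lvert B\rvert$ pointwise, is false. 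From the definitions, $P_{\alpha\mu\beta}=\Pi_{\alpha\beta\mu}-\nabla_\beta B_{\alpha\mu}$, where $\Pi$ is itself a combination of first derivatives of $\pi=B+B^{\mathrm{t}}$; the relation $\nabla_\beta B_{\alpha\mu}=\Pi_{\alpha\beta\mu}-P_{\alpha\mu\beta}$ is therefore a linear system for $\nabla B$ with nontrivial kernel. Concretely, if $\nabla_\beta B_{\alpha\mu}=S_{\beta\alpha\mu}$ is totally symmetric, then $\nabla_\gamma\pi_{\alpha\beta}=2S_{\gamma\alpha\beta}$, hence $\Pi_{\alpha\beta\mu}=S_{\alpha\beta\mu}+S_{\beta\alpha\mu}-S_{\mu\alpha\beta}=\nabla_\beta B_{\alpha\mu}$ and $P\equiv 0$. (Indeed $P_{\alpha\mu\beta}+P_{\mu\alpha\beta}=0$, so $P$ carries far fewer components than $\nabla B$.) Consequently the term $\nabla R\odot\nabla B$ in \eqref{eq:wave:W} cannot be absorbed through norms of $P$ and $B$ alone, and your absorption scheme does not close as written.

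The repair is exactly the device you already invoke for $\nabla P$, applied also to $B$: commute the transport equation \eqref{eq:transport:B} with the Cartesian derivatives $\nabla_\mu$, using $[\nabla_L,\nabla_\mu]=\mathcal{O}(r^{-1})\partial$ in these coordinates, to obtain $\nabla_L\nabla(B)=\nabla(P)+\mathcal{O}(r^{-1})(P)+\mathcal{O}(r^{-1})\nabla(B)+\mathcal{O}(r^{-2})(B)$, and then apply Lemma~\ref{lemma:carleman:ode} to $\nabla(B)$ (with weight $r^{-3}$) alongside the estimates for $(B)$, $(P)$ and $\nabla(P)$; this is what the paper does in \eqref{eq:transport:nabla:B} and \eqref{eq:carleman:nabla:B:add}, and with it the weight bookkeeping closes as you describe. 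Two smaller points you should make explicit: the Carleman estimates must be applied to $\chi\cdot(W)$, $\chi\cdot(B)$, etc.\ for a cutoff $\chi$ adapted to the level sets of $f$, keeping the $\nabla\chi$-supported error terms on the right and restricting to $\{\chi=1\}$ before sending $\lambda\to\infty$; and the absorption of $\VertW{f^{-1}r^{-2}\nabla W}$ into $\lambda\VertW{f^{-\frac{1}{2}}\Psi^{\frac{1}{2}}\nabla W}$ is precisely where the lower bound $f\Psi\gtrsim r^{-3}$, i.e.\ the positivity of the mass, enters, as you correctly anticipate.
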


In Section~\ref{sec:vanishing} we have proven that all smooth \emph{time-periodic} solutions to the Einstein vacuum equations satisfy the conditions of Proposition~\ref{prop:stationary:conditional}. In particular, in Propositions~\ref{prop:non:radiating} and~\ref{prop:time:periodic:leading} we have shown that \eqref{eq:stationary:condition:curvature} holds, and in Proposition~\ref{prop:vanishing:induction} of Section~\ref{sec:vanishing:induction} we have proven that \eqref{eq:stationary:condition} is verified in the time-periodic setting. The following proof in conjunction with Proposition~\ref{prop:vanishing:induction} thus completes the proof of Theorem~\ref{thm:stationary:periodic}.

\bigskip
\noindent\emph{Proof of Prop.~\ref{prop:stationary:conditional}.}
The strategy is to apply the Carleman estimates of Theorem~\ref{thm:carleman} and Lemma~\ref{lemma:carleman:ode} to the system of equations \eqref{eq:wave:W} and \eqref{eq:transport:BP}, and derivatives thereof.

These equations are covariant and can be expressed in any sytem of coordinates. 
In order for Theorem~\ref{thm:carleman} to be applicable to wave equations derived in Section~\ref{sec:ik}
we have to ensure a sufficiently rapid fall-off of the Christoffel symbols, 
which is achieved by evaluating all tensors relative to asymptotically \emph{Cartesian coordinates}.
It is at this point that the method of \cite{ionescu-klainerman:extension} is essential.

\subsubsection*{Cartesian coordinates}

Given the coordinates $(u,s,y^1,y^2)$ of Section~\ref{sec:gauge} we may pass to Cartesian coordinates $(x^0,x^1,x^2,x^3)$ such that the metric takes the form
\begin{equation}\label{eq:metric:cartesian}
  g=-\ud {x^0}\otimes \ud {x^0}+\sum_{i=1}^3\ud {x^i}\otimes\ud {x^i}+\sum_{\mu,\nu=0}^3\mathcal{O}_2(\frac{1}{r})\ud x^\mu\ud x^\nu
\end{equation}
and $\partial_{x^0}\rvert_{x^i}$ coincides with $\partial_u\rvert_{s,y^A}$ as $r\to\infty$ where
\begin{equation}
  r=\sqrt{(x^1)^2+(x^2)^2+(x^3)^2}\,.
\end{equation}
Then, in these coordinates,
\begin{equation}
  \Gamma_{\mu\nu}^\alpha=\mathcal{O}_1\bigl(\frac{1}{r^2}\bigr)\,.
\end{equation}
It follows that
\begin{multline}
  \nabla^\alpha\nabla_\alpha W_{\mu\nu\gamma\delta}=g^{\alpha\beta}\partial_\alpha\nabla_\beta W_{\mu\nu\gamma\delta}\\+\mathcal{O}\bigl(\frac{1}{r^2}\bigr)\sum_{\lambda,\beta=0}^3\bigl(\nabla_\lambda W_{\mu\nu\gamma\delta}+\nabla_\beta W_{\lambda\nu\gamma\delta}+\nabla_\beta W_{\mu\lambda\gamma\delta}+\nabla_\beta W_{\mu\nu\lambda\delta}+\nabla_\beta W_{\mu\nu\gamma\lambda}\bigr)
\end{multline}
and
\begin{multline}
\partial_\alpha\nabla_\beta W_{\mu\nu\gamma\delta}=\partial_\alpha\partial_\beta W_{\mu\nu\gamma\delta}+\mathcal{O}\bigl(\frac{1}{r^3}\bigr)\sum_{\lambda=0}^3\bigl(W_{\lambda\nu\gamma\delta}+W_{\mu\lambda\gamma\delta}+W_{\mu\nu\lambda\delta}+W_{\mu\nu\gamma\lambda}\bigr)\\
+\mathcal{O}\bigl(\frac{1}{r^2}\bigr)\sum_{\lambda=0}^3\bigl(\partial_\alpha W_{\lambda\nu\gamma\delta}+\partial_\alpha W_{\mu\lambda\gamma\delta}+\partial_\alpha W_{\mu\nu\lambda\delta}+\partial_\alpha W_{\mu\nu\gamma\lambda}\bigr)
\end{multline}
which shows that the wave equations satisfied by the components of $W$ in these coordinates, for brevity now simply denoted by $(W)$, are related to the components of $\Box W$, denoted by $(\Box W)$, via
\begin{equation}
  \Box(W)=\mathcal{O}\bigl(\frac{1}{r^2}\bigr)\nabla (W)+\mathcal{O}\bigl(\frac{1}{r^3}\bigr)(W)+(\Box W)\,.
\end{equation}

\begin{nota}
  Here and in the following we denote by $(T)$ any of the components $T_{\alpha_1\ldots\alpha_k}$ of a tensorfield $T$ in the asymptotically Cartesian coordinate $(x^0,x^1,x^2,x^3)$ for which the metric takes the form \eqref{eq:metric:cartesian}.
\end{nota}
In conclusion, by virtue of \eqref{eq:wave:W},
\begin{multline}\label{eq:wave:W:components}
    \Box (W)=\mathcal{O}(R)(W)+\mathcal{O}(\frac{1}{r^3})(W)+\mathcal{O}\bigl(\frac{1}{r^2}\bigr)\nabla (W)\\+\mathcal{O}(\nabla R)\nabla( B)+\mathcal{O}(R^2)(B)+\mathcal{O}(R)\nabla (P)\,.
\end{multline}

Moreover, it will be convenient to have uniform bounds on the components of $\nabla L$, and $\nabla \nabla L$ in these coordinates.
Since
\begin{equation}
  L=\frac{\partial }{\partial x^0}+\sum_{i=1}^3\frac{x^i}{r}\frac{\partial}{\partial x^i}+\sum_{\mu=0}^3\mathcal{O}(r^{-1})\frac{\partial}{\partial x^\mu}
\end{equation}
we have
\begin{subequations}
\begin{gather}
  \nabla_\mu L^\rho=\partial_\mu L^\rho+\Gamma_{\mu\sigma}^\rho L^\sigma=\mathcal{O}(r^{-1})\\
  \nabla_\mu\nabla_\nu L^\rho=\mathcal{O}(r^{-2})\,.
\end{gather}
\end{subequations}
Also note that
\begin{equation}\label{eq:commute:L:cartesian}
   [\nabla_L,\nabla_\mu]=\sum_{\nu=0}^3\mathcal{O}(r^{-1})\frac{\partial}{\partial x^\nu}\,.
\end{equation}

In these coordinates, the covariant equations \eqref{eq:transport:BP} yield simple transport equations for the components of $B$, and $P$, (evaluated against the Cartesian frame above) simply denoted by $(B)$ and $(P)$, of the form:

\begin{subequations}\label{eq:transport:BP:components}
\begin{gather}
  \nabla_L (B)=(P)+\mathcal{O}(\frac{1}{r})(B) \label{eq:transport:B:components}\\
  \nabla_L (P)=(W)+\mathcal{O}(R)(B)+\mathcal{O}(\frac{1}{r})(P)\,.\label{eq:transport:P:components}
\end{gather}
\end{subequations}
Furthermore, after commuting \eqref{eq:transport:BP} with $\nabla_\mu$, taking into account \eqref{eq:commute:L:cartesian}, we obtain the following propagation equations for the cartesian derivatives of the components of $P$, and $B$, simply denoted by $\nabla (P)$, and $\nabla (B)$:
\begin{subequations}
\begin{equation}
  \nabla_L \nabla (B)=\nabla( P)+\mathcal{O}(\frac{1}{r})(P)+\mathcal{O}(\frac{1}{r})\nabla (B)+\mathcal{O}(\frac{1}{r^2})(B)\label{eq:transport:nabla:B}\\
\end{equation}
\begin{multline}
      \nabla_L \nabla (P)=\nabla (W)+\mathcal{O}(\frac{1}{r})(W)\\+\mathcal{O}(\nabla R)(B)+\mathcal{O}(R)\nabla (B)+\mathcal{O}(\frac{1}{r}(R))(B) +\mathcal{O}(\frac{1}{r})\nabla( P)+\mathcal{O}(\frac{1}{r^2})(P)\label{eq:transport:nabla:P}
\end{multline}
\end{subequations}

\subsubsection*{Infinite order vanishing condition}
We note that the assumptions \eqref{eq:stationary:condition} imply that $(W)$, $(B)$ and $(P)$ vanish to all orders at infinity.
In fact, by assumption $\pi=\mathcal{L}_Tg$ vanishes to all orders at infinity, and since $\omega$ satisfies the transport equation \eqref{eq:omega}, 
\begin{equation}
  \nabla_L(\omega)=\mathcal{O}(\frac{1}{r})(\pi)\,,
\end{equation}
(and $(\omega)=0$ on future null infinity by construction), also $(\omega)$ vanishes to all orders, which immediately implies that both
\begin{subequations}
\begin{gather}
  (B)=(\pi)+(\omega)\\
  (W)=(\mathcal{L}_T R)+\mathcal{O}(R)(B)
\end{gather}
\end{subequations}
vanish to all orders at infinity.
Since by \eqref{eq:transport:B:components} $(P)$ is directly related to $(B)$ by
\begin{equation}
    (P)=\nabla_L (B)+\mathcal{O}(\frac{1}{r})(B)
\end{equation}
we have that also $(P)$ vanishes to all orders. (Alternatively that can be infered from integrating \eqref{eq:transport:P:components}.)
Therefore by assumption \eqref{eq:stationary:condition} Theorem~\ref{thm:carleman} can be applied to the functions $(W)$, and Lemma~\ref{lemma:carleman:ode} to the functions $(P)$ and $(B)$, as well as its derivatives $\nabla(P)$, and $\nabla(B)$.

\subsubsection*{Cutoff functions}

The Carleman estimates are in fact not applied to the component functions $(W)$, and $(B)$, etc., but instead to $\chi \cdot (W)$, $\chi \cdot (B)$, etc., where $\chi$ is a cut-off function whose level sets coincide with those of $F\circ f$,
\begin{equation}
  \chi=1 \text{ : on } \mathcal{D}_{\omega_0}\,, \qquad \chi =0 \text{ : on } \mathcal{D}_{\omega_1}^c\,,\quad \omega_0<\omega_1<\omega\,.
\end{equation}
This ensures that the resulting functions satisfy the support conditions of Theorem~\ref{thm:carleman} in the interior; this part of the argument is entirely standard for unique continuation problems.

Given that the components of $W$ satisfy an equation of the form
\begin{equation}
  \Box(W)=\mathcal{M}((W),\nabla (B), (B),\nabla (P))\,,
\end{equation}
where $\mathcal{M}$ for brevity refers to a multiple of the quantities that follow,
then the function $\chi\cdot (W)$ clearly satisfy
\begin{multline}
      \Box(\chi (W))=(\Box \chi)(W)+\nabla\chi\cdot \nabla (W)+\chi \mathcal{M}((W),\nabla (B), (B),\nabla (P))\\
      =(\Box \chi)(W)+\nabla\chi\cdot \nabla (W)+\mathcal{M}(\nabla\chi)+\mathcal{M}(\chi (W), \nabla (\chi (B)), \chi (B), \nabla (\chi (P)))\,.
\end{multline}
Since the terms $\Box\chi$, and $\nabla\chi$, are only supported in the cut-off region of $\chi$, 
it suffices to focus in the following argument on the last terms, the multiples $\mathcal{M}$ coming from the equation for $W$.

For simplicity in notation, \emph{we shall thus suppress both the cut-off and the component notation in the argument that follows}.

\subsubsection*{Weighted Carleman estimates for systems}

We now proceed to prove the uniqueness of $W$, and $B$ using the Carleman estimates of Section~\ref{sec:carleman}. 

Let $\lambda>0$.
By \eqref{eq:carleman:wave} applied to $W$ (recall that by convention we simply write $W$, $B$, etc,  instead of $\chi\cdot(W)$, $\chi\cdot (B)$, etc.) we have
\begin{equation}\label{eq:carleman:init}
  \lambda^3\VertW{ f^\delta W } +\lambda \VertW{  f^{-\frac{1}{2}}\Psi^\frac{1}{2}\nabla W } \lesssim \VertW{ f^{-1}\Box W }
\end{equation}
and by \eqref{eq:wave:W:components} we can estimate
\begin{multline}\label{eq:carleman:W:init}
  \VertW{ f^{-1}\Box W }\lesssim \VertW{ f^{-1}RW }+\VertW{ f^{-1}r^{-3}W }+\VertW{ f^{-1}r^{-2}\nabla W }\\+ \VertW{f^{-1}(\nabla R)\nabla B}+ \VertW { f^{-1}R^2  B }+ \VertW{ f^{-1}R \nabla P }+\VertW{\nabla\chi\mathcal{M}}\,;
\end{multline}
(here and in the following $\nabla\chi\mathcal{M}$ refers to a term composed of multiples of $W$, $B$, etc, which is however only supported in the cut-off region of $\chi$, and will remain on the right hand side of the inequalities.)
Since by assumption
\begin{subequations}
  \begin{gather}
  R=\mathcal{O}(r^{-3})\,,\qquad \nabla R=\mathcal{O}(r^{-4})\,,    \\
  f^{-1}\lesssim r^2\,,\qquad f\gtrsim \frac{1}{r^2}\,,\qquad f\Psi \gtrsim\frac{1}{r^3}\,,
  \end{gather}
\end{subequations}
we can absorb the first three terms on the right hand side of \eqref{eq:carleman:W:init}, on left hand side of \eqref{eq:carleman:init} for $\lambda\gg 1$ sufficiently large, as long as $0<2\delta<1$.

In order to control the term involving $B$ on the right hand side of \eqref{eq:carleman:W:init}, we add to \eqref{eq:carleman:init} the inequality
\begin{multline}\label{eq:carleman:B:add}
\lambda \VertW{\frac{1}{r}f^{-1}\frac{1}{r^4}B }\lesssim\\\lesssim \VertW{ f^{-1}r^{-4}\nabla_L B } \lesssim \VertW{ f^{-1}r^{-4}  P } + \VertW{ f^{-1} r^{-5} B} +\VertW{ \nabla\chi\mathcal{M}}\,.
\end{multline}
which is obtained using the Carleman estimate \eqref{eq:carleman:ode} and \eqref{eq:transport:B:components};
the new term on the left hand side in particular controls
\begin{equation}\label{eq:carleman:B:init}
  \lambda \VertW{\frac{1}{r}f^{-1}\frac{1}{r^4}B } \gtrsim \VertW{ f^{-1}R^2 B }\,.
\end{equation}
While the second term on the right hand side of \eqref{eq:carleman:B:add} is already controlled by \eqref{eq:carleman:B:init}, so as to absorb the first term, we also add to \eqref{eq:carleman:init} the inequality
\begin{multline}
  \lambda \VertW{ \frac{1}{r}f^{-1}r^{-3} P}\lesssim \VertW{f^{-1}r^{-3}\nabla_L P}\lesssim\\
  \lesssim\VertW{f^{-1}r^{-3}W}+\VertW{f^{-1}r^{-3}R B}+\VertW{f^{-1}r^{-4}P}+\VertW{\nabla\chi\mathcal{M}}
\end{multline}
which is a consequence of the Carleman estimate \eqref{eq:carleman:ode} and \eqref{eq:transport:P:components}.
Note that all terms of the right hand side are already controlled.


In other words, up to this point we have shown  that for $\lambda\gg 1$ large enough,
\begin{multline}\label{eq:carleman:int}
  \lambda \VertW{f^{-1}r^{-5}B } +  \lambda \VertW{ f^{-1}r^{-4} P}+\lambda^3\VertW{ f^\delta W } +\lambda \VertW{  f^{-\frac{1}{2}}\Psi^\frac{1}{2}\nabla W } \lesssim\\
  \lesssim  \VertW{f^{-1}(\nabla R)\nabla B}+ \VertW{ f^{-1}R \nabla P } +\VertW{\nabla\chi\mathcal{M}}\,.
\end{multline}

So as to absorb the term involving $\nabla B$, we add to \eqref{eq:carleman:int} the inequality
\begin{multline}\label{eq:carleman:nabla:B:add}
  \lambda \VertW{ \frac{1}{r}f^{-1}r^{-3}\nabla B } \lesssim \VertW{ f^{-1}r^{-3}\nabla_L \nabla B}\lesssim\\
  \lesssim \VertW{ f^{-1}r^{-3}\nabla P}+\VertW{ f^{-1}r^{-4}P}\\+\VertW{ f^{-1}r^{-4}\nabla B}+\VertW{ f^{-1}r^{-5}B}+\VertW{\nabla\chi\mathcal{M}}
\end{multline}
which in turn is obtained using the Carleman estimate \eqref{eq:carleman:ode} and \eqref{eq:transport:nabla:B}.
As desired, the new term on the left hand side in particular controls
\begin{equation}\label{eq:carleman:nabla:B:init}
  \lambda \VertW{ \frac{1}{r}f^{-1}r^{-3}\nabla B }\gtrsim\VertW{f^{-1}(\nabla R)\nabla B}\,.
\end{equation}
Moreover all terms on the right hand side of \eqref{eq:carleman:nabla:B:add} except for the first term involving $\nabla P$ can be absorbed on the left hand sides of \eqref{eq:carleman:int} and \eqref{eq:carleman:nabla:B:add}; (the cut-off term of course remains on the right hand side).
Now we add to \eqref{eq:carleman:int} the inequality
  \begin{multline}\label{eq:carleman:nabla:P:int}
    \lambda \VertW{ \frac{1}{r}f^{-1}\frac{1}{r^2}\nabla P} \lesssim  \VertW{ f^{-1}r^{-2}\nabla_L\nabla P }\lesssim\\
    \lesssim \VertW{ f^{-1}r^{-2}\nabla W}+\VertW{ f^{-1}r^{-3}W}+\VertW{ f^{-1}r^{-2}(\nabla R)B}+\VertW{ f^{-1}r^{-2}R \nabla B}\\
    +\VertW{ f^{-1}r^{-3}R B} +\VertW{ f^{-1}r^{-3}\nabla P}+\VertW{ f^{-1}r^{-4}P}+\VertW{\nabla\chi\mathcal{M}}\,.
  \end{multline}
which is the result of the Carleman estimate \eqref{eq:carleman:ode} and \eqref{eq:transport:nabla:P}.


Now in particular the first term on the right hand side involving $\nabla W$ can be absorbed on the left hand side of \eqref{eq:carleman:init},
  \begin{equation}
    \lambda \VertW{  f^{-\frac{1}{2}}\Psi^\frac{1}{2}\nabla W } \gtrsim \VertW{ f^{-1}r^{-2}\nabla W}\,,
  \end{equation}
because
\begin{equation}
  f^{-\frac{1}{2}}\Psi^\frac{1}{2}\gtrsim f^{-1} (f\Psi)^\frac{1}{2} \gtrsim f^{-1} r^{-\frac{3}{2}}
\end{equation}
using that the strength of the pseudoconvexity is bounded below by $\Psi\gtrsim r^{-1}$.
 Also the second term on the right hand side of \eqref{eq:carleman:nabla:P:int} is already controlled because
\begin{equation}
   \lambda^3 \VertW{ f^\delta W } \gtrsim \VertW{ f^{-1}r^{-3}W}\,,
\end{equation}
as long as  $0<\delta<1/2$.
Moreover, all remaining terms on the right hand side of \eqref{eq:carleman:nabla:P:int} can be absorbed by the terms introduced on the left hand side of \eqref{eq:carleman:int}, \eqref{eq:carleman:nabla:B:add}, \eqref{eq:carleman:nabla:P:int}.

Finally, also the remaining term on the right hand side of \eqref{eq:carleman:int} involving $\nabla P$ can be absorbed on the left hand side of \eqref{eq:carleman:nabla:P:int}.

Following standard procedure, we now restrict the integrals on the left hand side to the smaller domain $\mathcal{D}_{\omega_0}$, where $\chi=1$, while integrals on the right hand side are over $\mathcal{D}_{\omega_1}\setminus\mathcal{D}_{\omega_0}$ where $\nabla\chi$ is supported. Since $F(f)$ is monotone increasing, the Carleman weight $e^{-\lambda F}$ can be dropped from the inequality and we obtain
\begin{multline}
  \lambda^3\lVert{f^\frac{1}{2} f^\delta W }\rVert_2 + \lambda\lVert{  \Psi^\frac{1}{2}\nabla W }\rVert_2
  +\lambda \lVert f^{-\frac{1}{2}}r^{-5}B \rVert_2 +\lambda\lVert f^{-\frac{1}{2}}r^{-4} P \rVert_2 \\+ \lambda\lVert f^{-\frac{1}{2}}r^{-4}\nabla B \rVert_2+\lambda \lVert f^{-\frac{1}{2}}r^{-3}\nabla P\rVert_2\lesssim \lVert f^\frac{1}{2} \nabla\chi\mathcal{M}\rVert_2
\end{multline}
which by taking $\lambda\to\infty$ implies in particular that $B\equiv 0$  and 
\begin{equation}
  \mathcal{L}_T g\equiv 0\,, \qquad\mathcal{L}_T R \equiv 0 \qquad\text{: on }\mathcal{D}_{\omega_0}\,.
\end{equation}

\qed


\providecommand{\bysame}{\leavevmode\hbox to3em{\hrulefill}\thinspace}
\providecommand{\MR}{\relax\ifhmode\unskip\space\fi MR }
\providecommand{\MRhref}[2]{%
  \href{http://www.ams.org/mathscinet-getitem?mr=#1}{#2}
}
\providecommand{\href}[2]{#2}

\end{document}